\newfont{\bbold}{msbm9 scaled \magstep1}
\newcommand{\bra}[1]{\langle #1|}
\newcommand{\ket}[1]{|#1\rangle}
\newcommand{\im}{\textrm{Im}\ }
\newcommand{\beq}{\begin{eqnarray}}
\newcommand{\eeq}{\end{eqnarray}}
\newcommand{\zs}{\mbox{\bbold Z}}
\newcommand{\R}{\mathcal R}
\newcounter{thm}
\newcounter{lem}
\newcounter{ex}
\newcounter{def}
\newtheorem{theorem}[thm]{Theorem}  
\newtheorem{definition}[def]{Definition}
\newtheorem{lemma}[lem]{Lemma}
\newtheorem{example}[ex]{Example}
\begin{document}

\title[]
{Supersymmetry, lattice fermions, independence complexes and cohomology theory}


\author[L. Huijse \and K. Schoutens]{Liza Huijse \and Kareljan Schoutens}

\address{Institute for Theoretical Physics, University of Amsterdam,
Valckenierstraat 65, 1018 XE Amsterdam, the Netherlands}  
\addressemail{L.Huijse@uva.nl}

\begin{abstract}
We analyze the quantum ground state structure of a
specific model of itinerant, strongly interacting lattice
fermions. The interactions are tuned to make the model
supersymmetric. Due to this, quantum ground states are in
one-to-one correspondence with cohomology classes of the so-called
independence complex of the lattice. Our main result is a complete
description of the cohomology, and thereby of the quantum ground
states, for a two-dimensional square lattice with periodic
boundary conditions. Our work builds on results by J.~Jonsson, who
determined the Euler characteristic (Witten index) via a
correspondence with rhombus tilings of the plane. We prove a
theorem, first conjectured by P.~Fendley, which relates dimensions
of the cohomology at grade $n$ to the number of rhombus tilings
with $n$ rhombi.
\end{abstract}

\maketitle

\cutpage 

\noindent

\section{Introduction}

The motivation for the work presented in this paper is
multi-faceted. On the Physics side, the motivation stems from the
need to understand the electronic properties of materials where
electrons are free to move but subject to strong (repulsive)
interactions. Even at the level of relatively simple model
Hamiltonians, the behavior of such systems is notoriously
difficult to analyze. The model studied in this paper has been
chosen such that it enjoys a property called supersymmetry
\cite{FSdB}. The benefit of this has turned out to be twofold.
First, the supersymmetry leads to a considerable degree of
analytic control, allowing the rigorous derivation of quite a few
results, in particular on quantum ground states. Second, the
sypersymmetric model turns out to have remarkable properties, both
in dimension $D=1$, where the model is quantum critical and
described by a superconformal field theory \cite{FSdB,FNS,BDA}, and in
dimension $D=2$, where the model displays extensive ground state
entropy \cite{FS,vE,HS} and where indications of quantum critical
behavior were found \cite{HHFS}.

On the Mathematics side the study of supersymmetric lattice models has led
to interesting results on the cohomology of independence complexes of
lattices and graphs, 2D grids in particular
\cite{Jonsson1,Jonsson2,Jonsson3,Bax,BM,eng,Csor}. The two sides are connected
by the observation that quantum ground states of the supersymmetric lattice
model are in 1-to-1 correspondence with the elements of the cohomology of an associated
independence complex.

The supersymmetric model on a 2D square lattice has turned out to be particularly
interesting. Numerical results for the Witten index (Euler characteristic) in torus geometry
led to remarkable conjectures for the dependence of this quantity on the two
periods of the torus \cite{FSvE}. These conjectures were then proven by J.~Jonsson
\cite{Jonsson1}, using a connection with specific rhombus tilings of the plane.
In the present paper we complete the analysis by providing a direct characterization of the elements of the
cohomology and thereby of the quantum ground states for the square
lattice wrapped around a torus. We prove a theorem, first conjectured by P.~Fendley,
which relates dimensions of the cohomology at grade $n$ to the number of rhombus
tilings with $n$ rhombi. Since the number of rhombus tilings grows exponentially
with the linear dimensions of the system, our result implies that the quantum model
has a sub-extensive ground state entropy.

The presentation in this paper is organized as follows. In section \ref{physintro}, we introduce
the model and briefly summarize the main results presented in the literature so far. The focus in
this section will be on the physics of the model. We then turn to the mathematics side in section
\ref{mathintro}. We relate the Hilbert space of the supersymmetric lattice model to an independence complex
and show that the quantum ground states of the model are in 1-to-1 correspondence with the elements of the cohomology of
this complex. At the end of this section, we resume the 'tic-tac-toe' lemma of \cite{botttu}, which plays a
central role in the rest of the paper. In section \ref{mainres}, we state the main result of this paper; a
theorem that relates the dimensions of the cohomology at grade $n$ to the number of rhombus
tilings with $n$ rhombi. We briefly discuss how this theorem relates to Jonsson's work
\cite{Jonsson1,Jonsson2} and what the implications for the physics of the model are. The remainder of the
paper (section \ref{proof}) is dedicated to the proof of this theorem. Unfortunately, the proof is quite
involved and consists of several steps. A detailed outline of these steps can be found at the start of
section \ref{proof}.

\section{Physics connection: supersymmetry and lattice fermions}\label{physintro}
In this section we will introduce the model and briefly state the main results obtained for this model with
a focus on the physics interpretation.
\subsection{Supersymmetry}
An $\mathcal{N}=2$ supersymmetric quantum mechanical theory is constructed from a basic algebra, defined by two nilpotent
supercharges $Q$ and $Q^{\dag}$ (complex conjugation is implied) \cite{Witten},
\begin{eqnarray}\nonumber
\{Q,Q\}=\{Q^{\dag},Q^{\dag}\}=0
\end{eqnarray}
and the Hamiltonian given by
\begin{eqnarray}\nonumber
H=\{Q^{\dag},Q\} .
\end{eqnarray}
It satisfies
\begin{eqnarray}\nonumber
[H,Q]=[H,Q^{\dag}]=0.
\end{eqnarray}
The eigenvalues and eigenvectors of the Hamiltonian give the energy spectrum and the corresponding quantum
states. The definition of the Hamiltonian has
some immediate consequences for the energy spectrum. First of all, it is positive definite:
\begin{eqnarray*}
\bra{\psi}H\ket{\psi}
  &=& \bra{\psi}(Q^{\dag}Q+QQ^{\dag})\ket{\psi}
\nonumber \\[2mm]
  &=&|Q\ket{\psi}|^2+|Q^{\dag}\ket{\psi}|^2 \geq 0 \
\end{eqnarray*}
for all choices of the quantum state $\ket{\psi}$.
Second of all, the fact that both $Q$ and $Q^{\dag}$ commute with the Hamiltonian, gives rise to a twofold
degeneracy in the energy spectrum. In other words, all eigenstates of the Hamiltonian with an energy $E_s>0$
form doublet representations of the supersymmetry algebra. A doublet consists of two
states, $\ket{s}$ and $Q \ket{s}$, such that $Q^{\dag}\ket{s}=0$. The states $\ket{s}$ and $Q \ket{s}$ are said to
be superpartners. Finally, all states with zero energy must be singlets:
$Q \ket{g}=Q^{\dag}\ket{g}=0$ and conversely, all singlets must be zero
energy states \cite{Witten}. In addition to supersymmetry our models will also have a particle-number
symmetry generated by the operator $F$ with
\begin{eqnarray}\label{Fnum}
[F,Q^{\dag}]=-Q^{\dag} \quad \textrm{and} \quad [F,Q]=Q.
\end{eqnarray}
Consequently, $F$ commutes with the Hamiltonian. Furthermore, this tells us that superpartners differ in
their fermion number by one (let $F\ket{s}=f_s\ket{s}$, then $F(Q\ket{s})=Q(F+1)\ket{s}=(f_s+1)(Q\ket{s})$).

An important issue is whether or not
supersymmetric ground states at zero energy occur, that is, whether there are singlet representations of the
algebra. For this one considers the Witten index
\begin{equation}
W = \hbox{tr}\left[(-1)^F e^{-\beta H}\right] \ , \label{Windex}
\end{equation}
where the trace is over the entire Hilbert space.
Remember that all excited states come in doublets with the same energy and
differing in their fermion-number by one. This means that in the trace all
contributions of excited states will cancel pairwise, and that the only
states contributing are the zero energy ground states. We can thus evaluate
$W$ in the limit of $\beta \rightarrow 0$, where all states contribute
$(-1)^F$. It also follows that $|W|$ is a lower bound to the number of
zero energy ground states.

\subsection{Lattice fermions}
We now make the model more concrete by defining the supercharges in terms of lattice particles. The particles
we will consider are spin-less electrons, also called spin-less fermions. Their key property is that the
wavefunction is antisymmetric under the exchange of two fermions. It follows that the operator $c^{\dag}_i$
that creates a fermion on site $i$ in the lattice and the operator $c_j$ that annihilates a fermion on site
$j$ in the lattice, satisfy the following anti-commutation relations:
\begin{eqnarray}
\{c_i^{\dag},c_j\}=\delta_{ij} \nonumber\\
\{c_i,c_j\}=\{c_i^{\dag},c_j^{\dag}\}=0. \nonumber
\end{eqnarray}
The particle-number operator for fermions is defined as $F=\sum_i c^{\dag}_i
c_i$, where the sum is over all lattice sites. This operator counts the total number of particles in a state.
A simple choice for the supercharges would be $Q=\sum_i c_i^{\dag}$ and $Q^{\dag}=\sum_i c_i$. It is readily
verified that both obey the nilpotency condition and that the commutation relations with $F$ (\ref{Fnum}) are
satisfied. However, this
choice leads to a trivial Hamiltonian: $H=L$, where $L$ is the total number of sites of the lattice. To
obtain a non-trivial Hamiltonian, we dress the fermion with a projection operator:
$P_{<i>}=\prod_{j \textrm{ next to } i} (1-c_j^{\dag}c_j)$, which requires
all sites adjacent to site $i$ to be empty. We can now formulate the supercharges in terms of these hard-core
fermions: $Q=\sum c_i^{\dag} P_{<i>}$ and $Q^{\dag}=\sum c_i P_{<i>}$. Again the nilpotency condition and the
commutation relations (\ref{Fnum}) are satisfied, but now the Hamiltonian of these hard-core fermions reads
\begin{eqnarray}\label{Hsusygen}\nonumber
H=\{Q^{\dag},Q\}=
\sum_i \sum_{j\textrm{ next to }i} P_{<i>} c_i^{\dag} c_j P_{<j>}
       + \sum_i P_{<i>}.
\end{eqnarray}
The first term is a nearest neighbor hopping term, that is, the fermions can hop from site $j$ to site $i$ as
long as $i$ and $j$ are connected by an edge and provided that the neighboring sites are empty. The
second term contains a next-nearest neighbor repulsion, a chemical potential and a constant. The details of
the latter terms will depend on the lattice we choose.

\subsection{Results and physics interpretation}
This lattice Hamiltonian constitutes a particular instance of an
itinerant-fermion system, where all the interactions are
fine-tuned by the supersymmetry. Over the last few decades,
numerous studies of itinerant-fermion systems in two spatial
dimensions have been presented, however, exact solutions are few
and far between. The model presented here does allow for exact
results and turns out to exhibit quite remarkable features. First
of all, the supersymmetric model on the chain can be solved
exactly through a Bethe ansatz \cite{FSdB}. In the continuum limit
one can derive the thermodynamic Bethe ansatz equations. The model
has the same thermodynamic equations as the XXZ chain at
$\Delta=-1/2$, so the two models coincide (the mapping can be
found in \cite{FNS}). The continuum limit is described by the
simplest field theory with $\mathcal{N}=(2,2)$ superconformal
symmetry with central charge $c=1$, which implies that the model
is quantum critical. For a periodic chain with length $L=3n$ the
model has a twofold degenerate zero-energy ground state with $f=n$
fermions.

This ground state degeneracy turns out to be a generic feature of the model. In fact, in two spatial
dimensions the ground state entropy $S_{\textrm{GS}}$ typically grows exponentially with the system size.
This characteristic of having an extensive ground state entropy goes under the name of superfrustration
\cite{FS}. Numerical studies of the Witten
index have shown that even this lower bound is typically extensive \cite{vE}. Exact results for the number of
ground states were obtained for various lattices \cite{FS}. For example, for the martini lattice, which is formed by
replacing every other site on a hexagonal lattice with a triangle, the number of zero-energy ground states
$e^{S_{\textrm{GS}}}$ was found to equal the number of dimer coverings
of the hexagonal lattice. For large systems ($L\rightarrow \infty$) this gives
\begin{equation}
\frac{S_{\textrm{GS}}}{L}=\frac{1}{\pi} \int_0^{\pi/3} d\theta
     \ln[2 \cos \theta] = 0.16153 \dots \nonumber
\end{equation}
A heuristic way of understanding the superfrustration is from the ``3-rule'': to
minimize the energy, fermions prefer to be mostly 3 sites apart (with details
depending on the lattice). For generic two dimensional lattices the 3-rule can be satisfied in
an exponential number of ways.

For certain two dimensional lattices it was proven that zero-energy ground states exist at various fillings.
The filling is defined as the number of particles per lattice site. For the square, triangular and hexagonal
lattice there exist zero-energy ground states for all rational fillings $\nu$ within the range $[1/5,1/4]$,
$[1/7,1/5]$ and $[1/4,5/18]$, respectively \cite{Jonsson2}.

The square lattice turns out to be a special case. Here the Witten index is subextensive and
for periodic boundary conditions in two directions (i.e. the square lattice wrapped around the torus) it
grows exponentially with the linear dimensions of the system \cite{FSvE,Jonsson1}. In this paper we prove
that the total number of ground states also grows exponentially with the linear dimensions of the system.
In fact, this proof establishes a direct relation between ground states and tilings of the plane with two
types of rhombi (see section \ref{mainres} for details), which were first introduced by Jonsson.
In \cite{HHFS} we presented numerical studies of various ladder realizations of the square lattice with doubly
periodic boundary conditions. These studies, together with the correspondence between ground states and
tilings, strongly indicate the existence of critical edge modes in these systems. It is compelling to
infer that, if a subextensive systems accommodates edge criticality, a truly extensive system, like the
triangular lattice, will allow for bulk criticality. On another speculative note, the ground state-tiling
correspondence for the square lattice on the torus suggests the possible existence of topological order in
some of the many ground states.

\section{Math connection: independence complex and cohomology theory}\label{mathintro}
In this section we will establish the relation between particle
configurations of hard-core fermions and independent sets, on the
one hand, and between zero energy ground states and cohomology
elements on the other. At the end of this section we state the
'tic-tac-toe' lemma for double complexes, which plays a central
role in the proof of the main result of this paper.
\subsection{Independence complex}
An independent set on a graph is a subset of the vertex set of the graph with the property that no
two vertices are adjacent. Since hard-core fermions cannot occupy adjacent sites, it is clear that each
allowed configuration of hard-core fermions forms an independent set. In the following we will use the term
lattice (i.e. a grid) instead of the more general term graph, since in the physics context it is most natural to study
fermions on a lattice. However, the correspondences we establish in this section hold for graphs in general.
The family of independent sets of a lattice forms the independence complex $\Sigma$ of the lattice. We can
define the partition sum in the
asymptotic (thermodynamic) limit for the independence complex $\Sigma$ as
\beq\label{Zstatmech}
Z(\Sigma,z) \equiv \sum_{\sigma \in \Sigma} z^{|\sigma|},
\eeq
where $z$ is called the activity. Until recently there were essentially no exact results for independence
complexes on two dimensional lattices, with one important exception. Baxter \cite{Baxter} gave an analytic
expression for the
partition sum of the independence complex on the triangular lattice with positive activity in the thermodynamic
limit. This is also referred to as the exact solution for hard hexagons (hard-core fermions on the triangular
lattice can, in this context, be viewed as hexagons that share, at most, a side or a corner).

Now observe that the coefficient of $z^k$ in (\ref{Zstatmech}) is the number of sets in $\Sigma$ with size
$k$ or, in other words,
the number of configurations with $k$ hard-core fermions. Consequently, $Z(\Sigma,1)$ gives
the dimension of the full Hilbert space $\mathcal{H}$; the space spanned by all possible hard-core fermion
configurations. What is even more interesting, however, is that $Z(\Sigma,-1)$ coincides with the Witten
index (\ref{Windex})
\beq\nonumber
Z(\Sigma,-1) = \sum_{\sigma \in \Sigma} (-1)^{|\sigma|}= \hbox{tr} (-1)^F.
\eeq

Recently, Jonsson expressed precisely this quantity ($Z(\Sigma,-1)$) for hard squares on a torus, i.e. with
doubly periodic boundary conditions, in terms of rhombus tilings on the torus \cite{Jonsson1} (see section
\ref{mainres} for details). This quantity
coincides with the Witten index for hard-core fermions on the square lattice. For the square lattice the
condition that two particles cannot occupy two adjacent sites readily translates to the hard square condition
if we define the squares to be tilted by $45^{\circ}$ and to have a particle at their center. It follows
that the squares cannot overlap, however they can have a corner or a side in common.

\subsection{Cohomology and homology theory}
It should be clear from the previous that the Hilbert space is a graded vector space, where the
grading is defined by the particle-number operator $F$. That is, the Hilbert space can be written as
$\mathcal{H}=\oplus C_n$, where $C_n$ is a subspace spanned by all the possible configurations with $n$
particles. From the definitions of $F$ and $Q$ and their commutation relations (\ref{Fnum}) it is clear that
$Q$ is a map from $C_n$ to $C_{n+1}$. Since $Q$ squares to zero, we can define its cohomology. On the other
hand, $Q^{\dag}$ is a map from $C_n$ to $C_{n-1}$ and also nilpotent, so we can define the homology of
$Q^{\dag}$.
\beq\nonumber
C_0 \left. \begin{array}{c} Q \\
\longrightarrow \\
\longleftarrow \\
Q^{\dag}
\end{array} \right. C_1 \left. \begin{array}{c} Q \\
\longrightarrow \\
\longleftarrow \\
Q^{\dag}
\end{array} \right. C_2  \left. \begin{array}{c} Q \\
\longrightarrow \\
\longleftarrow \\
Q^{\dag}
\end{array} \right. C_3 \dots
\eeq
It turns out that the zero energy ground states of the model are in one-to-one correspondence with the
non-trivial classes of the cohomology of $Q$ and the homology of $Q^{\dag}$. Remember that all states with
zero energy must be singlets: $Q \ket{g}=Q^{\dag}\ket{g}=0$ and conversely, all singlets must be zero
energy states. Clearly, all singlets, and thus all (zero energy) ground states, are in
the kernel of $Q$:  $Q\ket{g}=0$ and not in the image of $Q$, because if we could
write $\ket{g}=Q\ket{f}$, then $(\ket{f}, \ket{g})$, would be a doublet. Equivalently, we can say that a
ground state with $n$ fermions is a cycle but not a boundary in $C_n$. This is precisely the definition of
an element of the $n$-th cohomology of $Q$, $H_Q^{(n)}=\ker Q/\im Q$ within $C_n$. Two states $\ket{s_1}$ and $\ket{s_2}$ are
said to be in the same cohomology-class if $\ket{s_1}=\ket{s_2}+ Q \ket{s_3}$ for some state $\ket{s_3}$.
Since a ground state is annihilated by both $Q$ and $Q^{\dag}$, different (i.e. linearly independent) ground
states must be in different cohomology-classes of $Q$\footnote{Let $\ket{s_1}$ and $\ket{s_2}$ be two
linearly independent ground states. It follows that
$Q\ket{s_1}=Q\ket{s_2}=Q^{\dag}\ket{s_1}=Q^{\dag}\ket{s_2}=0$. If we now write $\ket{s_1}=\ket{s_2}+ Q
\ket{s_3}$, we find that $Q^{\dag}\ket{s_1}=Q^{\dag}\ket{s_2}+ Q^{\dag}Q
\ket{s_3}$ and thus $Q^{\dag}Q\ket{s_3}=0$. From this we find $\bra{s_3}Q^{\dag}Q\ket{s_3}=|Q\ket{s_3}|^2=0$
and thus $Q\ket{s_3}=0$. With this we obtain $\ket{s_1}=\ket{s_2}$, which contradicts our assumption that
$\ket{s_1}$ and $\ket{s_2}$ are linearly independent, so we conclude that $\ket{s_1}$ and $\ket{s_2}$ must be
in different cohomology classes.}. Finally, the number of independent ground states is
precisely the dimension of the cohomology of $Q$ and the fermion-number of a ground state is the same as the
grade of the corresponding cohomology-class. Thus the ground states of a supersymmetric theory are in one-to-one
correspondence with the cohomology of $Q$. With the same line of reasoning we may also conclude that the
ground states are in one-to-one correspondence with the homology of $Q^{\dag}$.
Finally, the Euler characteristic, defined in cohomology theory as
\beq\nonumber
\chi \equiv \sum_n \left[ (-1)^n \hbox{dim} H_Q^{(n)}\right],
\eeq
is precisely the Witten index.

\subsection{The 'tic-tac-toe' lemma}\label{tictactoe}
Central to the proof presented in this paper is the 'tic-tac-toe' lemma of \cite{botttu}. Let us decompose
the lattice $S$ into two sublattices $S_1$ and $S_2=S \setminus S_1$ and we write
$Q = Q_1 + Q_2$, where $Q_1$ and $Q_2$ act on $S_1$ and $S_2$ respectively. We can then consider the double
complex $\oplus_n C_n = \oplus_n \oplus_{p+q=n} K_{p,q}$, where $p$ ($q$) is the size of the vertex set on $S_1$
($S_2$). Equivalently, if we define $f_i$ as the number of particles on $S_i$, we have $f_1=p$ and $f_2=q$.
Finally, we have $Q_1: K_{p,q}\rightarrow K_{p+1,q}$ and $Q_2: K_{p,q}\rightarrow K_{p,q+1}$. The
'tic-tac-toe' lemma now tells us that the cohomology of $Q$, $H_Q$, is the same as the cohomology of $Q_1$
acting on the cohomology of $Q_2$, i.e. $H_Q = H_{Q_1}(H_{Q_2}) \equiv H_{12}$, provided that $H_{12}$ has
entries only in one row. That is, $H_{12}$ is non-vanishing only for one value of $q$ (or $f_2$).

\beq\nonumber
\left. \begin{array}{ccccccc}
\vdots & &\vdots & &\vdots & & \\
\uparrow Q_2 & &\uparrow Q_2 & &\uparrow Q_2 & & \\
K_{0,2} & \left. \begin{array}{c} Q_1 \\
\longrightarrow
\end{array} \right. & K_{1,2} & \left. \begin{array}{c} Q_1 \\
\longrightarrow
\end{array} \right. & K_{2,2} & \left. \begin{array}{c} Q_1 \\
\longrightarrow
\end{array} \right. & \cdots\\
\uparrow Q_2 & &\uparrow Q_2 & &\uparrow Q_2 & & \\
K_{0,1} & \left. \begin{array}{c} Q_1 \\
\longrightarrow
\end{array} \right. & K_{1,1} & \left. \begin{array}{c} Q_1 \\
\longrightarrow
\end{array} \right. & K_{2,1} & \left. \begin{array}{c} Q_1 \\
\longrightarrow
\end{array} \right. & \cdots \\
\uparrow Q_2 & &\uparrow Q_2 & &\uparrow Q_2 & & \\
K_{0,0} & \left. \begin{array}{c} Q_1 \\
\longrightarrow
\end{array} \right. & K_{1,0} & \left. \begin{array}{c} Q_1 \\
\longrightarrow
\end{array} \right. & K_{2,0} & \left. \begin{array}{c} Q_1 \\
\longrightarrow
\end{array} \right. & \cdots
\end{array} \right.
\eeq

\section{Statement of main result}\label{mainres}
The main result of this paper can be formulated both in the
physics and mathematics context. We prove the result in the
mathematics context, namely we find the dimensions of the
cohomology for the independence complex on the square lattice
wrapped around a torus. In the physics context this translates to
the statement that we found the total number of ground states for
the supersymmetric model on the square lattice wrapped around a
torus. As we mentioned at the end of section \ref{physintro}, the
solution is found by relating ground states, or equivalently
elements of the cohomology, to tilings of the plane with two types
of rhombi. As was mentioned before, this relation is inspired by
the work of Jonsson \cite{Jonsson1,Jonsson2}. He first introduced
the rhombi when he related the partition sum of hard squares with
activity $z=-1$ to these rhombus tilings. This is precisely the
Witten index for our model on the square lattice, and also the
Euler characteristic of $H_Q$. The Witten index is a lower bound
to the number of ground states. The result we obtain in this paper
gives us, not just this bound, but the total number of ground
states with their respective fermion-number in terms of rhombus
tilings. A rhombus tiling is obtained by tiling the plane with the
rhombi depicted in figure \ref{fig:rhombi}, such that the entire
plane is tiled and the rhombi do not overlap (they can have only a
corner or a side in common). We call the tiles with area 4
diamonds and the ones with area 5 squares.
\begin{figure}[h!]
\begin{center}
\includegraphics[width= .45\textwidth]{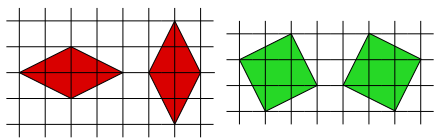}
\caption{The diamonds on the left and squares on the right.}
\label{fig:rhombi}
\end{center}
\end{figure}
We can now state the main result of this paper.

\begin{theorem}\label{tm:HQ}
For the square lattice with periodicities $\vec{v}=(v_1,v_2)$, $v_1+v_2=3p$ with $p$ a positive integer and
$\vec{u}=(m,-m)$, we find for the cohomology $H_Q$
\beq\label{eq:HQ}
N_n=\hbox{dim\ }H_Q^{(n)}= t_n+\Delta_n
\eeq
where $N_n$ is the number of zero energy ground states with $n$ fermions, $t_n$ is the number of rhombus tilings
with $n$ tiles, and
\beq\label{eq:Delta}
\Delta_n = \left\{ \begin{array}{ll}
\Delta \equiv -(-1)^{(\theta_m+1)p}\theta_{d} \theta_{d*} & \textrm{if } n=[2m/3]p\\
0 & \textrm{otherwise,}
\end{array} \right.
\eeq
with $[a]$ the nearest integer to $a$. Finally, $d=\gcd(u_1-u_2,v_1-v_2)$, $d^*=\gcd(u_1+u_2,v_1+v_2)$
and
\beq
\theta_d \equiv \left\{ \begin{array}{ll}
2 & \textrm{if $d=3k$, with $k$ integer}\\
-1 & \textrm{otherwise.}
\end{array} \right.
\eeq
\end{theorem}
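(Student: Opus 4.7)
The strategy is to apply the tic-tac-toe lemma iteratively after a carefully chosen decomposition of the square lattice into two sublattices aligned with the diagonal periodicity vector $\vec{u}=(m,-m)$. Concretely, I would split $S$ into two interleaved families of diagonal lines, $S_1$ and $S_2$, so that each connected component of $S_2$ is a periodic chain whose length is controlled by the condition $v_1+v_2=3p$. On a periodic chain of length divisible by $3$ the cohomology of the supercharge is two-dimensional (as recalled in Section~\ref{physintro}), generated by the two evenly-spaced ``$3$-rule'' ground states; on chains of length not divisible by $3$ the cohomology vanishes. Since $H_{Q_2}$ is a tensor product over the diagonals, this gives an explicit description of the $E_1$-page of the double complex as a very sparse space concentrated in a few fermion-numbers per diagonal.

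The next step is to compute $H_{Q_1}$ acting on $H_{Q_2}$. The reduced complex has basis vectors indexed by a choice, on each diagonal of $S_2$, of one of the two $3$-rule patterns, together with a compatible independent set on $S_1$. I would argue that the $3$-rule labels together with the occupations on $S_1$ can be read off as a rhombus tiling of the torus: the two patterns on each diagonal correspond to the two shifts that produce either a diamond or a square tile when glued to the neighbouring diagonal, and the compatibility conditions enforced by $Q_1$ (hopping and hard-core on adjacent sites) are exactly the rhombus matching rules in Figure~\ref{fig:rhombi}. Assuming this identification, the kernel/image count of $Q_1$ on the reduced complex should be reorganized so that non-trivial cohomology classes are in bijection with rhombus tilings, graded by the total fermion number $n$ (which tracks the number of rhombi).

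Once this bijection is in place, I need to verify the hypothesis of the tic-tac-toe lemma, namely that $H_{12}$ is concentrated in a single row of the double complex. This should follow from the fact that, in the diagonal decomposition, the fermion number $f_2$ contributed by the occupied $3$-rule patterns on $S_2$ is rigid: it equals a fixed function of the number of ``active'' diagonals, independent of the particular tiling. With the lemma applicable, one concludes $\dim H_Q^{(n)} = \dim H_{12}^{(n)} = t_n$ for generic $n$.

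The main obstacle, and the source of the correction $\Delta_n$ in \eqref{eq:Delta}, will be the bookkeeping at the boundary of the valid fermion-number range, at $n=[2m/3]p$. Here there are extra cohomology classes that do not come from rhombus tilings but from ``uniform'' configurations in which every diagonal carries the same $3$-rule shift; these configurations are sensitive to the arithmetic of how the torus wraps, which is why $\Delta$ depends on $d=\gcd(u_1-u_2,v_1-v_2)$ and $d^*=\gcd(u_1+u_2,v_1+v_2)$ mod $3$ via the $\theta_d,\theta_{d^*}$ factors, and on the parity sign $(-1)^{(\theta_m+1)p}$. I expect the hardest part of the proof to be an honest computation of these residual classes: one must track precisely which uniform configurations survive after passing from $H_{Q_2}$ to $H_{Q_1}(H_{Q_2})$, and verify that the signs and multiplicities combine to yield the stated $\Delta$. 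Completing this step, together with showing that the tic-tac-toe hypothesis still holds in the presence of these extra classes (or handling them via a separate spectral-sequence argument), would finish the proof of \eqref{eq:HQ}.
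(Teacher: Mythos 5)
Your overall architecture (diagonal sublattice decomposition, tic-tac-toe, a tiling bijection, a residual correction from uniform configurations) matches the paper's in outline, but two of your central claims are wrong in ways that would sink the argument. First, the cohomology of a periodic chain of length $L=3j\pm1$ does not vanish --- it is one-dimensional (theorem \ref{chaincohom}); only the \emph{open} chain of length $3j+1$ has vanishing cohomology. More importantly, $H_{Q_2}$ is not a tensor product of single-chain cohomologies concentrated in a few fermion numbers: occupying a site of $S_1$ blocks several consecutive sites of the adjacent $S_2$ chains, cutting each periodic chain into open segments whose lengths depend on the $S_1$ configuration. Consequently $H_{Q_2}$ is a large space indexed by essentially all independent sets on $S_1$ minus those creating a forbidden segment length (lemmas \ref{lm:notinHQ2-a}, \ref{lm:notinHQ2-b}), not a sparse one, and computing $H_{12}=H_{Q_1}(H_{Q_2})$ is already a substantial combinatorial step (the motif classification).

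The decisive gap is your third step. For the torus the hypothesis of the tic-tac-toe lemma genuinely fails: $H_{12}$ has entries in many rows of the double complex (the number of $S_2$ fermions varies from motif to motif), and no rigidity argument of the kind you sketch can rescue this --- indeed for $p>1$ one has $H_Q \subsetneq H_{12}$. The bulk of the paper's proof (step 3 of section \ref{sec:pchains}) consists of running the spectral sequence beyond the first page, via the explicit tic-tac-toe procedure, to identify pairs of elements of $H_{12}$ of which one fails to lie in $\ker Q$ and the other lies in $\im Q$; these cancel and must be removed before the tiling bijection can be set up. Your proposal, as written, would simply count $\dim H_{12}$ and overcount the ground states. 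You do correctly anticipate that $\Delta$ arises from the uniform (all-empty / all-occupied $S_1$) configurations and that its value is an arithmetic bookkeeping problem in $d,d^*$ mod $3$, but reaching that point requires first closing the $H_{12}$-versus-$H_Q$ gap, which is the heart of the theorem.
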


As an immediate consequence of this theorem, we obtain for the Euler characteristic (= Witten index =
$Z(\Sigma,-1)$)
\beq\nonumber
\chi \equiv \sum_n \left[ (-1)^n \hbox{dim} H_Q^{(n)}\right]= \sum_n (-1)^n (t_n + \Delta_n).
\eeq
which is precisely the result obtained by Jonsson for the hard squares at activity $z=-1$ \cite{Jonsson1}.

Another direct consequence follows from the area of the tiles. The diamonds have area 4, and thus a tiling
with solely diamonds will contain $L/4$ tiles. This corresponds to an element in the $L/4$-th cohomology and
a ground state with $L/4$ particles. Conversely, a tiling consisting of squares only corresponds to an
element in the $L/5$-th cohomology and a ground state with $L/5$ particles. Continuing this argument for
general tilings with the diamonds and squares, we find on the infinite plane that for all rational numbers
$r \in [\frac{1}{5},\frac{1}{4}]\cap \mathbb{Q}$ the cohomology at grade $rL$ is non vanishing, or,
equivalently, there exists a zero energy ground state with $rL$ particles. This result was obtained
independently by Jonsson \cite{Jonsson2} for the homology of $Q^{\dag}$. In fact, he found that for each
tiling there is a so-called cross-cycle, which is a representative of the homology of $Q^{\dag}$. However, he
could not prove that these cross-cycles are independent, i.e. in different homology classes, nor that they
constitute a basis. A comparison with our result, theorem \ref{tm:HQ}, suggests that the cross-cycles are
indeed independent and span the full homology with the exception of $\Delta_n$ elements at the $n$-th grade.

Finally, the theorem provides insight in the growth behavior of the number of ground states, since this is 
now directly related to the growth behavior of the number of tilings. In \cite{Jonsson3} various results 
on the number of tilings on the doubly periodic square lattice are reported. Here we mention two of these results 
for the case that $\vec{u}=(m,-m)$ and $\vec{v}=(k,k)$.
\begin{itemize}
\item[1] For $m$ and $k$ such that $\gcd(m,k)=1$, there are no rhombus tilings that satisfy the periodicities given by $\vec{u}$ and $\vec{v}$.
\item[2] For $m=3\mu r$ and $k=3 \lambda r$, with $\mu$ and $\lambda$ positive integers and $r$ large, the total number of rhombus tilings $t$ grows as
\beq\nonumber
t \equiv \sum_n t_{n} \sim \frac{9}{2} \frac{4^{\mu r+\lambda r}}{\pi r \sqrt{\mu \lambda}}.
\eeq
\end{itemize}
In the first case it follows that the number of ground states with $n$ particles is given by $\Delta_n$, 
which is non-zero only for $n=[2m/3]p$ given that $2k=3p$. In the second case the number of ground states will show the same 
growth behavior as the number of tilings. This number turns out to be dominated entirely by the number of
tilings with $2 L/9$ tiles. Furthermore, it is noteworthy that the number of tilings grows exponentially with the 
linear dimensions, instead of the area, of the system. It follows that, eventhough the system is highly 
frustrated, this leads only to a sub-extensive ground state entropy. This is in contrast with results 
obtained for the triangular, hexagonal and martini lattices, for which the ground state entropy was found to 
be extensive \cite{FS,vE}.

\section{Proof of main result}\label{proof}

In this section we present the proof of theorem \ref{tm:HQ}. Unfortunately, the proof is quite involved and
consists of several intermediate steps. Here we will give a brief outline of these steps. In section
\ref{tictactoe} we resumed the 'tic-tac-toe' lemma which plays a crucial role in the proof. The lemma relates
$H_Q$ to $H_{Q_1}$ and $H_{Q_2}$ when $Q$ is written as $Q_1+Q_2$. This is achieved by writing the lattice
$S$ as $S_1 \cup S_2$ and letting $Q_i$ act solely on $S_i$. The crucial step is to choose the right
sublattices. It turns out that for the square lattice one should pick a set of disconnected points for $S_1$
and a set of (disconnected) chains for $S_2$ (for the details see definition \ref{def:S1S2}).

First, in section \ref{chain}, we will discuss the cohomology results for a single chain with various boundary conditions. These
results are crucial in the first step of the 'tic-tac-toe' lemma, i.e. computing the cohomology of $Q_2$,
since $Q_2$ acts on a set of chains.

Second, in section \ref{partI}, we consider the square lattice on the plane and on the cylinder, to
illustrate the power of the 'tic-tac-toe' lemma for a relatively simple case. We choose the boundary
conditions in such a way that $H_{Q_2}$ is non-vanishing only for one value of $f_1$ and $f_2$. Consequently,
$H_{12}$ and $H_Q$ are trivially obtained from $H_{Q_2}$.

Finally, we wrap the square lattice around the torus. We then apply the same strategy as in section \ref{partI},
and $H_{Q_2}$ is easily obtained. Unfortunately, however, it has entries in several rows and columns of the
double complex and computing $H_{12}$ is highly non trivial.

As a first step (section \ref{sec:zz}), we compute $H_{12}$ for a
thin torus, such that the $S_2$ sublattice consists of one chain
only. For this case, we then show that $H_Q=H_{12}$, even though
$H_{12}$ has entries in multiple rows. The final step for this
simple case is to relate the elements of $H_Q$ to periodic
sequences of tiles and identify the elements that give rise to
the small number $\Delta$ in (\ref{eq:HQ}).

In the last step (section \ref{sec:pchains}), we finally present the proof of theorem \ref{tm:HQ}. Here the
sublattice $S_2$ consists of an arbitrary number of chains. We proceed as in section \ref{sec:zz} to obtain
$H_{12}$ and each step will be similar, but slightly more involved. Again we find that $H_{12}$ does not have
entries only in one row. In contrast to the thin torus case, however, we find that here $H_Q$ is
contained in but not equal to $H_{12}$. Using the 'tic-tac-toe' procedure, we reduce $H_{12}$ to obtain
$H_Q$. What we find is that all the elements of $H_Q$ can be obtained by concatenating so-called building
blocks and in the final step we map each building block to a sequence of tiles. It follows from this mapping,
that the elements of $H_Q$ map to all possible tilings, again with a small discrepancy $\Delta$, which is
computed in the very last step.

\subsection{The cohomology of $Q$ on the chain}\label{chain}
In the following sections we will often use the cohomology results for the supersymmetric model on the chain.
These results can be found in \cite{FNS,FSdB}, but will be restated here for completeness.

\begin{definition}
An open chain of length $L$ is the graph $G(V,E)$ with vertices $V=\{v_j|j\in {\mbox{\bbold N}},j\leq L\}$ and
edges $E=\{(v_j,v_{j+1})|j \in {\mbox{\bbold N}},j < L\}$. A periodic chain of length $L$ is a cycle defined by
the graph $G(V,E')$ where $E'=E \cup \{(v_L,v_1)\}$.
\end{definition}

In the following $j \in {\mbox{\bbold N}}$ ($j$ can also be zero as long as the number of sites $L$ is
positive).
\begin{theorem}\label{chaincohom}
The cohomology of $Q$ on the periodic chain with $L$ sites has:
\begin{itemize}
\item 2 non-trivial cohomology classes with $j$ fermions if $L=3j$,
\item 1 non-trivial cohomology class with $j$ fermions if $L=3j\pm1$.
\end{itemize}
The cohomology of $Q$ on the open chain with $L$ sites has:
\begin{itemize}
\item 1 non-trivial cohomology class with $j$ fermions if $L=3j$ or $L=3j-1$,
\item 0 non-trivial cohomology classes if $L=3j+1$.
\end{itemize}
\end{theorem}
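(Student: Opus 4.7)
The plan is to prove the theorem by induction on $L$, using short exact sequences of chain complexes obtained by conditioning on a chosen vertex. For the open chain I condition on the endpoint $v_1$: the subspace with $v_1$ occupied is $Q$-invariant, and since $v_2$ is then forced empty it is identified---up to a one-fermion grade shift---with $\mathcal{H}^{\mathrm{open}}_{L-2}$ on $v_3,\ldots,v_L$; the quotient is $\mathcal{H}^{\mathrm{open}}_{L-1}$ on $v_2,\ldots,v_L$. This gives
\beq\nonumber
0 \to \mathcal{H}^{\mathrm{open}}_{L-2}[1] \to \mathcal{H}^{\mathrm{open}}_L \to \mathcal{H}^{\mathrm{open}}_{L-1} \to 0.
\eeq
On the periodic chain, occupying a chosen site forces both of its neighbors empty, yielding the analogous
\beq\nonumber
0 \to \mathcal{H}^{\mathrm{open}}_{L-3}[1] \to \mathcal{H}^{\mathrm{per}}_L \to \mathcal{H}^{\mathrm{open}}_{L-1} \to 0.
\eeq
Hence the periodic statement will follow from the open-chain statement by a parallel induction, with the base cases $L\in\{0,1,2\}$ handled directly.

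The long exact sequence in cohomology is the induction engine. In two of the three residue classes of $L$ mod $3$, the cohomologies of the two smaller chains appearing in it live in non-overlapping fermion gradings, so the sequence decouples into isomorphisms and the inductive step is immediate. For the periodic chain the remaining case is $L=3k$, in which the long exact sequence concentrates all cohomology at grade $k$ as either a $0$- or $2$-dimensional space, and the Euler characteristic $\chi(\mathcal{H}^{\mathrm{per}}_{3k})=2(-1)^k\neq 0$ (computed from the recurrence implied by the sequence) immediately pins it to $\mathbb{C}^2$. For the open chain the remaining case is $L=3k+1$, in which both smaller chains carry one-dimensional cohomology at grade $k$ and the long exact sequence collapses to
\beq\nonumber
0 \to H_Q^{(k)}(\mathcal{H}^{\mathrm{open}}_L) \to \mathbb{C} \xrightarrow{\delta} \mathbb{C} \to H_Q^{(k+1)}(\mathcal{H}^{\mathrm{open}}_L) \to 0;
\eeq
the Euler characteristic vanishes here and so does not distinguish a vanishing cohomology from a pair of cancelling classes.

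The main obstacle is therefore to show that this connecting map $\delta$ is an isomorphism. Tracing the snake lemma, $\delta[\alpha]$ is the class of the restriction $\alpha|_{v_2=0}$, viewed as a cochain on $v_3,\ldots,v_L$. One approach is to carry explicit cocycle representatives through the induction and verify by direct computation that the generator of $H_Q^{(k)}(\mathcal{H}^{\mathrm{open}}_{3k})$ has nonzero restriction. A cleaner topological alternative is the fold lemma for independence complexes: since $N(v_1)=\{v_2\}\subseteq N(v_3)$ in $P_L$, the lemma yields $\mathrm{Ind}(P_L)\simeq \mathrm{Ind}(P_L\setminus v_3)=\mathrm{Ind}(P_2\sqcup P_{L-3})$, which is a suspension of $\mathrm{Ind}(P_{L-3})$; iterating identifies $\mathrm{Ind}(P_{3k+1})$ with a contractible space and $\mathrm{Ind}(P_{3k})$, $\mathrm{Ind}(P_{3k-1})$ each with a $(k-1)$-sphere, closing the induction.
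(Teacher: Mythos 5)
Your argument is correct, but it takes a genuinely different route from the paper. The paper proves the periodic $L=3j$ case with the same `tic-tac-toe' machinery it uses everywhere else: it takes $S_2$ to be every third site, observes that $H_{Q_2}$ survives only for the two configurations in which every $S_2$ site is blocked by an occupied neighbour, and invokes the lemma (applicable since $H_{12}$ then sits in a single row); the remaining cases are declared ``completely analogous.'' You instead set up a deletion--contraction induction on a single vertex, $0 \to \mathcal{H}^{\mathrm{open}}_{L-2}[1] \to \mathcal{H}^{\mathrm{open}}_{L} \to \mathcal{H}^{\mathrm{open}}_{L-1} \to 0$ and its periodic analogue, and run the long exact sequence. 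Your bookkeeping is right: in all but one case the two outer cohomologies sit in gradings that force the sequence to decouple, and you correctly identify the open chain with $L\equiv 1 \pmod 3$ as the one place where a connecting map $\delta$ must be shown to be an isomorphism; the fold lemma ($N(v_1)\subseteq N(v_3)$, so $\mathrm{Ind}(P_L)\simeq \Sigma\,\mathrm{Ind}(P_{L-3})$) disposes of this cleanly and in fact delivers the full homotopy type of $\mathrm{Ind}(P_L)$, which is more than the theorem asks for. Two small remarks. First, in the periodic $L=3k$ case you do not need the Euler characteristic at all: the two surviving groups live in the \emph{same} grade $k$, so the exact sequence reads $0 \to \mathbb{C} \to H_Q^{(k)} \to \mathbb{C} \to 0$ and the dimension is exactly $2$ with no further input (your ``either $0$- or $2$-dimensional'' hedge is unnecessary). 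Second, if you pursue your first alternative for $\delta$ (explicit representatives), note that showing the restriction $\alpha|_{v_2=0}$ is a \emph{nonzero class} in $H_Q^{(k)}(\mathcal{H}^{\mathrm{open}}_{3k-1})$ requires checking that the configuration $\ket{1,4,\dots,3k-2}$ is not exact there, which is a genuine (if routine) computation --- essentially the one the paper performs around Lemma \ref{chainrepr}; the fold-lemma route avoids this entirely and is the one I would keep. Your approach buys a self-contained, uniform induction and the sphere/point homotopy types; the paper's buys consistency with the double-complex technology it needs for the torus anyway.
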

\begin{proof}
We prove this result for the periodic chain with $L=3j$ sites. We take $S_2$ to
be every third site and the remaining sites $S_1$. Remember that $f_i$ is the number of fermions on
sublattice $S_i$. Consider a single site on $S_2$. If both of the adjacent
$S_1$ sites are empty, $H_{Q_2}$ is trivial: $Q_2$ acting on the empty
site does not vanish, while the filled site is $Q_2$ acting on
the empty site. So the empty site does not belong to the kernel of $Q_2$, whereas the filled site belongs to
the image of $Q_2$. This leads to a vanishing $H_{Q_2}$, unless every
site on $S_2$ is forced to be empty by being adjacent to
an occupied site. There are only two such configurations:
\begin{eqnarray}\label{configs}
\ket{\alpha} &\equiv& \dots \bullet \Box \circ \bullet \Box \circ \bullet \Box \circ \bullet \Box \circ \bullet
\Box \circ \bullet \Box \circ \bullet \Box \circ \dots \nonumber\\
\ket{\gamma} &\equiv& \dots \circ \Box \bullet \circ \Box \bullet \circ \Box \bullet \circ \Box \bullet \circ
\Box \bullet \circ \Box \bullet \circ \Box \bullet \dots
\end{eqnarray}
where the square represents an empty site on $S_2$. Both states $\ket{\alpha}$ and
$\ket{\gamma}$ belong to $H_{12}$: they are closed because $Q_1 \ket{\alpha} =
Q_1 \ket{\gamma} = 0$, and not exact because there are no elements
of $H_{Q_2}$ with $f_1 = f -1$ fermions, where $f = L/3 = j$. By the 'tic-tac-toe' lemma, there
must be precisely two different cohomology classes in $H_{Q}$,
and therefore exactly two ground states with $j$ fermions.
\end{proof}

The proofs for the other cases are completely analogous. In the main proof we will need the representatives of
the non-trivial cohomology classes of $Q$ on the open chain. We will use the following notation: to denote a
configuration with fermions on sites $a$, $b$, $c$, etc. we write $\ket{a,b,c \dots}$.
\begin{lemma}\label{chainrepr}
A representative of the non-trivial cohomology classes of $Q$ on the open chain with $L=3j$ or $L=3j-1$ sites is
\beq
\ket{\phi}\equiv \ket{2,5,8 \dots 3j-1},
\eeq
where on the dots the numbers always increase by three.
\end{lemma}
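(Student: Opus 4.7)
The plan is to verify that $\ket{\phi}$ is closed and then to use the tic-tac-toe lemma to show the class it represents is non-trivial, relying on the one-dimensionality of $H_Q^{(j)}$ on the open chain already established in Theorem \ref{chaincohom}.

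First, $\ket{\phi}$ is a legitimate hard-core configuration because consecutive fermions sit three sites apart and are therefore not adjacent. To check $Q\ket{\phi}=0$, note that in both cases $L=3j$ and $L=3j-1$ every empty site $s\in\{1,\dots,L\}\setminus\{2,5,\dots,3j-1\}$ is adjacent to at least one occupied site, so the projector $P_{<s>}$ annihilates $\ket{\phi}$ and no fermion can be inserted anywhere.

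For non-exactness I would reuse the tic-tac-toe strategy from the proof of Theorem \ref{chaincohom}, but with $S_1$ chosen as the set of every third site starting from site $3$, so that $S_2$ is a disjoint union of exactly $j$ open chains of length two. On a single two-site open chain the cohomology of $Q$ is one-dimensional at fermion number one (Theorem \ref{chaincohom} applied to $L=2$), a representative being the state with a fermion on the right-hand site; on a length-one open chain the cohomology is trivial. Hence placing even one fermion on $S_1$ disables one site of each adjacent pair, collapsing those pairs to length one and killing $H_{Q_2}$. Consequently $H_{Q_2}$ is supported only at bigrade $(f_1,f_2)=(0,j)$, is one-dimensional there, and $\ket{\phi}$ arises as the natural tensor-product representative obtained by placing a fermion on the right-hand site of every pair.

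Since $H_{Q_2}$ occupies a single row of the double complex, the tic-tac-toe lemma yields $H_Q\cong H_{Q_1}(H_{Q_2})\cong H_{Q_2}$; the identity $Q_1\ket{\phi}=0$ is a special case of the closure check, because each $S_1$ site $3k$ has the occupied neighbour $3k-1$ in $\ket{\phi}$, and this certifies that $\ket{\phi}$ descends to a non-zero element of $H_Q^{(j)}$. The main obstacle is essentially bookkeeping: one has to treat the two regimes $L=3j$ and $L=3j-1$ separately when counting $S_1$ sites (there are $j$ and $j-1$ of them respectively), but in both cases $S_2$ decomposes into exactly $j$ length-two pairs whose right-hand sites are $2,5,\dots,3j-1$, so the argument concludes uniformly.
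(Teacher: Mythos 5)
Your proof is correct, but it takes a genuinely different route from the paper's. The paper argues directly at the level of closed configurations: for $L=3j$ it observes that $\ket{\phi}$ is the \emph{only} configuration with $j$ fermions in $\ker Q$ (each fermion must block exactly three sites, forcing positions $2,5,\dots,3j-1$), and for $L=3j-1$ it notes there are two closed configurations, $\ket{1,4,\dots,3j-2}$ and $\ket{2,5,\dots,3j-1}$, so that with $\dim H_Q^{(j)}=1$ exactly one linear combination is exact, and it then asserts that $\ket{\phi}$ itself is not in $\im Q$ (the justification, an explicit preimage computation for the exact combination, is only sketched). You instead rerun the tic-tac-toe machinery with the decomposition $S_1=\{3,6,\dots\}$, $S_2$ a disjoint union of $j$ two-site chains, so that any occupied $S_1$ site isolates a length-one chain and kills $H_{Q_2}$; this concentrates $H_{Q_2}$ (hence $H_{12}$) at the single bigrade $(0,j)$ with the tensor-product representative $\ket{2,5,\dots,3j-1}$, and since $Q_1\ket{\phi}=0$ identically the representative survives unchanged to $H_Q$. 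Your route has two advantages: it treats $L=3j$ and $L=3j-1$ uniformly, and it avoids having to identify which combination of the two closed states is exact in the $L=3j-1$ case (non-exactness of $\ket{\phi}$ follows because a $Q$-exact element at bigrade $(0,j)$ would in particular be $Q_2$-exact, contradicting $[\ket{\phi}]\neq 0$ in $H_{Q_2}$); the paper's route is more elementary in that it needs no further sublattice decomposition, but its $L=3j-1$ case leans on an unproved non-exactness claim. As a byproduct your argument also reproves the open-chain part of Theorem \ref{chaincohom} rather than merely consuming it.
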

\begin{proof}
From theorem \ref{chaincohom} we know that the representative has
$j$ fermions. In the case that $L=3j$ it follows that $\ket{\phi}$
is the only configuration with $j$ fermions that belongs to the
kernel of $Q$. Since the dimension of $H_Q$ is one, $\ket{\phi}$
must be a representative of the non-trivial cohomology class.

When $L=3j-1$ there are two configurations that belong to the kernel of $Q$:
$\ket{1,4,7 \dots 3j-2}$ and $\ket{2,5,8 \dots 3j-1}$, however the dimension of $H_Q$ is again just one. It
follows that a linear combination of these two configurations will be in the image of $Q$. In general, two
states $\ket{s_1}$ and $\ket{s_2}$ are in the same cohomology class if one can write
$\ket{s_1}=\ket{s_2}+Q\ket{s_3}$ for some state $\ket{s_3}$. In that case both $\ket{s_1}$ and $\ket{s_2}$
are good representatives of the cohomology class. Since $\ket{\phi}$ itself is not in the image of $Q$ it is
thus a good representative.
\end{proof}

\subsection{The cohomology of $Q$ on the square lattice.
Part I: Tilted rectangles and cylinders}\label{partI}

Let us define $\R(M,N)$ with $M, N \geq 1$ as the subset of $\zs^2$ given by the points $(x,y)$ such
that
\begin{eqnarray}\label{bc}
y\leq x \leq y+M-1 \quad \textrm{and} \quad  -y+1\leq x \leq -y+N.
\end{eqnarray}
This defines a tilted rectangular part of the square lattice.
We can also define $\tilde{\R}(M,N)$ with $M, N \geq 1$ as the subset of $\zs^2$ given by the points
$(x,y)$ such that
\begin{eqnarray}
y\leq x \leq y+M-1 \quad \textrm{and} \quad  -y\leq x \leq -y+N-1.
\end{eqnarray}
Whereas $\tilde{\R}(M,N)$ contains the point $(0,0)$, it is excluded in $\R(M,N)$.
The lattice $\tilde{\R}(M,N)$ can be mapped to a lattice of the former type except when $M$ and $N$ are both
odd. Finally, for $M$ even the cylindrical version $\R_c(M,N)$ can be obtained from $\R(M+1,N)$ by identifying the
vertices $(i,i)$ and $(i+M/2,i-M/2)$.

For the lattices $\R(M,N)$, $\tilde{\R}(M,N)$ and $\R_c(M,N)$ the full cohomology problem has been solved
using Morse theory \cite{BM}. These cases can also be solved using the `tic-tac-toe' lemma. The crucial step
is to choose the right sublattices. We take a set of disconnected sites for $S_1$ and a set of (open or periodic)
chains for $S_2$ (see fig. \ref{fig:bc}).

\begin{definition}\label{def:S1S2}
More formally, for $\R(M,N)$ $S_1$ is the set of points $(x,y)$ that satisfy
\begin{eqnarray}
y\leq x \leq y+M-1 \quad \textrm{and} \quad  -y\leq x \leq -y+N-1 \nonumber\\
\textrm{and} \quad x=-y+3s,
\end{eqnarray}
with $3 \leq 3s \leq N-1$ and $S_2$ is the set of points $(x,y)$ that satisfy
\begin{eqnarray}
y\leq x \leq y+M-1 \quad \textrm{and} \quad  -y\leq x \leq -y+N-1 \nonumber\\
\textrm{and} \quad -y+3p+1 \leq x \leq -y+3p+2,
\end{eqnarray}
with $0 \leq 3p \leq N-3$. The sublattices can be defined similarly for $\tilde{\R}(M,N)$.
\end{definition}

\begin{figure}[h!]
\begin{center}
\includegraphics[width= .6\textwidth]{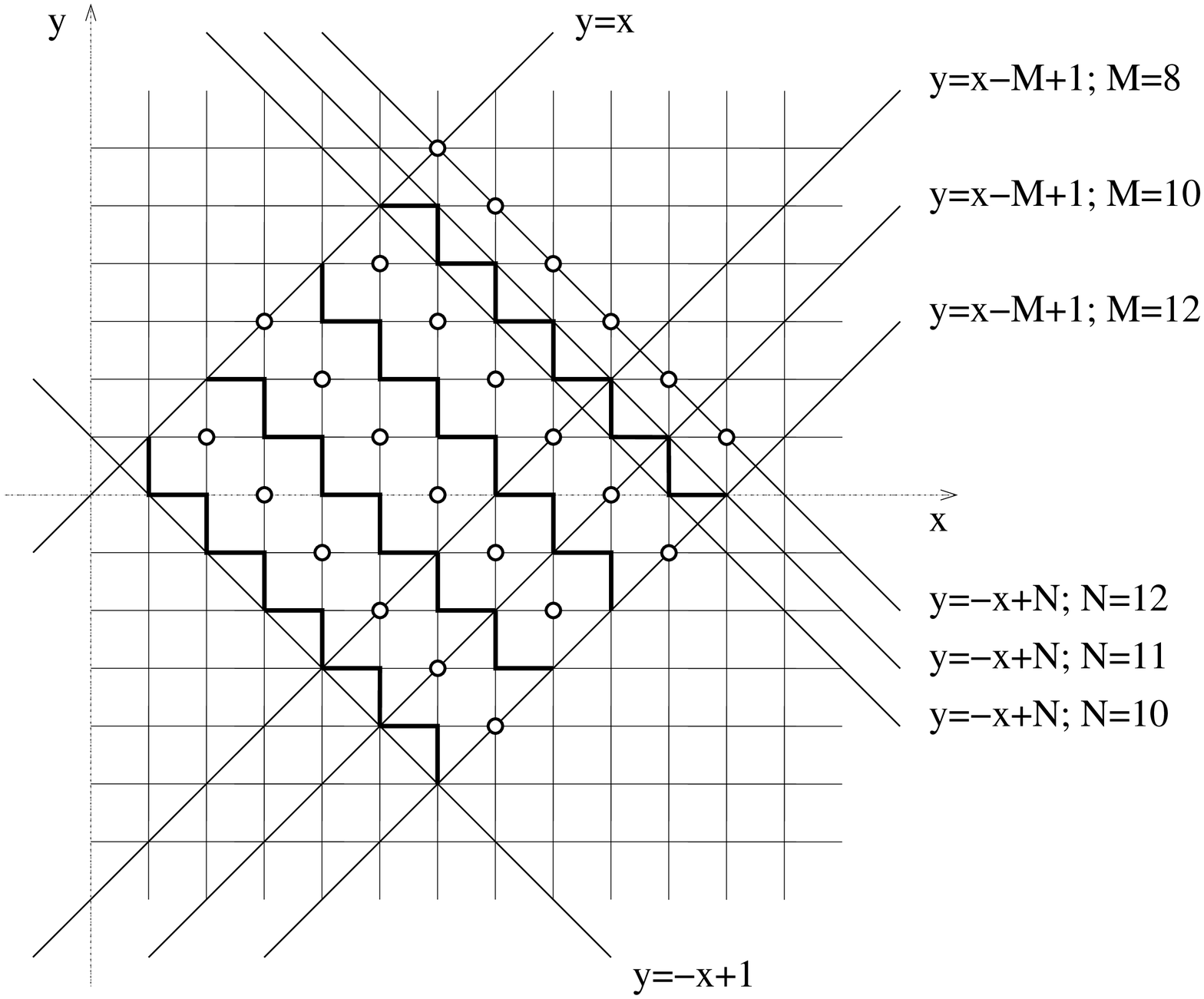}
\caption{Sublattice $S_1$ is indicated by circles and
sublattice $S_2$ is indicated by the fat lines. The bounding lines of $\R(M,N)$ defined in (\ref{bc}) are drawn for various values
of $M$ and $N$. For the cylinder $M$ should be even, but for the rectangle it can be odd as well. One easily
checks that the length of the $S_2$ chain is $M$. Note that for $N=3n+1$ only half of the upper-right $S_2$ chain is
included in $\R(M,N)$.}
\label{fig:bc}
\end{center}
\end{figure}

To solve $H_{Q_2}$ we start from the bottom-left chain.
If a site on $S_1$ directly above this chain is occupied,
we are left with an isolated site on the bottom-left chain (see fig. \ref{fig:singlesite}), leading to a vanishing $H_{Q_2}$ (see section
\ref{chain}). It follows that all sites directly above the bottom-left chain must be empty. Continuing
this argument for subsequent chains one finds that all sites on $S_1$ must be empty. However, in the case
that $N=3l+1$ we have a set of disconnected sites at the top-right that belong to $S_2$. From the previous
argument we obtained that the sites of $S_1$ directly below the top-right sites of $S_2$ have to be empty.
This implies that for $N=3l+1$ $H_{Q_2}$ vanishes.
When $N \neq 3l+1$ we find that all elements in $H_{Q_2}$ have all sites in $S_1$ empty, thus computing
$H_{Q_1}(H_{Q_2})$ is a trivial step. The dimension of $H_Q$ is related to the number of ground
states, or equivalently, the number of non-trivial cohomology classes of $Q$ on the chains that constitute
$S_2$. Note that the length of these chains is $M$ both for the tilted
rectangles as well as for the cylinder. In the first case the chains have open boundary conditions, whereas
in the latter the chains are periodic. Now, the number of non-trivial
cohomology classes of $Q$ for all these cases can be found in theorem \ref{chaincohom}.

\begin{figure}[h!]
\begin{center}
\includegraphics[width= .2\textwidth]{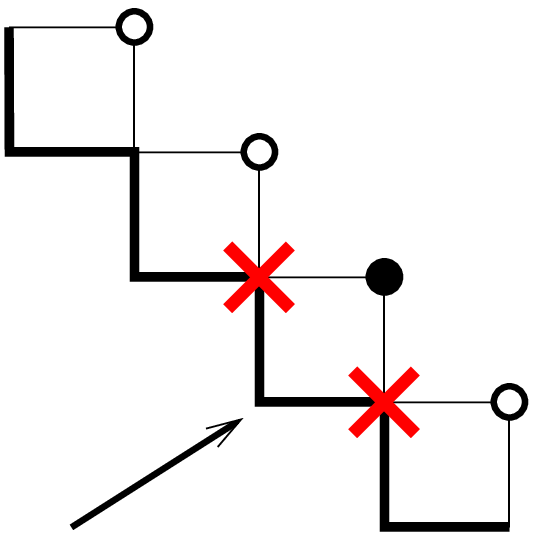}
\caption{A site directly above the bottom-left chain is occupied. This generates an isolated site on the
bottom-left chain.}
\label{fig:singlesite}
\end{center}
\end{figure}

It follows that for the tilted rectangles, $\R(M,N)$ and $\tilde{\R}(M,N)$, with $N \neq 3l+1$ we have
\begin{itemize}
\item[i)] no non-trivial cohomology classes for $M=3p+1$
\item[ii)] one non-trivial cohomology class for $M \neq 3p+1$.
\end{itemize}
For the cylinder, $\R_c(M,N)$, with $N \neq 3l+1$ and $M$ even we have
\begin{itemize}
\item[i)] one non-trivial cohomology class for $M=3p \pm 1$
\item[ii)] $2^K$ non-trivial cohomology classes for $M = 3p$, with $K$ the nearest integer to $N/3$.
\end{itemize}
For $N = 3l+1$ the non-trivial cohomology vanishes for both the rectangle and the cylinder.

\subsection{The cohomology of $Q$ on the square lattice.
Part II: The torus}
We now define the doubly periodic lattices via two linearly independent vectors $\vec{u}=(u_1,u_2)$ and
$\vec{v}=(v_1,v_2)$. We wrap the square lattice around the torus by identifying all points $(i,j)$ with
$(i+k u_1+ l v_1,j+ k u_2+ l v_2)$.
The main result of this paper is that we find the full cohomology of $Q$ on the square lattice with doubly
periodic boundary conditions defined by $\vec{u}=(m,-m)$ and $\vec{v}=(v_1,v_2)$ such that $v_1+v_2=3p$. In
particular, we obtain a direct relation between elements of $H_Q$ and tiling configurations. This relation
allows us to prove theorem \ref{tm:HQ}, that was first conjectured by P. Fendley and is strongly
inspired by the work of Jonsson \cite{Jonsson1,Jonsson2}. It is restated here for convenience.\\

{\em
For the square lattice with periodicities $\vec{v}=(v_1,v_2)$, $v_1+v_2=3p$ with $p$ a positive integer and
$\vec{u}=(m,-m)$, we find for the cohomology $H_Q$
\beq\nonumber
N_n=\hbox{dim\ }H_Q^{(n)}= t_n+\Delta_n
\eeq
where $N_n$ is the number of zero energy ground states with $n$ fermions, $t_n$ is the number of rhombus tilings
with $n$ tiles, and
\beq\nonumber
\Delta_n = \left\{ \begin{array}{ll}
\Delta \equiv -(-1)^{(\theta_m+1)p}\theta_{d} \theta_{d*} & \textrm{if } n=[2m/3]p\\
0 & \textrm{otherwise,}
\end{array} \right.
\eeq
with $[a]$ the nearest integer to $a$. Finally, $d=\gcd(u_1-u_2,v_1-v_2)$, $d^*=\gcd(u_1+u_2,v_1+v_2)$
and
\beq\nonumber
\theta_d \equiv \left\{ \begin{array}{ll}
2 & \textrm{if $d=3k$, with $k$ integer}\\
-1 & \textrm{otherwise.}
\end{array} \right.
\eeq
}\\

Computing the cohomology for these tori is far from trivial. First of all, computing $H_{Q_2}$ does not
imply that all sites on $S_1$ are empty, instead there are many allowed configurations on $S_1$. Secondly,
because of this, computing $H_{Q_1}(H_{Q_2})$ becomes much more involved. Finally, we will see that,
generally, $H_Q$ will be contained in $H_{12}$, but not equal to $H_{12}$.

We will divide the proof into two parts. We start by proving the theorem for a specific torus, defined by
$\vec{u}=(m,-m)$ and $\vec{v}=(1,2)$. This proof will already contain many steps that we use in the proof for
the more general case, however, it will be deprived of certain complications. For instance, here we will find
$H_Q=H_{12}$. As we said, this is not true in general, and in the second part of the proof a substantial part
will be concerned with obtaining $H_Q$ once we have found $H_{12}$.


\subsubsection{A special case: $S_2$ consisting of 1 chain}\label{sec:zz}
In this section we consider the case where $\vec{v}=(1,2)$ and
$\vec{u}=(m,-m)$. It follows that $S_2$ consists of exactly one
periodic chain (see fig. \ref{fig:onechain}).

\begin{figure}[h!]
\begin{center}
\includegraphics[width= .8\textwidth]{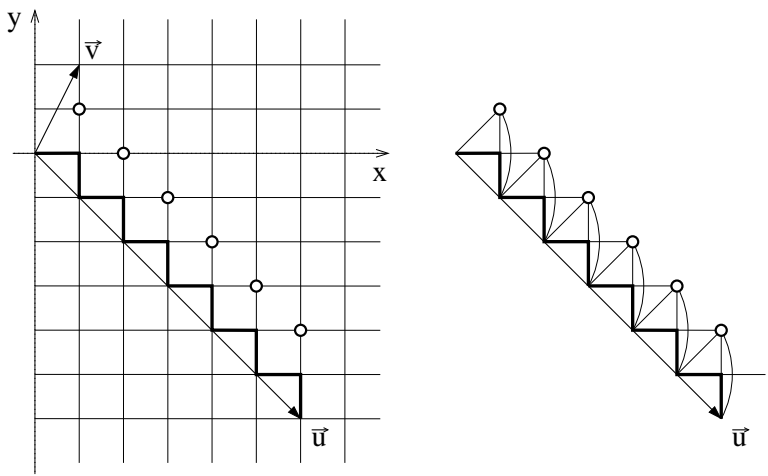}
\caption{The square lattice is wrapped around the torus by imposing periodicities $\vec{v}$ and $\vec{u}$.
Here $\vec{v}=(1,2)$ and $\vec{u}=(m,-m)$, consequently $S_2$ consists of 1 chain. On the right we have
drawn the cylinder, where periodicity in the $\vec{u}$-direction is still implied.}
\label{fig:onechain}
\end{center}
\end{figure}

The proof of theorem \ref{tm:HQ} for this case will consist of 4 steps:
\begin{itemize}
\item[1.] We compute $H_{Q_2}$.
\item[2.] We compute $H_{12}=H_{Q_1}(H_{Q_2})$ and show that its elements can be constructed from a finite
number of building blocks, called motifs. A motif is characterized by a certain configuration on a finite
number of subsequent $S_1$ sites.
\item[3.] We show that $H_{Q}=H_{12}$.
\item[4.] We relate the elements of $H_{Q}$ to tiling configurations by relating each motif to a
small series of tiles.
\end{itemize}


\vspace{0.5cm}
\noindent
{\bf Step 1}

\vspace{0.1cm}
\noindent
First we compute $H_{Q_2}$ and we find the following:
\begin{lemma}\label{lm:notinHQ2}
The cohomology of $Q_2$ consists of all possible configurations on $S_1$ except for configurations with a
multiple of three $S_1$ sites empty between two occupied $S_1$ sites.
\end{lemma}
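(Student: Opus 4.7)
The plan is to compute $H_{Q_2}$ by conditioning on the $S_1$ configuration, since $Q_2$ acts only on $S_2$-occupations and therefore commutes with the projection onto each $S_1$-sector. The Hilbert space decomposes as $\bigoplus_{c} \mathcal{H}_c$, where $c$ runs over all $2^m$ configurations of $S_1$, and $H_{Q_2}$ splits accordingly. It then suffices to determine for which $c$ the sector contributes non-trivially.

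First I would make the geometry of the $S_1$--$S_2$ adjacency explicit. For the torus with $\vec v=(1,2)$, $\vec u=(m,-m)$ the sublattice $S_2$ is a single periodic chain of length $2m$ that zig-zags between the antidiagonals $t\equiv 1$ and $t\equiv 2\pmod 3$ (with $t=x+y$), while $S_1$ consists of $m$ disconnected sites lying along the antidiagonal $t\equiv 0$. The key geometric claim is that each $S_1$ site is adjacent to exactly four consecutive sites of the $S_2$ chain, and that two $S_1$ sites neighbouring along the diagonal (positions $k$ and $k+1$) block overlapping quadruples sharing two common chain sites. Checking this requires careful use of the torus identifications; for example, the two $S_2$-neighbours $(0,-1)$ and $(-1,0)$ of $(0,0)$ coincide on the torus with $(1,1)$ and $(0,2)$ via $\pm\vec v$, and it is this identification that causes the four blocked sites to be consecutive on the chain. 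This is the main technical obstacle: until this geometric bookkeeping is done, the translation between $S_1$-configurations and residual chain segments is not transparent.

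Once the geometry is in hand, fix $c$ and let $\{k_1<k_2<\cdots<k_\ell\}$ be the occupied $S_1$ positions on the cyclic diagonal. The sites blocked by the occupied $S_1$'s partition the rest of $S_2$ into open sub-chains, one per gap. A gap of size $g=k_{i+1}-k_i$ produces a residual open chain of length $2g-4$ for $g\geq 2$, a degenerate empty chain for $g=1$, and, if no $S_1$ site is occupied, a single periodic chain of length $2m$ with no breaks. Since $Q_2$ acts independently on disjoint chains, $H_{Q_2}$ restricted to the sector $\mathcal{H}_c$ is the tensor product of the $Q$-cohomologies of these pieces.

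Now I would invoke Theorem \ref{chaincohom}: the periodic chain always has non-trivial cohomology, and an open chain of length $L$ has trivial cohomology iff $L\equiv 1\pmod 3$. Thus the sector $\mathcal H_c$ contributes to $H_{Q_2}$ iff no residual open segment has $2g-4\equiv 1\pmod 3$. Reducing modulo $3$, $2g-4\equiv 1$ iff $g\equiv 1\pmod 3$ iff $g-1\equiv 0\pmod 3$, and $g-1$ is precisely the number of empty $S_1$ sites strictly between two consecutive occupied $S_1$ sites. This gives the claimed characterization: non-trivial cohomology classes correspond to exactly those $S_1$-configurations in which no run of empty $S_1$ sites between two occupied ones has length a (positive) multiple of three, completing the proof up to the ambiguity in multiplicity that arises in the all-empty case (which is handled trivially through the periodic-chain entry of Theorem \ref{chaincohom}).
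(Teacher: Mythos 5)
Your proof is correct and follows essentially the same route as the paper's: occupied $S_1$ sites each block four consecutive sites of the periodic $S_2$ chain, cutting it into open chains of length $2l-2$ for a gap of $l$ empty $S_1$ sites, and Theorem \ref{chaincohom} kills exactly the sectors where some such length is $\equiv 1 \pmod 3$, i.e.\ $l$ a positive multiple of three. Your extra care with the sector decomposition, the torus geometry, and the degenerate $l=0$ case only makes explicit what the paper leaves to the figure.
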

\begin{proof}
The proof is relatively simple. First note that when a site on
$S_1$ is occupied, it blocks four subsequent sites on the $S_2$
chain (see fig. \ref{fig:onechain}). By occupying sites on $S_1$
the periodic $S_2$ chain is cut into smaller pieces of chain with
open boundary conditions. Consequently, $H_{Q_2}$ vanishes when at
least one of these smaller pieces has length $3p+1$. This happens
when the number of empty sites between two occupied sites on $S_1$
is a multiple of three. We conclude that all configurations on
$S_1$ are allowed except for configurations with a multiple of
three sites empty between two occupied sites. 
\end{proof}

\vspace{0.5cm}
\noindent
{\bf Step 2}

\vspace{0.1cm}
\noindent
This step in the proof is the most involved. In the next section, where we prove theorem \ref{tm:HQ} in all
generality, we will often refer back to the results obtained in this step. In this step we compute
$H_{Q_1}(H_{Q_2})$, where $H_{Q_2}$ was obtained in the previous step. Let us define $f_1$ and $f_2$ as the number of fermions on $S_1$ and $S_2$
respectively. Furthermore we shall adopt the following notation: an empty site on $S_1$ is denoted by 0 and
an occupied site is denoted by 1. A configuration on $S_1$ can then be written as a series of 1's and
0's. In the following we shall consider all possible types of configurations on $S_1$ that belong to
$H_{Q_2}$ and we shall investigate if they also belong to $H_{12}$.

If we consider a configuration on $S_1$, we note that there have to be at least two adjacent, empty $S_1$
sites to allow for $f_2>0$. This is because two adjacent empty sites leave an open chain of two sites unblocked on $S_2$
and this has an element in $H_{Q_2}$ with $f_2=1$. A typical configuration will thus consist of alternating segments, where a
segment is a sequence of $S_1$ sites. The segments are characterized
by the number of fermions on the part of the $S_2$ chain corresponding to the segment, this is either zero or
greater than zero. In a segment with $f_2>0$ all $S_1$ sites
are empty and it contains at least two sites. On the other hand, a segment with $f_2=0$ can have empty sites
on $S_1$, but the empty sites cannot be adjacent. Finally, a segment with $f_2=0$ will always start and end
with an occupied $S_1$ site. We will call this pair of occupied sites a pair of bounding sites. Note that a
segment with $f_2=0$ can consist of a single occupied site, in that case the bounding sites fall on top of
each other and the pair of bounding sites is just this one site.

\begin{example}
Consider the configuration "$1101101010000100$", there are two segments with $f_2=0$, formed by the first
nine sites and the fourteenth site respectively. There are also two segments with $f_2>0$ constituted by the rest of the
sites. Finally, the first and the ninth site form a pair of bounding sites.
\end{example}

First, we consider the segments with $f_2>0$.
\begin{lemma}\label{Q1zerowithinHQ2}
$Q_1$ acting on a segment with $f_2>0$ gives zero within $H_{Q_2}$.
\end{lemma}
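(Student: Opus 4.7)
The plan is to produce a specific representative of the segment's $H_{Q_2}$-class on which the part of $Q_1$ that fills an empty $S_1$ site inside the segment is \emph{literally} zero by hard-core exclusion, and then pass from ``literally zero'' to ``zero in $H_{Q_2}$'' by a short cohomological argument.

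I would fix the representative using Lemma \ref{chainrepr}: if the segment has $k\geq 2$ empty $S_1$ sites and its $S_2$ chain piece has length $\ell$ (with $\ell\not\equiv 1\pmod 3$, otherwise $f_2=0$ by Lemma \ref{lm:notinHQ2}), take $\ket{\sigma}$ to have its $j=\lceil\ell/3\rceil$ fermions at chain positions $2,5,\ldots,3j-1$. The elementary facts used throughout are: every four consecutive chain positions contain at least one position $\equiv 2\pmod 3$, which is therefore a fermion position; the boundary pair $\{\ell-1,\ell\}$ contains a position $\equiv 2\pmod 3$ whenever $\ell\not\equiv 1\pmod 3$; and position $2$ is always a fermion position (since $j\geq 1$).

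Next I would combine this with the geometric content of Lemma \ref{lm:notinHQ2}: occupying an $S_1$ site blocks four consecutive positions on the $S_2$ chain. For the torus $\vec v=(1,2)$, $\vec u=(m,-m)$, two adjacent $S_1$ sites inside a segment share two of their four $S_2$ neighbors on the chain, so upon filling an empty $S_1$ site at segment position $s\in\{1,\ldots,k\}$ the chain positions newly blocked are $\{1,2\}$ for $s=1$, $\{\ell-1,\ell\}$ for $s=k$, and $\{2s-3,2s-2,2s-1,2s\}$ for interior $s$ (the boundary cases cover only two positions because the other two neighbors are already blocked by a bounding $S_1$ fermion). By the combinatorial observations above, in every case the blocked set contains a fermion of the representative, so the projector $P_{<i>}$ in $c_i^\dagger P_{<i>}$ annihilates $\ket{\sigma}$ at every empty $S_1$ site $i$ in the segment. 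Summing, the in-segment part of $Q_1\ket{\sigma}$ is literally zero.

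Finally, any other representative differs from $\ket{\sigma}$ by a $Q_2$-exact term supported on the chain piece; by locality the in-segment pieces $Q_1^{\rm seg}$ and $Q_2^{\rm seg}$ still satisfy $\{Q_1^{\rm seg},Q_2^{\rm seg}\}=0$, so $Q_1^{\rm seg}$ maps $Q_2^{\rm seg}$-exact states to $Q_2^{\rm seg}$-exact states. Hence the in-segment contribution to $Q_1\ket{\sigma}$ is $Q_2$-exact for every representative, i.e.\ zero in $H_{Q_2}$, which is the lemma. The only step that demands more than a line is the geometric enumeration above of blocked chain positions for each $s$; once that antidiagonal bookkeeping (implicit in Definition \ref{def:S1S2}) is in hand, the modular arithmetic check and the cohomological tidying are immediate.
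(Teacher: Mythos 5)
Your route is genuinely different from the paper's and, for segments bounded by a pair of bounding sites, it works. The paper never exhibits a representative killed by $Q_1$; instead it computes the length(s) of the $S_2$ chain(s) \emph{after} the $S_1$ site is filled ($L'=L-2$ at the boundary of the segment, $L_1'+L_2'=L-4$ in the interior) and checks, case by case mod $3$, that the fermion number never matches the unique grade at which the open-chain cohomology is non-vanishing; hence $Q_1\ket{\sigma}$ lies in $\ker Q_2$ at a bidegree where $H_{Q_2}=0$, so it is $Q_2$-exact and the induced map vanishes. You instead choose the representative $\ket{2,5,\dots,3j-1}$ of Lemma \ref{chainrepr} and observe that every empty $S_1$ site of the segment has, among its unblocked chain neighbours ($\{1,2\}$, $\{\ell-1,\ell\}$, or four consecutive interior positions), a position $\equiv 2 \pmod 3$, which is occupied; your modular bookkeeping is correct, and the passage from ``literally zero on one representative'' to ``zero in $H_{Q_2}$'' via $\{Q_1^{\rm seg},Q_2\}=0$ is the standard well-definedness of the induced differential. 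Your version buys an explicit cocycle annihilated by $Q_1$ (which is reused implicitly in Step 3); the paper's version buys robustness, since it needs no good representative. (Small slip: if $\ell\equiv 1\pmod 3$ the configuration is not in $H_{Q_2}$ at all, rather than having ``$f_2=0$''.)

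There is, however, a genuine gap: you have silently assumed the segment is bounded by a pair of occupied $S_1$ sites, whereas the lemma (and the paper's proof, in its final paragraph) must also cover the segment with $f_2>0$ that extends over the \emph{entire} system, i.e.\ all $S_1$ sites empty and the $S_2$ chain periodic of length $2m$. This case is needed later — it is the ``all zeroes'' motif contributing to $\Delta$. Your strategy does not transfer automatically: Lemma \ref{chainrepr} is for open chains, and for a periodic chain of length $2m\not\equiv 0\pmod 3$ a single basis state with $[2m/3]$ fermions and no window of four consecutive empty sites is necessarily \emph{not} a maximal independent set in the crystalline form $\ket{2,5,\dots}$; one must exhibit a dominating independent set (gaps of size $1$ and $2$ only) \emph{and} argue it is not $Q_2$-exact before the ``literal zero'' argument applies. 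The paper's counting argument disposes of this case in one line ($L=6k-2$ with $2k-1$ fermions goes to $L'=6k-6$, whose cohomology lives only at $2k-2$). You should either add that computation or supply the missing periodic-chain representative together with a proof of its non-exactness.
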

\begin{proof}
Suppose this segment between a pair of bounding sites consists of $l$ empty $S_1$ sites. The corresponding
$S_2$ chain then has length $L=2l-2$. Since $l=3k\pm1$, we find $L=6k$ or $L=6k-4$. For these chain lengths
the elements of the cohomology of $Q_2$ have $2k$ and $2k-1$ fermions respectively. We now distinguish two
cases: a) $Q_1$ acts on a site at the boundary of the segment, b) $Q_1$ acts on a site away from the
boundary.

a) In this case the length of the $S_2$ chain in the new configuration is $L'=L-2$. Thus $L'=6k-2$ or
$L'=6k-6$. On the new chain there are still $2k$ or $2k-1$ fermions respectively. However, theorem
\ref{chaincohom} states that the cohomology for chain length $6k-2$ ($6k-6$) vanishes at all fermion numbers
except $f=2k-1$ ($f=2k-2$). Thus the new configuration does not belong to $H_{Q_2}$ and it follows that this
action of $Q_1$ within $H_{Q_2}$ gives zero.

b) In this case the action of $Q_1$ cuts the $S_2$ chain into two smaller chains of lengths $L_1'$ and
$L_2'$. Their total length is $L_1'+L_2'=L-4$, since the occupied $S_1$ site now blocks 4 sites on the
$S_2$ chain. For $L=6k$ we have $L_1'=3k_1$ and $L_2'=3k_2+2$ or $L_1'=3k_1+1$ and
$L_2'=3k_2+1$, where in both cases $k_1+k_2=2k-2$. For the latter case $H_{Q_2}$ vanishes at all grades. In
the first case $H_{Q_2}$ is non-vanishing only for $f=k_1+k_2+1=2k-1$. However, the number of fermions on the
$S_2$ chains in the new configuration is $f=2k$ and thus is does not belong to $H_{Q_2}$. Similarly, one
finds that for $L=6k-4$, the new configuration does not belong to $H_{Q_2}$. Again we obtain that this
action of $Q_1$ within $H_{Q_2}$ gives zero.

Finally, if the segment with $f_2>0$ extends over the entire system, we are always in the case considered
under b). However, the original chain length can now also be $L=6k-2$ with $2k-1$ fermions on it. Under the
action of $Q_1$ we obtain a new chain of length $L'=6k-6$, which has a non-vanishing cohomology if and only
if $f=2k-2$. So also in this case we find that the action of $Q_1$ gives zero within $H_{Q_2}$.

\end{proof}

Second, we consider the segments with $f_2=0$.
\begin{lemma}\label{lm:HQ1}
$H_{Q_1}(H_{Q_2})$ vanishes when the number of $S_1$ sites between any pair of bounding sites in a segment
with $f_2=0$, is $3p+1$ and it contains one element otherwise.
\end{lemma}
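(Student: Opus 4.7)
The plan is to reduce the cohomology of a single $f_2=0$ segment to the open-chain cohomology of Theorem \ref{chaincohom}. Let the segment contain $n$ $S_1$-sites in total, with the first and last (the bounding sites) both occupied.

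First I would describe the $H_{Q_2}$ classes carried by the segment. Any $S_1$-configuration with $f_2=0$ must have no two adjacent empty $S_1$-sites, because otherwise an $S_2$-piece of length $\geq 2$ would contribute $f_2>0$. Within the segment all induced $S_2$-pieces then have length $0$, so each such $S_1$-configuration lifts uniquely to a representative of $H_{Q_2}$, and the $H_{Q_2}$-basis on the segment is simply the set of allowed $S_1$-patterns. The action of $Q_1$ fills an interior empty site; this preserves non-adjacency of the remaining empty sites and hence descends to a well-defined operator on this basis.

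Second, I would identify this restricted complex with the supersymmetric chain model on the open chain of length $n-2$ consisting of the interior $S_1$-sites. Under the correspondence ``interior empty $S_1$-site $\leftrightarrow$ fermion on the chain'', the no-adjacency constraint becomes the hard-core condition and $Q_1$ (which removes an empty $S_1$-site) plays the role of $Q_{\text{chain}}^{\dag}$ (which removes a chain fermion). Since $\dim H_Q = \dim H_{Q^{\dag}}$ on a chain (both count zero-energy ground states of $H_{\text{chain}}=\{Q_{\text{chain}},Q_{\text{chain}}^{\dag}\}$, by the singlet/doublet argument of Section \ref{physintro}), Theorem \ref{chaincohom} applied to the open chain of length $n-2$ gives the restricted $H_{Q_1}$ as zero when $n-2 = 3j+1$ and one-dimensional otherwise. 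Rewriting $n-2$ as the number of $S_1$-sites strictly between the two bounding sites recovers exactly the dichotomy stated in the lemma, with $p=j$. The degenerate cases $n=1$ or $n=2$ (no interior sites, $Q_1$ acts trivially) contribute one class each, consistently with $n-2 \in \{-1,0\}$ not being of the form $3p+1$.

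The main obstacle is bookkeeping rather than conceptual: one must check that the decoupling of a single $f_2=0$ segment from the rest of the lattice is clean at the level of $H_{Q_2}$ (so that $Q_1$ acting on the segment does not mix with neighbouring segments within the $H_{Q_2}$ quotient) and that the identification with $Q_{\text{chain}}^{\dag}$ is sign-consistent in the fermionic signs of $Q_1$. Once this verification is done, the lemma follows immediately from Theorem \ref{chaincohom} applied to the chain $P_{n-2}$.
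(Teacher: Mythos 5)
Your proposal is correct and follows essentially the same route as the paper: both map empty $S_1$ sites in the $f_2=0$ segment to hard-core fermions on an open chain (of length equal to the number of interior sites between the bounding pair), identify $Q_1$ with $Q^{\dag}_{\text{chain}}$, and invoke Theorem \ref{chaincohom} together with $\dim H_Q=\dim H_{Q^{\dag}}$. Your explicit treatment of the degenerate cases and the bookkeeping caveats are sensible additions, but the core argument coincides with the paper's.
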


\begin{example}
Consider a configuration with one empty site between a pair of bounding sites: "$101$", this is not an
element of $H_{Q_1}(H_{Q_2})$, since $Q_1$ on this configuration gives "$111$", which is also in $H_{Q_2}$.
Now consider two sites between a pair of bounding sites. Then there are two configuration with one empty
site: "$1011$" and "$1101$" and one configuration with all sites occupied "$1111$" (remember that the
configuration "$1001$" does not have $f_2=0$). It follows that $Q_1$ acting on ("$1101$" $-$
"$1011$") gives $2$"$1111$", whereas $Q_1$ acting on ("$1101$" $+$ "$1011$") gives zero\footnote{Note
that the fermionic character of the particles is reflected in the sign here: $Q_1$ acting on "$1011$"
gives $-$"$1111$", whereas $Q_1$ acting on "$1101$" gives $+$"$1111$". In the first case the particle
is created on position 2 and has to hop over the particle at position 1, this gives a minus sign, in the
second case the new particle is created at position 3 and thus has to hop over two particles, giving rise
to no overall sign change. Also note that the states are not properly normalized, but this is not important
for the argument.}. Consequently, we find that $H_{Q_1}(H_{Q_2})$ consists of one element: the sum of
the configurations with $f_1=3$.
\end{example}

\begin{proof}
We can solve $H_{Q_1}(H_{Q_2})$ for an arbitrary number of sites
between a pair of bounding sites, by realizing that this problem
can be mapped to the normal chain. For the normal chain no two
fermions can be adjacent, whereas here no two empty $S_1$ sites
can be adjacent. So we can map empty $S_1$ sites to fermions on
the chain and occupied $S_1$ sites to empty sites in the normal
chain. Finally, $Q_1$ is mapped to $Q^{\dag}$ on the normal chain.
For the chain $H_{Q^{\dag}}$ (which has the same dimension as
$H_{Q}$) vanishes when the length of the chain is $3p+1$ and it
contains one element otherwise. So here we have that
$H_{Q_1}(H_{Q_2})$ vanishes when the number of sites between two
occupied sites is $3p+1$ and it contains one element otherwise.

\end{proof}
For a segment with $f_2=0$, let us denote the representative of $H_{Q_1}(H_{Q_2})$ by the pair of
bounding sites with dots in between, for example we denote ("$1101$" $+$ "$1011$") by "$1\cdot \cdot 1$".
Even though, this is now a sum of configurations, we will still refer to this simply as a
configuration. It follows that, for a segment with $f_2=0$, two types of configurations are allowed.
The two types can be distinguished by containing $3s-1$ dots or $3s$ dots. Examples of
the first type are: "$1$", "$1 \cdot \cdot 1$", "$1 \cdot \cdot \cdot \cdot \cdot1$", etc. Note that the
configuration with $s=0$, and thus with -1 dots between the pair of bounding sites, is "$1$". Examples of
the second type are: "$11$", "$1 \cdot \cdot \cdot 1$", "$1 \cdot \cdot \cdot \cdot \cdot \cdot 1$", etc.

Combining lemma's \ref{Q1zerowithinHQ2} and \ref{lm:HQ1}, we find
that $H_{Q_1}(H_{Q_2})$ is spanned by all
configurations that can be formed by concatenating the following motifs:\\
"$000$"\\
"$1\cdot_{3s-1}100$"\\
"$1\cdot_{3s}100$"\\
"$1\cdot_{3s-1}10000$"\\
"$1\cdot_{3s}10000$"\\
where $\cdot_{3s}$ means $3s$ dots and, as before, "$1\cdot_{3s-1}1$" with $s=0$ means "$1$".

Finally one can also have all zeroes for any length and all dots for any length. Note that if the number of
$S_1$ sites is a multiple of three, that both the configuration with all zeroes and the one with all dots
account for two linearly independent elements of $H_{12}$. This is because the cohomology of $Q$ acting the
periodic chain with length a multiple of three has dimension two (see theorem \ref{chaincohom}).

\begin{example}\label{ex:zz10}
As an example, suppose we have $\vec{v}=(1,2)$ as always and $\vec{u}=(10,-10)$. This implies that $S_1$
consists of 10 sites and with the defined motifs it follows that the following 12 elements belong to
$H_{12}$: "$1100 000 000$", "$1100 000 100$", "$1100 100 000$", "$1100 100 100$", "$1100 110000$",
"$1100\ 1 \cdot \ \cdot 100$", "$1 \cdot \cdot \ \cdot \cdot \cdot \ \cdot 100$", "$1 \cdot \cdot \ \cdot
100 \ 000$", "$1 \cdot \cdot \ \cdot 100 \ 100$", "$10000 10000$", "$0000000000$" and "$\cdot \cdot \cdot
\cdot \cdot \cdot \cdot \cdot \cdot \cdot$". Note that the first nine motifs have periodicity 10 and thus
account for ten elements of $H_{12}$ each, whereas the motif "$10000 10000$" has periodicity 5 and the
last two motifs have periodicity 1. For each element one can easily compute the number of fermions and it
follows that the first nine motifs have 7 fermions, the motif "$10000 10000$" has 6 fermions and the
last two motifs have again 7 fermions. So in total we have 92 elements in $H_{12}$ with 7 fermions and 5
elements with 6 fermions.
\end{example}

\vspace{0.5cm}
\noindent
{\bf Step 3}

\vspace{0.1cm}
\noindent
In the previous step we have obtained $H_{12}$ for
$\vec{v}=(1,2)$. In this step we show that in this case this is
equal to the cohomology of $Q$. We do this via the 'tic-tac-toe'
procedure \cite{botttu}. That is, we act on a configuration, say
$\ket{\psi}$, with $Q$. The $Q_2$ part will automatically give
zero, but the $Q_1$ part not necessarily, since we no longer
restrict ourselves to the subspace $H_{Q_2}$. If it does give
zero, we know that the configuration belongs to the kernel of $Q$.
The configuration will thus belong to $H_Q$ unless it also belongs
to the image of $Q$. In that case, another configuration will map
to this configuration at the end of the 'tic-tac-toe' procedure.
So we continue with the configurations, $\ket{\psi_0}$, that do
not belong to the kernel of $Q_1$. Since the image of
$\ket{\psi_0}$ does not belong to $H_{Q_2}$ and it does belong to
the kernel of $Q_2$, it must also belong to the image of $Q_2$. So
we can write $Q \ket{\psi_0}=Q_2 \ket{\phi}$, for some
configuration $\ket{\phi}$. Now let us define a new state
$\ket{\psi_1} \equiv \ket{\psi_0}- \ket{\phi}$. It then follows
that $Q \ket{\psi_1}=- Q_1 \ket{\phi}$. If this is zero, we have
found that the state $\ket{\psi_1}$ belongs to the kernel of $Q$.
If it is non-zero we proceed as before: we try to find a
configuration $\ket{\chi}$, such that $Q_1 \ket{\phi}=Q_2
\ket{\chi}$ and define a new state $\ket{\psi_2}\equiv
\ket{\psi_0}- \ket{\phi}+\ket{\chi}$, etc. This procedure ends,
either when we have found a state $\ket{\psi_n}$ such that $Q
\ket{\psi_n}=0$, or when $Q \ket{\psi_n}=\ket{\tilde{\psi}}$ with
$\ket{\tilde{\psi}}$ an element of $H_{Q_1}(H_{Q_2})$. In the
latter case, we say $\ket{\psi_0}$ maps to $\ket{\tilde{\psi}}$ at
the end of the 'tic-tac-toe' procedure and we conclude that
neither $\ket{\psi_0}$ nor $\ket{\tilde{\psi}}$ belong to $H_Q$.

For the case we consider in this section, we will show that for
each element $\ket{\psi_0}$, there is an element $\ket{\psi_n}$
that belongs to the kernel of $Q$. So for each element in $H_{12}$
we can find a corresponding element in $H_Q$, thus we obtain
$H_Q=H_{12}$. In the next section, however, we will see that this
is not true for general boundary conditions. We will then find
that after several steps in the 'tic-tac-toe' procedure we map
certain configurations in $H_{12}$ to other configurations in
$H_{12}$. It follows that the first do not belong to the kernel of
$Q$ and the latter belong to the image of $Q$. In that case $H_Q$
is strictly smaller than $H_{12}$.

\begin{lemma}\label{lem:allinkerQ1}
$H_Q=H_{12}$ for $\vec{v}=(1,2)$ and $\vec{u}=(m,-m)$.
\end{lemma}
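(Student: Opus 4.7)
The plan is to run the tic-tac-toe iteration just described on a representative $\ket{\psi_0}$ of each $H_{12}$ class --- equivalently on each configuration obtained by concatenating the five motifs catalogued in Step 2 together with the all-zero and all-dot global configurations --- and to show that the iteration always terminates at a $Q$-closed state $\ket{\psi_n}$, rather than at a state whose $Q$-image represents a nonzero class of $H_{12}$. If this holds for every representative, every $H_{12}$ class lifts to an $H_Q$ class and one concludes $H_Q=H_{12}$.

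\noindent First I fix a motif-built $\ket{\psi_0}$. By construction $Q_2\ket{\psi_0}=0$, and because $\ket{\psi_0}$ represents a $Q_1$-class in $H_{Q_2}$ one has $Q_1\ket{\psi_0}=Q_2\ket{\phi_1}$ for some $\ket{\phi_1}$. Setting $\ket{\psi_1}=\ket{\psi_0}-\ket{\phi_1}$ yields $Q\ket{\psi_1}=-Q_1\ket{\phi_1}$, and one iterates. At each stage $Q_1\ket{\phi_j}$ is $Q_2$-closed, since $Q_2 Q_1 \ket{\phi_j} = -Q_1 Q_2 \ket{\phi_j} = -Q_1^2 \ket{\psi_{j-1}} = 0$, and the question is whether it is $Q_2$-exact (so the iteration continues with $\ket{\phi_{j+1}}$ satisfying $Q_2\ket{\phi_{j+1}}=Q_1\ket{\phi_j}$) or represents a nonzero $H_{Q_2}$ class (which would obstruct the lift).

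\noindent The main obstacle is termination: showing that after finitely many steps one reaches $Q_1\ket{\phi_n}=0$ strictly, not merely modulo $\mathrm{im}\,Q_2$. I would handle this locally, motif by motif, exploiting the fact that for $\vec{v}=(1,2)$ the sublattice $S_2$ is a single periodic chain. By Lemma \ref{Q1zerowithinHQ2}, $Q_1$ applied to a dot-segment already vanishes in $H_{Q_2}$, so the cascade is triggered only at the "$0$"-sites of the trailing "$00$" or "$0000$" buffer of a motif. One then chooses $\ket{\phi_j}$ with support confined to the $S_2$-stretch of that motif; the cascade propagates along this stretch until it reaches the bounding "$1$" of the next motif, at which point the truncated open $S_2$-chain has length $2$ or $6$ (never $3\ell+1$), so by Theorem \ref{chaincohom} the cohomology at the relevant grade forces the residual obstruction to vanish exactly. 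Concatenating these local resolutions over all motifs of $\ket{\psi_0}$ yields a $Q$-closed lift of the chosen $H_{12}$ class, establishing $H_Q=H_{12}$.
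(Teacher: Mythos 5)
Your setup of the tic-tac-toe iteration matches the paper's, and you correctly locate the difficulty in termination: one must end at a state that is annihilated by $Q$ exactly, not merely modulo $\mathrm{im}\,Q_2$. But the mechanism you invoke to force termination does not do the job. You argue that when the cascade reaches the bounding ``$1$'' of the next motif ``the truncated open $S_2$-chain has length $2$ or $6$ (never $3\ell+1$), so by Theorem \ref{chaincohom} the cohomology at the relevant grade forces the residual obstruction to vanish exactly.'' Vanishing of $H_{Q_2}$ at the relevant grade only tells you the obstruction is $Q_2$-\emph{exact}; that is precisely the condition that lets the cascade \emph{continue} (you can subtract another $\ket{\phi_{j+1}}$), not the condition that makes $Q_1\ket{\phi_n}$ strictly zero. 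So your argument proves at most that the iteration never gets stuck, which leaves open exactly the dangerous outcome for the lemma: that it terminates by landing on a nonzero class of $H_{12}$, in which case $H_Q$ would be strictly smaller than $H_{12}$ (as indeed happens for general $\vec v$ in the paper's Step 3 of Section \ref{sec:pchains}). The specific claim that the truncated chains have length $2$ or $6$ is also unsupported --- the $S_2$ sub-chain under a segment with $l$ empty $S_1$ sites has length $2l-2$, i.e.\ $6k$ or $6k-4$ for arbitrary $k$.

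The ingredient you are missing is the paper's cancellation argument. After one application of $Q_1$ per segment with $f_2>0$ (each such application being $Q_2$-exact by Lemma \ref{Q1zerowithinHQ2}), any \emph{second} application of $Q_1$ to the same segment produces, for each pair of newly occupied $S_1$ sites, two terms corresponding to the two orders in which those sites were filled; by fermion antisymmetry these cancel identically. Hence the procedure gives exactly zero after as many steps as there are segments with $f_2>0$, and every $H_{12}$ class lifts to $\ker Q$. Separately, note a misattribution in your write-up: Lemma \ref{Q1zerowithinHQ2} concerns the segments with $f_2>0$ (the all-``$0$'' stretches of $S_1$), not the dot-segments; on the dot-segments ($f_2=0$) it is Lemma \ref{lm:HQ1} that guarantees $Q_1$ annihilates the chosen representative outright. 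Your conclusion that the cascade is triggered only at the ``$0$''-sites is right, but for the reasons just stated rather than the one you give.
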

\begin{proof}
For the segments with $f_2=0$, we found that $Q_1$ vanishes if we
choose the states represented by the dots such that they are
ground states of the normal chain with empty and occupied sites
exchanged (see lemma \ref{lm:HQ1}). For the segments with $f_2>0$,
we know from lemma \ref{Q1zerowithinHQ2} that the new
configuration always belongs to the image of $Q_2$. That is, $Q_1
\ket{\psi_0}=Q_2 \ket{\phi}$, for some configuration $\ket{\phi}$
if $Q_1$ acts on a segment with $f_2>0$. So we can define a new
configuration $\ket{\psi_1} \equiv \ket{\psi_0}- \ket{\phi}$, such
that $Q \ket{\psi_1}=- Q_1 \ket{\phi}$. Now $Q_1$ either acts on a
different segment with $f_2>0$, in which case the new
configuration again belongs to the image of $Q_2$, or it acts on
the same segment. In the latter case the new configuration is
cancelled by the same configuration in which the two $S_1$ sites
are occupied in the reverse order due to the fermionic character
of the particles. It thus follows that the 'tic-tac-toe' procedure
always gives zero after as many steps as there are segments with
$f_2>0$. 
\end{proof}

\vspace{0.5cm}
\noindent
{\bf Step 4}
\nopagebreak

\vspace{0.1cm}
\noindent
In this final step we will show that the dimension of $H_{Q}$ (and
the fermion number of each state) can be computed by counting all
tiling configurations (and the number of tiles per configuration)
with the four types of tiles depicted in figure \ref{fig:rhombi}.
For the boundary conditions we consider here, the tilings reduce
to a single layer sequence of only two types of tiles. Namely the
two tiles that respect the boundary condition $\vec{v}=(1,2)$.
These tiles have two edges parallel to $(1,2)$ and then the
diamond has the other two edges parallel to $(1,-2)$ whereas the
square has the other edges parallel to $(2,-1)$ (see fig.
\ref{fig:zztiles}). Given the sublattices $S_1$ and $S_2$ there
are three types of vertices: the ones that belong to $S_1$, the
lower left sites of the $S_2$-chain and the upper right sites of
the $S_2$ chain. It follows that the diamond has one of three
types of edges along the $(1,-2)$ direction and a matching type of
edge along the $(1,2)$ direction, the square can have one of three
different types of edges along the $(2,-1)$ direction and a
matching type of edge along the $(1,2)$ direction. We conclude
that we have 6 types of tiles, depicted in figure
\ref{fig:tiletypeszigzag}.

\begin{figure}[h!]
     \centering
     \subfigure[]
     {\label{fig:zztiles}\includegraphics[width=.18\textwidth]{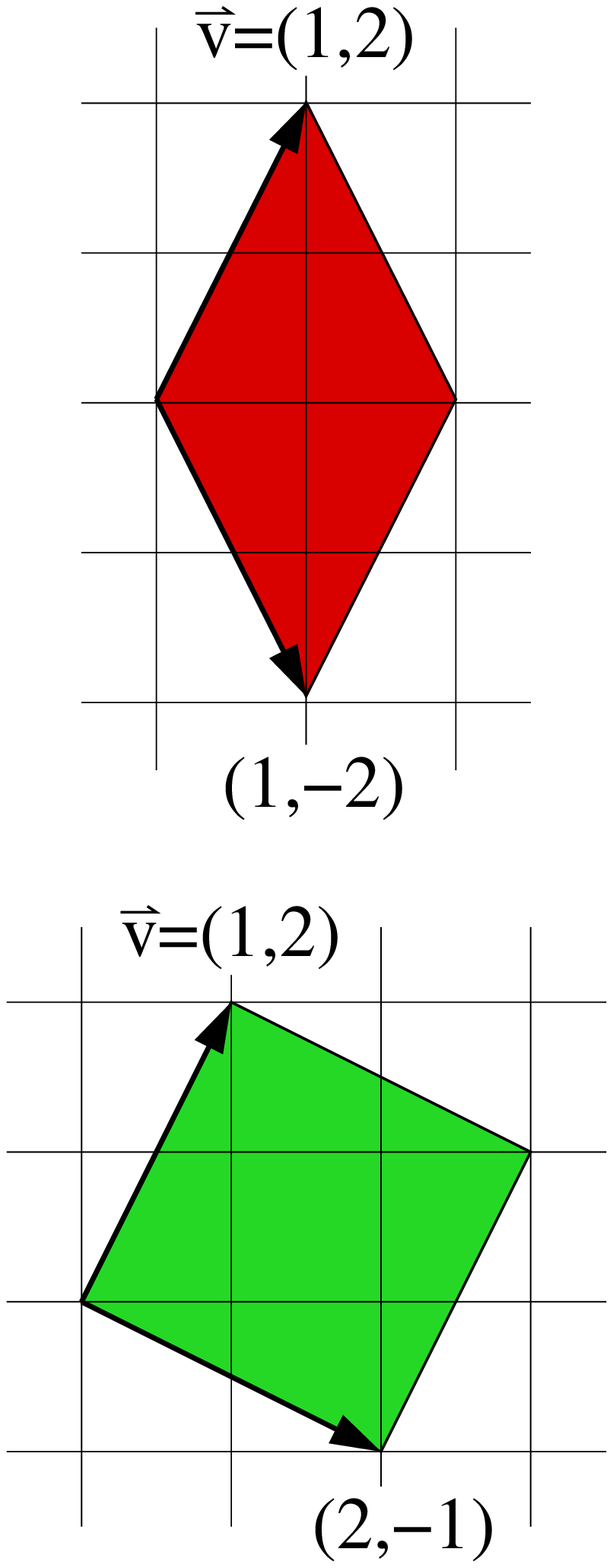}}
     \subfigure[]
     {\label{fig:tiletypeszigzag}\includegraphics[width=.39\textwidth]{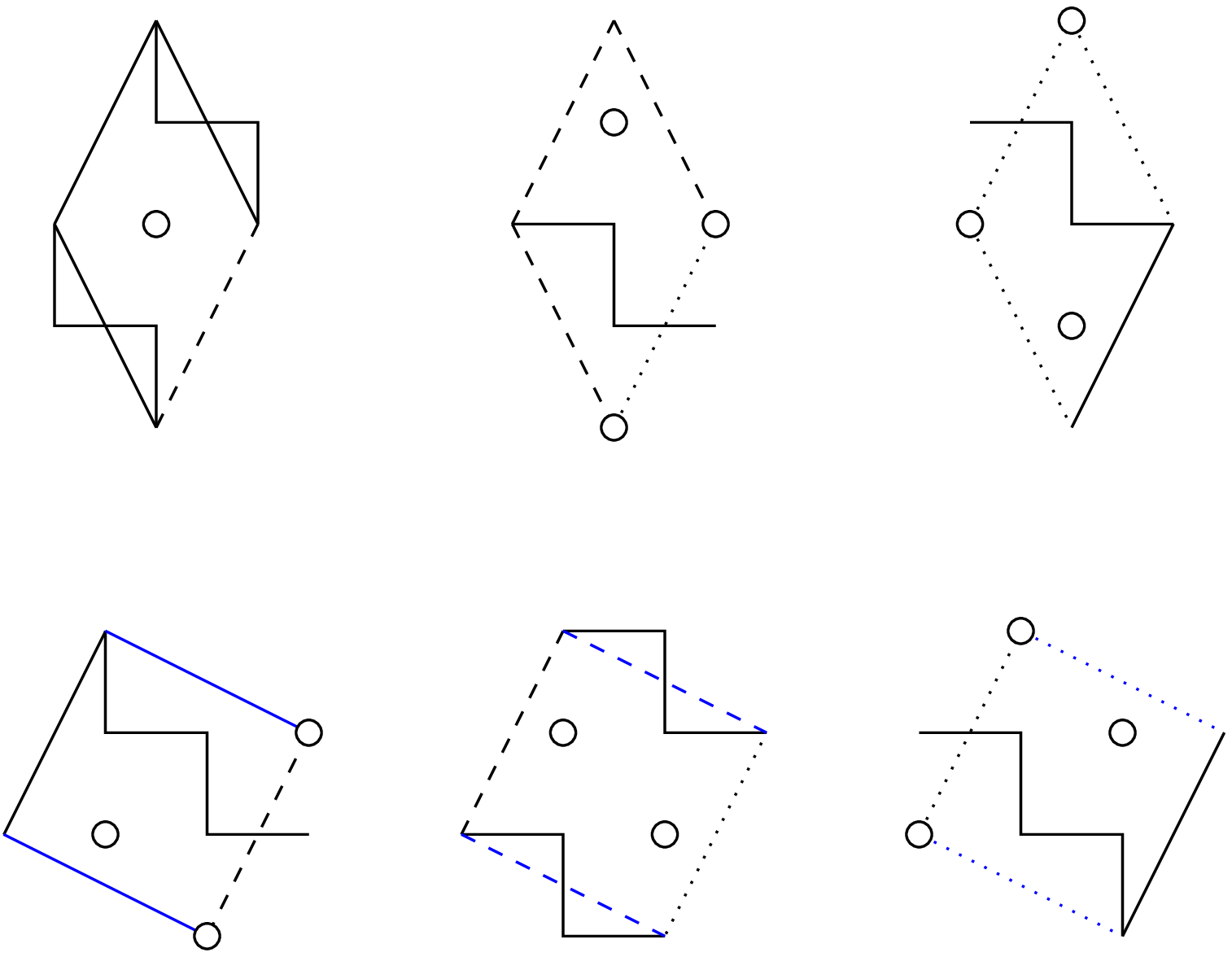}}
     \caption{Types of tiles we use to tile the square lattice with periodicities $\vec{v}=(1,2)$ and
     $\vec{u}=(m,-m)$. \ref{fig:zztiles} Shows the diamond and square that respect $\vec{v}=(1,2)$.
     \ref{fig:tiletypeszigzag} Shows the three types of diamonds and squares given the sublattices $S_1$ and $S_2$.}
     \label{fig:zztilesall}
\end{figure}

To establish theorem \ref{tm:HQ} we map each of the motifs
obtained in step 2 to a unique sequence of tiles. The mapping for
the four basis motifs, "$100$", "$1100$", "$10000$" and
"$110000$", is shown in figure \ref{fig:tilingmapzigzag}. Remember
that each motif is modulo the addition of 3 zeroes and modulo the
insertion of 3 dots. In terms of tilings, we find that each basis
motif can be followed by an arbitrary repetition of the tiling
corresponding to the 3 zeroes (see fig. \ref{fig:zeroesdotszz}).
On the other hand, insertions of multiples of 3 dots correspond to
inserting multiples of the tiling shown in figure
\ref{fig:zeroesdotszz} at the dotted line along the $(1,2)$
direction in the basis motifs. Some examples are shown in figure
\ref{fig:dotsinserted}. Note that here we cannot easily write the
motif of dots directly in the tiles (see fig.
\ref{fig:dotsinserted}), however, the mapping is still
unambiguous.

\begin{figure}[h!]
\begin{center}
\includegraphics[width= .9\textwidth]{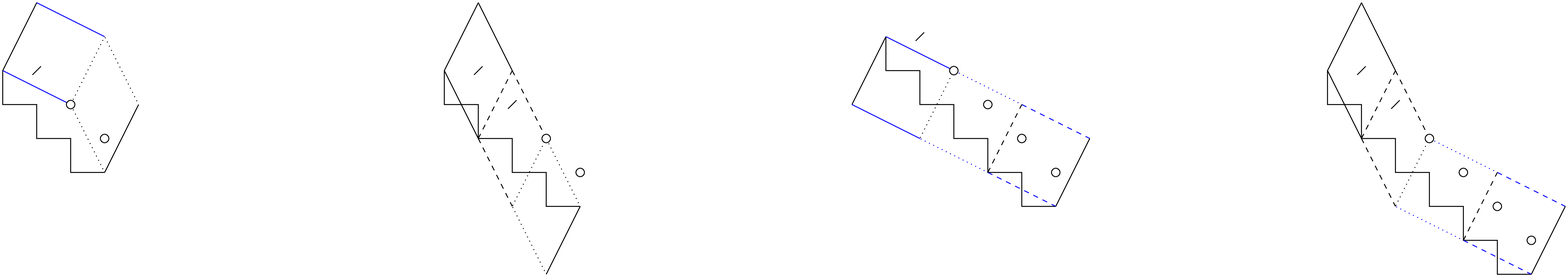}
\caption{The four basis motifs and the corresponding sequences of tiles.}\label{fig:tilingmapzigzag}
\end{center}
\end{figure}

\begin{figure}[h!]
     \centering
     \subfigure[]%
     {\label{fig:zeroesdotszz}\includegraphics[height=4cm]{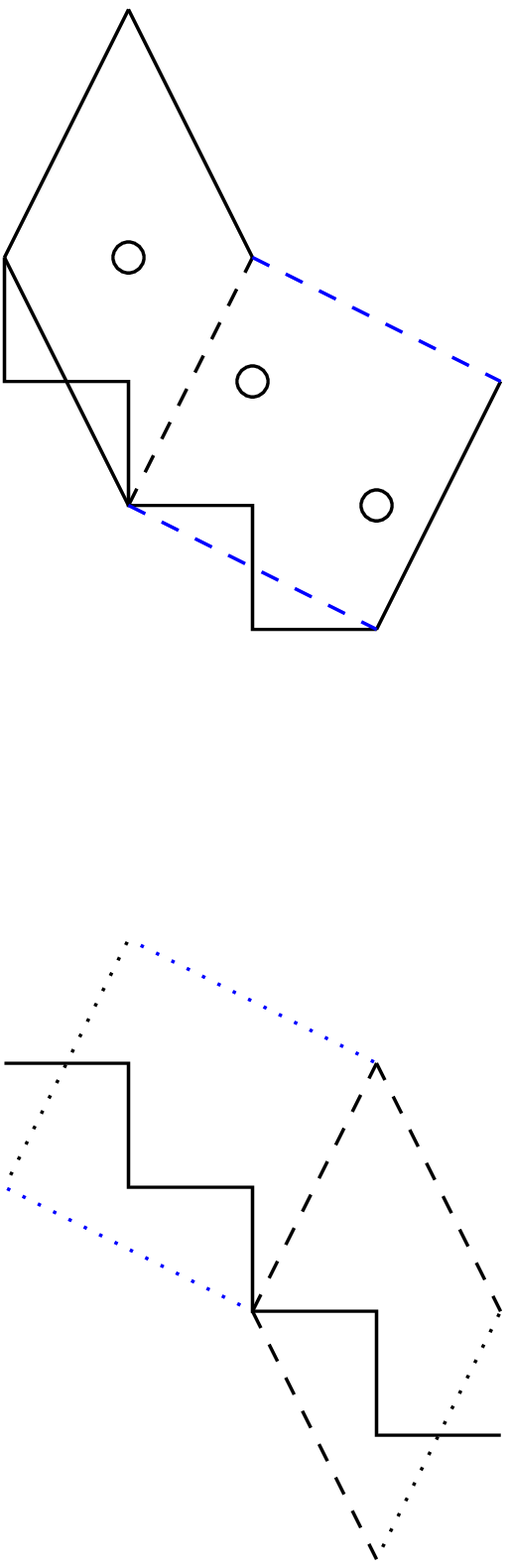}}
     \hspace{2cm}
     \subfigure[]%
     {\label{fig:dotsinserted}\includegraphics[height=5cm]{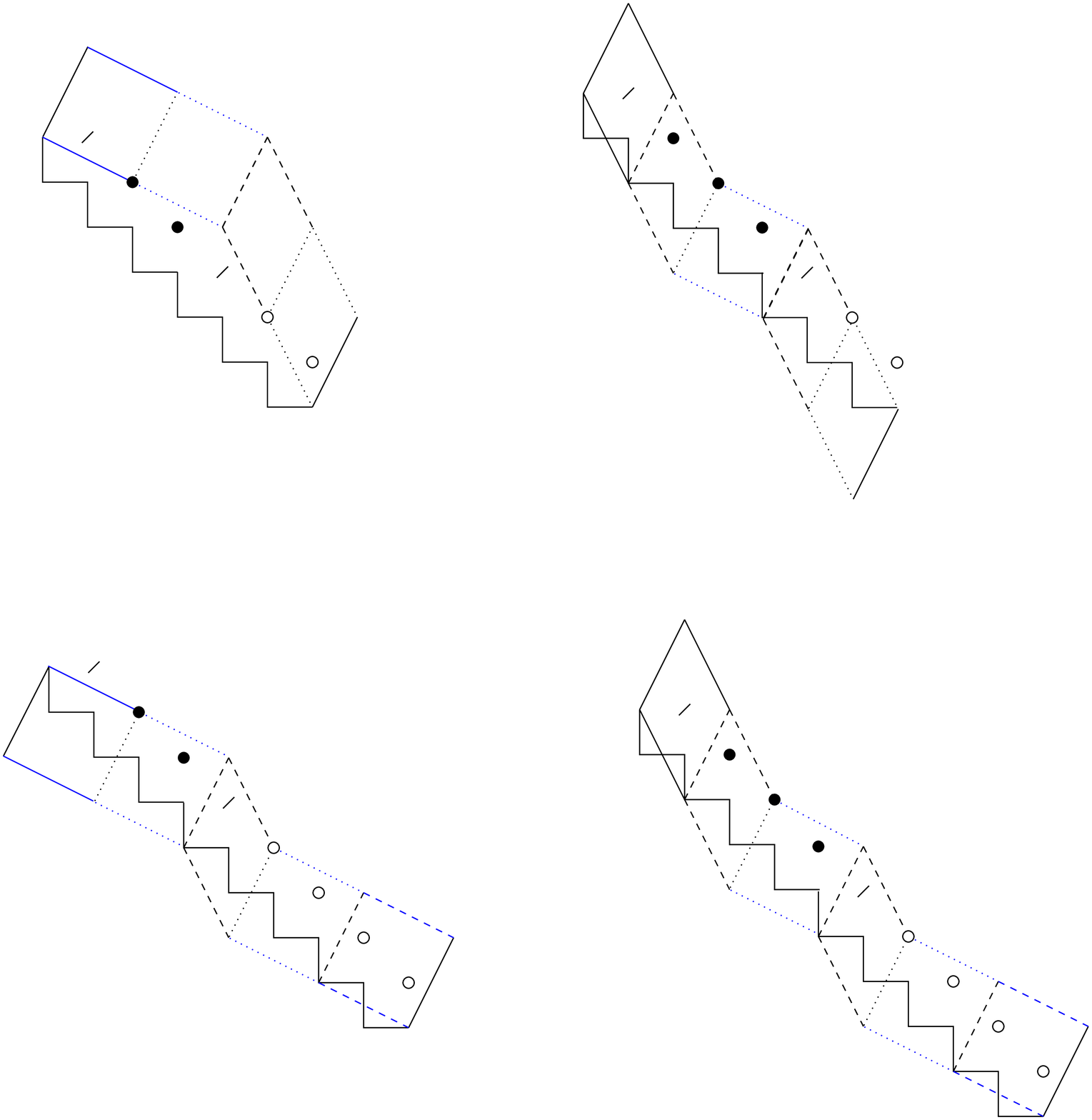}}
     \caption{On the left we show the sequences of tiles that correspond to the motifs with 3 zeroes (top)
     and 3 dots (bottom). The addition of 3 zeroes to a basis motif corresponds to attaching the
     sequence of tiles corresponding to the 3 zeroes to the sequence of tiles corresponding to the basis
     motif. An insertion of 3 dots in a basis motif corresponds to inserting the sequence of tiles corresponding to the 3
     dots at the dotted line along the $(1,2)$ direction in the sequence of tiles corresponding to the basis
     motif. The insertions of 3 dots in each of the four basis motifs and the corresponding tiling are shown on
     the right as examples. From these examples it is clear that we cannot write the 3 dots directly in the
     corresponding sequence of tiles. However, the mapping is still unambiguous.}
\end{figure}

Let us determine the number of fermions per motif. First of all, in a segment with $f_2>0$, the number of
fermions is determined by the length of the corresponding $S_2$ chain. It is easily verified, that for a
segment with $n$ empty $S_1$ sites the corresponding chain has length $2n-2$. Moreover, from theorem
\ref{chaincohom}, we know that an element in the cohomology of $Q$ on a chain with length $L=2n-2$ contains
$[(2n-2)/3]$ fermions, where $[a]$ is the nearest integer to $a$. Similarly, we find that a segment with $k$
dots contains $[2k/3]$ fermions. Thus a segment with $f_2=0$, consisting of $k$ dots and a the pair of bounding
sites, contains $[2k/3]+2$ fermions. From these formulae we find for the four basis motifs "$100$", "$1100$",
"$10000$" and "$110000$", that they contain 2, 3, 3 and 4 fermions respectively. Furthermore, an insertion of
3 zeroes, corresponds to increasing $n$ by 3, and thus increasing the number of fermions, $[(2n-2)/3]$, by 2.
Equivalently, inserting 3 dots corresponds increasing $k$ by 3, and thus again increasing the number of fermions,
$[2k/3]$, by 2. If we compare this to the number of tiles in the tilings that correspond to these motifs, we
find that they exactly agree. Furthermore, the number of sites in a motif is given by three times the number
of $S_1$ sites in a motif, since there are 2 $S_2$ sites for every $S_1$ site. On the other hand, for the tiles
we find that the area of the diamond is 4 and the area of the square is 5. It is now easily verified that the
number of fermions per site for the motifs is the same as the number of tiles per area for the corresponding
tiling. Thus we find that, not only is the number of elements in the cohomology of $Q$
directly related to the number of tilings with the two tiles of figure \ref{fig:zztiles}, but also the number
of fermions for each element corresponds to the number of tiles in the corresponding tiling.

One can verify that with these sequences of tiles,
and the rules for concatenating them, one can obtain every possible tiling. Each tile can be preceded by a
certain type of square and diamond and it can be followed by another type of square and diamond. In total
this gives four possibilities for the surrounding neighbors. It can be checked that for each tile all four
possibilities can be constructed with the given sequences of tiles and the rules for concatenating them.

Finally, the configurations with all zeroes or all dots account for the extra term in (\ref{eq:HQ}) in
theorem \ref{tm:HQ} repeated here for convenience:
\beq
\Delta_i \equiv \left\{ \begin{array}{ll}
-(-1)^{(\theta_m+1)p}\theta_{d} \theta_{d*} & \textrm{if } i=[2m/3]p\\
0 & \textrm{otherwise.}
\end{array} \right.
\eeq
Remember that
\beq
\theta_d \equiv \left\{ \begin{array}{ll}
2 & \textrm{if $d=3k$, with $k$ integer}\\
-1 & \textrm{otherwise}
\end{array} \right.
\eeq
and with $\vec{v}=(1,2)$ and $\vec{u}=(m,-m)$ we have $p=1$, $d=\gcd(u_1-u_2,v_1-v_2)=\gcd(2m,-1)=1$ and
$d^*=\gcd(u_1+u_2,v_1+v_2)=\gcd(0,3)=3$. It follows that the extra term is $-2$ for $m=3n$ and $+2$
otherwise.

Let us consider the configuration with all zeroes, which clearly has periodicity 1. If the number of zeroes is a
multiple of three, i.e. $m=3n$, the configuration accounts for 2 ground states, otherwise it accounts for
1 ground state. The number of fermions in this configuration is $i=[2m/3]$, i.e. the nearest integer to
$2m/3$. From the mapping (fig. \ref{fig:zeroesdotszz}) it is clear that the configuration corresponds to a tiling
with periodicity 3 if $m=3p$. If $m\neq3p$, however, there is no corresponding
tiling. Exactly the same holds for the configuration with all dots. It follows that for $m=3p$ the tilings
overcount the number of ground states by 2 and for $m\neq3p$ the tilings fail to count 2 ground states.

Note that the choice of sublattices $S_1$ and $S_2$ has increased the number of tilings unrelated by a
lattice symmetry by a factor of three
(see fig. \ref{fig:zztilesall}). Indeed when computing the number of ground states with the motifs given
in step 2 it turns out that one discovers each tiling three times (given that the tiling is not completely
uniform, that is all diamonds or all squares).

\begin{example}
Let us consider the case of example \ref{ex:zz10} again. So we have $\vec{v}=(1,2)$ and $\vec{u}=(10,-10)$.
One possibility is to cover the lattice with 6 squares. This tiling has a unit cell of size 5 and thus
this tiling accounts for 5 ground states. The number of tiles is 6 and thus the ground states will have 6
fermions. We can also cover the lattice with 2 squares and 5 diamonds. The 2 squares can be placed between
the diamonds in three independent ways. Each of these three tilings has a unit cell of size 30 and consists
of 7 tiles, so they account for 90 ground states with 7 fermions.

We compare this with the 12 configurations found in example
\ref{ex:zz10}. The motif "$10000 10000$" has periodicity 5 and
accommodates 6 fermions, so this corresponds to the uniform tiling
with all squares. The configurations with all zeroes and all dots
account for two ground states with 7 fermions and have no
corresponding tiling. Finally, there are 9 configurations with
periodicity 10 and 7 fermions, which account for 90 ground states.
Using the mapping given in figure \ref{fig:tilingmapzigzag}, we
find that these configurations can be split into three groups of
three, each group corresponding to one of the tilings with 2
squares and 5 diamonds. For example the motifs "$1100 000 100$",
"$1100 110000$" and "$1100\ 1 \cdot \ \cdot 100$" correspond to
the tiling where the two squares are adjacent. They can be
distinguished by considering for example the first of the two
squares. In each motif it will be of a different type, where the
three types are given in figure \ref{fig:tiletypeszigzag} (see
fig. \ref{fig:zz10}).
\end{example}

\begin{figure}[h!]
\begin{center}
\includegraphics[width= .65\textwidth]{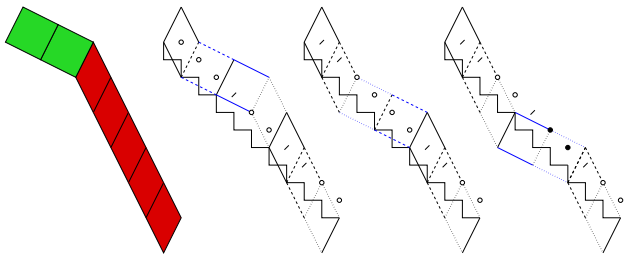}
\caption{The square lattice with periodicities $\vec{v}=(1,2)$ and $\vec{u}=(10,-10)$ can be tiled with 2 squares
and 5 diamonds. One of these tilings, with the two squares adjacent is shown on the left. The choice of
sublattices splits this tiling into three tilings. These three tilings and their corresponding motifs are shown on the right.}
\label{fig:zz10}
\end{center}
\end{figure}


\subsubsection{The general case: $S_2$ consisting of $p$ chains}\label{sec:pchains}
In the previous section we had $\vec{v}=(1,2)$. In this section we relax this condition to
$\vec{v}=(v_1,v_2)$ with $v_1+v_2=3p$ with $p$ a positive integer. It follows that we get $p$ $S_2$
chains with their accompanying $S_1$ sites stacked on top of each other. For this situation we will prove
theorem \ref{tm:HQ}. The proof consists of 5 steps:
\begin{itemize}
\item[1.] We compute $H_{Q_2}$.
\item[2.] We compute $H_{12}=H_{Q_1}(H_{Q_2})$.
\item[3.] We compute $H_{Q}$ starting from $H_{12}$ via the 'tic-tac-toe' procedure.
\item[4.] We relate the elements of $H_{Q}$ to tiling configurations by relating each motif to a
small series of tiles.
\item[5.] We compute $\Delta_i$.
\end{itemize}

\vspace{0.5cm}
\noindent
{\bf Step 1}

\vspace{0.1cm}
\noindent
As in the previous section we shall start by computing the cohomology of $Q_2$. We will define two types of
configurations that do not belong to $H_{Q_2}$ and then find that $H_{Q_2}$ consists of all configurations
except these two types.

\begin{lemma}\label{lm:notinHQ2-a}
A configuration that contains an occupied site $(k,l)$ on the $S_1$ lattice, such that the sites $(k+1,l+2)$
and $(k+2,l+1)$ and/or the sites $(k-1,l-2)$ and $(k-2,l-1)$ are empty, does not belong to $H_{Q_2}$.
\end{lemma}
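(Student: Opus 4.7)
The strategy is to show that either hypothesis forces an isolated single $S_2$-site to appear in the effective chain structure, and then invoke theorem~\ref{chaincohom}, which states that an open chain of length $L=3j+1$ (hence in particular $L=1$) has trivial $Q$-cohomology. Since $Q_2$ preserves the $S_1$-configuration, the complex splits into sectors indexed by the $S_1$-configuration, and inside each sector $H_{Q_2}$ factorises as a tensor product over the connected components of the \emph{free} $S_2$-sites (those whose $S_1$-neighbours are all empty); a single vanishing factor kills the entire sector. A site adjacent to an occupied $S_1$-vertex is \emph{blocked} and forced to be empty, and such blocked sites are what cut the periodic $S_2$-chains into smaller open chains.

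I would focus on the site $(k+1,l+1)$. Writing $k+l=3s$ (because $(k,l)\in S_1$), one has $(k+1)+(l+1)=3s+2$, so $(k+1,l+1)\in S_2$. Its four lattice neighbours are $(k+1,l)$ and $(k,l+1)$, both in $S_2$, and $(k+1,l+2)$ and $(k+2,l+1)$, both in $S_1$. The occupied vertex $(k,l)$ blocks the two $S_2$-neighbours, while the first hypothesis says that the two $S_1$-neighbours are empty. Consequently $(k+1,l+1)$ is itself free but has no free $S_2$-neighbour, so it forms a one-site connected component of the effective chain. By theorem~\ref{chaincohom} this component contributes a vanishing factor to $H_{Q_2}$, and the entire $S_1$-sector vanishes.

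The second hypothesis is handled identically, up to reflection through $(k,l)$: the site $(k-1,l-1)$ is then the isolated free $S_2$-site, its $S_2$-neighbours $(k-1,l)$ and $(k,l-1)$ being blocked by $(k,l)$ and its $S_1$-neighbours $(k-1,l-2)$ and $(k-2,l-1)$ empty by assumption. The argument is entirely local and I do not expect a real obstacle. The only point worth verifying is that $(k+1,l+1)$ has no further lattice neighbours on $S_2$ beyond the two listed, which is immediate because any lattice neighbour differs from it by a unit vector and thus shifts $x+y$ by $\pm 1$, exhausting all possibilities among the four listed. The lemma therefore reduces to a single application of the chain cohomology result of theorem~\ref{chaincohom}.
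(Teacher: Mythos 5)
Your proposal is correct and follows the same route as the paper: the paper's (very terse) proof likewise observes that the occupied $S_1$ site $(k,l)$ together with the empty $S_1$ sites $(k+1,l+2)$, $(k+2,l+1)$ (resp.\ $(k-1,l-2)$, $(k-2,l-1)$) isolates the single $S_2$ site $(k+1,l+1)$ (resp.\ $(k-1,l-1)$), which can be either empty or occupied and therefore contributes a vanishing factor to $H_{Q_2}$. Your write-up simply makes explicit the sublattice membership check, the neighbour enumeration, and the appeal to theorem~\ref{chaincohom} with $L=1=3\cdot 0+1$, all of which the paper leaves implicit.
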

\begin{proof}
It is easily verified (see fig. \ref{fig:lmnotinHQ2a}) that in
this configuration the $S_2$ sublattice contains the isolated
site(s) $(k+1,l+1)$ and/or $(k-1,l-1)$. This site can be either
occupied or empty, which leads to a vanishing $H_{Q_2}$. 
\end{proof}

\begin{figure}[h!]
\begin{center}
\includegraphics[width= .2\textwidth]{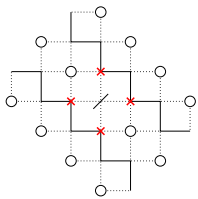}
\caption{A configuration is shown where an occupied $S_1$ site is surrounded by empty sites. This sites
isolates a site on the $S_2$ chains directly below and above the site.}
\label{fig:lmnotinHQ2a}
\end{center}
\end{figure}

Note that in the previous section this situation never occurred because for each occupied site $(k,l)$, the
sites $(k+1,l+2)$ and $(k-1,l-2)$ were automatically occupied due to the boundary condition set by
$\vec{v}=(1,2)$. The second type of configuration that does not belong to $H_{Q_2}$ follows from a
generalization of lemma \ref{lm:notinHQ2}. Remember that occupying $S_1$ sites causes the $S_2$ chains to
break into smaller open chains. The length of these open chains now depends on the number of empty $S_1$
sites directly below and above the $S_2$ chain. For an example see figure \ref{fig:notinHQ2}.

\begin{lemma}\label{lm:notinHQ2-b}
If, for a certain configuration, the sum of the number of empty $S_1$ sites directly below and above an open $S_2$
chain is $3s$, the configuration does not belong to $H_{Q_2}$.
\end{lemma}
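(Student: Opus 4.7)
My strategy mirrors the proof of Lemma \ref{lm:notinHQ2}: the goal is to exhibit an open $S_2$ chain segment in the configuration whose length $\ell$ satisfies $\ell\equiv 1\pmod 3$, since Theorem \ref{chaincohom} (open chain case) then says the cohomology of $Q$ on that segment is trivial in every fermion-number sector. Because $Q_2$ acts independently on each connected component of the $S_2$ sublattice left unblocked by the occupied $S_1$ sites, the total $H_{Q_2}$ factorizes as a tensor product over these open segments, so the vanishing of a single factor drives the entire configuration out of $H_{Q_2}$.

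The first step is a geometric chain-length count. Along an open $S_2$ chain segment bounded on both ends by blockings from occupied $S_1$ sites, I would parameterize the $S_2$ vertices against the $m$ empty $S_1$ sites directly above and the $n$ empty $S_1$ sites directly below. A local bijection between empty $S_1$ sites and $S_2$ vertices, together with a subtraction accounting for the two boundary $S_2$ vertices killed by the bounding occupied $S_1$ sites, should yield the length formula
\begin{equation}
\ell \;=\; m+n-2,
\end{equation}
generalizing the identity $L=2l-2$ used in the zigzag case in the proof of Lemma \ref{Q1zerowithinHQ2} (where the above and below counts are both equal to $l$ by periodic identification). Under the hypothesis $m+n=3s$, this gives $\ell = 3(s-1)+1$, so Theorem \ref{chaincohom} applies and annihilates the factor of $H_{Q_2}$ corresponding to this segment.

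The main obstacle is verifying the length formula in full generality. Unlike the zigzag subcase of Section \ref{sec:zz}, for $v_1+v_2=3p$ with $p>1$ the $S_1$ rows above and below an $S_2$ chain are genuinely distinct, so one must carefully parameterize the $S_2$ vertices against both neighboring rows and confirm that no vertex is missed or double-counted and that the boundary subtraction is indeed $-2$ (rather than, say, $-1$ or $-3$) regardless of whether each end is blocked from above or from below. A diagram in the spirit of Fig.~\ref{fig:notinHQ2} or Fig.~\ref{fig:lmnotinHQ2a}, with the $S_2$ chain drawn horizontally and the two $S_1$ rows labeled along it, should render this bookkeeping transparent; after that, the reduction to Theorem \ref{chaincohom} via the tensor-product factorization of $H_{Q_2}$ is immediate.
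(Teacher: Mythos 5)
Your proposal follows exactly the same route as the paper: show that the isolated open $S_2$ chain bounded by the occupied $S_1$ sites has length $m+n-2=3(s-1)+1$ and then invoke the open-chain result of Theorem \ref{chaincohom} (vanishing cohomology for length $3j+1$) to kill the corresponding factor of $H_{Q_2}$. The paper dismisses the length count as "easily verified" where you flag it as the step needing a careful diagram, but the content and the conclusion are identical, so the proposal is correct.
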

\begin{proof}
It is easily verified that the open $S_2$ chain corresponding to the $3s$ empty $S_1$ sites has length
$3(s-1)+1$. This leads to a vanishing $H_{Q_2}$. 
\end{proof}

\begin{figure}[h!]
\begin{center}
\includegraphics[width= .4\textwidth]{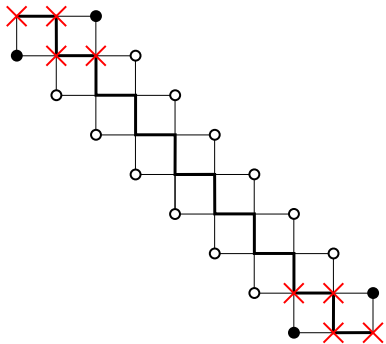}
\caption{Part of a configuration is shown. The number of empty $S_1$ sites directly below and above the $S_2$
chain is 12. The $S_2$ sublattice thus contains an isolated chain of length 10. Consequently, this configuration
does not belong to $H_{Q_2}$.}
\label{fig:notinHQ2}
\end{center}
\end{figure}

A configuration does not belong to $H_{Q_2}$ if it contains one or more isolated open chains on the
sublattice $S_2$ with length $3p+1$. It is easy to see that all such configurations fall into the class of
configurations described in lemma \ref{lm:notinHQ2-a}, or lemma \ref{lm:notinHQ2-b}, or both. It follows that
all configurations that do not fall into either of these classes belong to $H_{Q_2}$.

\vspace{0.5cm}
\noindent
{\bf Step 2}

\vspace{0.1cm}
\noindent
As in the previous section, we will now compute $H_{12}=H_{Q_1}(H_{Q_2})$.
\begin{definition}
Define a row of $S_1$ sites as the set of $S_1$ sites directly above one $S_2$ chain.
\end{definition}

Note that the configurations in $H_{Q_2}$ again contain segments where $f_2$, the number of fermions on the
$S_2$ sublattice, is zero and segments where it is non-zero.

\begin{lemma}\label{lm:noS1occ}
Lemma \ref{lm:HQ1} for $H_{12}$ holds for each row of $S_1$ sites.
\end{lemma}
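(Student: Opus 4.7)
The plan is to reduce the multi-row situation to the single-row analysis already carried out in Lemma \ref{lm:HQ1}. First I would observe that $S_1$ sites in different rows are never adjacent on the lattice, since any two such sites are separated by an intervening $S_2$ chain. Consequently, the action of $Q_1$ on a site in one row does not interfere with the $S_1$ occupancies of any other row, so the cohomology computation can be examined row by row.

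Next I would fix a single row and consider a segment of that row bounded by two occupied $S_1$ sites, with $f_2=0$ on the portions of both adjacent $S_2$ chains lying strictly between the bounding sites. The allowed $S_1$ configurations inside this segment are exactly those in which no two empty $S_1$ sites are adjacent along the row: two adjacent empty $S_1$ sites would leave a two-site open $S_2$ chain-segment that supports a fermion, contradicting $f_2=0$. I would then invoke the same bijection as in Lemma \ref{lm:HQ1}: empty $S_1$ sites inside the segment are identified with fermions on an ordinary open chain of length equal to the number of interior $S_1$ sites, occupied $S_1$ sites are identified with empty sites of that chain, and $Q_1$ is identified with $Q^{\dagger}$.

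The non-routine check is that $Q_1$ acting on a representative does not leave $H_{Q_2}$, i.e. does not produce a configuration excluded by Lemma \ref{lm:notinHQ2-a} or Lemma \ref{lm:notinHQ2-b}. Since the hypothesis $f_2=0$ on both adjacent $S_2$ chain-segments implies that creating a new fermion at an interior $S_1$ site merely shortens each of those open chain-segments by blocking two further sites, the check reduces to local chain-length bookkeeping modulo $3$, closely parallel to the argument of Lemma \ref{Q1zerowithinHQ2}. I expect this to be the main obstacle, but it is essentially mechanical once the lengths are tracked carefully. With it in hand, Theorem \ref{chaincohom} applied to the effective open chain yields vanishing cohomology when the number of $S_1$ sites between the bounding sites equals $3p+1$ and exactly one non-trivial class otherwise, which is precisely the statement of Lemma \ref{lm:HQ1} applied row by row.
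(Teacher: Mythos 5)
Your argument is correct and is essentially the paper's: the paper's entire proof of this lemma is the remark that the single-row argument of Lemma \ref{lm:HQ1} (empty $S_1$ sites $\to$ fermions on an ordinary open chain, occupied sites $\to$ empty sites, $Q_1 \to Q^{\dag}$) applies verbatim to each row, which is precisely the reduction you carry out. The supporting observations you add --- that $S_1$ sites in different rows are never adjacent, and that occupying an interior site of an $f_2=0$ segment cannot leave $H_{Q_2}$ because the adjacent $S_2$ chain portions are already fully blocked --- are exactly the (unstated) justifications the paper relies on.
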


That is, in the segments where $f_2=0$, the cohomology of $Q_1$ vanishes when the number of $S_1$ sites
between any pair of bounding sites is $3p+1$ and it contains one element otherwise. The proof can be found in
the previous section. It follows that, in the
segments where $f_2=0$, two types of configurations on a row of $S_1$ sites are allowed. Using the notation of
the previous section, the two types can be distinguished by containing $3s-1$ dots or $3s$ dots.

\begin{lemma}\label{lm:segmentwidths}
The configurations in $H_{12}$ have spatially separated columnar
segments where $f_2=0$ and segments where $f_2>0$. The width of a
column in a segment where $f_2=0$ can vary between $3s+1$ and
$3s+2$ $S_1$ sites, whereas the width of a column in a segment
where $f_2>0$ can vary between $3p-1$ and $3p+1$ $S_1$ sites. In
the latter case, two consecutive rows never both have width $3p$
and the difference in their widths is at most 1 (or -1).
\end{lemma}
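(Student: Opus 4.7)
The plan is to combine three inputs established earlier in this section: Lemma \ref{lm:noS1occ} (which extends Lemma \ref{lm:HQ1} to each row of $S_1$ sites and thus pins down the per-row form of $f_2=0$ segments), Lemma \ref{lm:notinHQ2-a} (which forces vertical persistence of occupied $S_1$ sites), and Lemma \ref{lm:notinHQ2-b} (which supplies a mod-$3$ obstruction tied to each open $S_2$ chain). Together these carve out the allowed column widths.

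First I would establish the columnar decomposition. By Lemma \ref{lm:notinHQ2-a}, any occupied site $(k,l)$ must have at least one of its two ``successors'' $(k+1,l+2)$ and $(k+2,l+1)$ occupied in the $S_1$ row above, and analogously in the row below. Thus each bounding site of an $f_2=0$ segment propagates into a vertical wall traversing all $p$ rows, cutting the configuration into spatially separated, alternating $f_2=0$ and $f_2>0$ columnar segments. The $f_2=0$ widths are then immediate from Lemma \ref{lm:noS1occ}: row by row each $f_2=0$ segment has the form $1\cdot_{3s-1}1$ or $1\cdot_{3s}1$, giving widths $3s+1$ or $3s+2$ $S_1$ sites respectively.

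For the row-to-row width change I would parametrize $S_1$ sites in each row by a site index $j$ with consecutive sites differing by $1$. A direct calculation shows the two allowed successors of $(k,l)$ correspond to $j\mapsto j+1$ and $j\mapsto j+2$, so both the left and right walls of any column shift by $1$ or $2$ per row. Writing a column's width as a difference of wall positions then forces $|w_{i+1}-w_i|\le 1$, establishing the ``difference at most $1$'' clause for both column types. Next, within an $f_2>0$ column Lemma \ref{lm:notinHQ2-b} applied to the open $S_2$ chain between rows $i$ and $i+1$ yields $w_i+w_{i+1}\not\equiv 0\pmod 3$. Taking $w_i=w_{i+1}=3p$ gives $6p\equiv 0\pmod 3$, excluding this pair and proving the ``never both $3p$'' clause.

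The delicate remaining claim is that $f_2>0$ widths fall in the specific window $\{3p-1,3p,3p+1\}$ rather than some translated range of three consecutive integers. I expect this to follow from a wall-trajectory analysis exploiting the periodicity $\vec{v}$: over $p$ rows each wall must close up with a total shift of $v_1$ in site index (so the relation $v_1+v_2=3p$ enters directly), and this closure condition, combined with the per-row $\{+1,+2\}$ step budget and the mod-$3$ exclusion above, should pin the widths to exactly the claimed three-value window. The main obstacle of the proof is precisely this range pin-down; by contrast the columnar decomposition, the per-row form of $f_2=0$ segments, the ``difference at most $1$'' bound and the $(3p,3p)$ exclusion all follow directly from the three previous lemmas.
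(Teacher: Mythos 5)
Your proposal follows the paper's route exactly: the paper's own proof of this lemma is literally ``combine Lemmas \ref{lm:notinHQ2-a}, \ref{lm:notinHQ2-b} and \ref{lm:noS1occ},'' and your expansion of that combination (vertical propagation of bounding sites from Lemma \ref{lm:notinHQ2-a}, per-row form of the $f_2=0$ segments from Lemma \ref{lm:noS1occ}, the mod-$3$ sum constraint $w_i+w_{i+1}\not\equiv 0 \pmod 3$ from Lemma \ref{lm:notinHQ2-b}, and the step-size bound $|w_{i+1}-w_i|\le 1$ from the $\{+1,+2\}$ wall shifts) is correct.

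The one point you flag as ``the main obstacle'' --- pinning the $f_2>0$ widths to the specific window $\{3p-1,3p,3p+1\}$ --- is not actually an obstacle, and you do not need the closure/periodicity argument you sketch (which in any case would only give $\sum_i(w_{i+1}-w_i)=0$ around the torus, a tautology for a closed column that does not constrain the range). The window claim follows immediately from the two local constraints you have already established, by a three-case check: from $w_i=3p-1$ the pair constraint kills $w_{i+1}=3p-2$ and the step bound kills everything below, so $w_{i+1}\in\{3p-1,3p\}$; from $w_i=3p$ the pair constraint kills $w_{i+1}=3p$, leaving $\{3p-1,3p+1\}$; from $w_i=3p+1$ the pair constraint kills $w_{i+1}=3p+2$, leaving $\{3p,3p+1\}$. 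Hence the set $\{3p-1,3p,3p+1\}$ is invariant under row transitions and the width of a given column never leaves the window determined by its value in any one row. With that observation your proof is complete and coincides with the paper's.
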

\begin{proof}
This follows from combining lemma's \ref{lm:notinHQ2-a}, \ref{lm:notinHQ2-b} and \ref{lm:noS1occ}. 
\end{proof}
An example is shown in figure \ref{fig:column}. From lemma \ref{lm:segmentwidths} it follows that we only have to
consider columns of width varying between 1 and 2 in the segments where $f_2=0$ separated by columns of width
varying between 2 and 4 in the segments where $f_2>0$. All other configurations can be obtained from these
configurations by inserting multiples of 3 dots in the segments where $f_2=0$ over the entire height of the
columns, and, similarly, by inserting multiples of 3 zeroes in the segments where $f_2>0$ over the entire
height of the columns.

\begin{figure}[h!]
\begin{center}
\includegraphics[width= .6\textwidth]{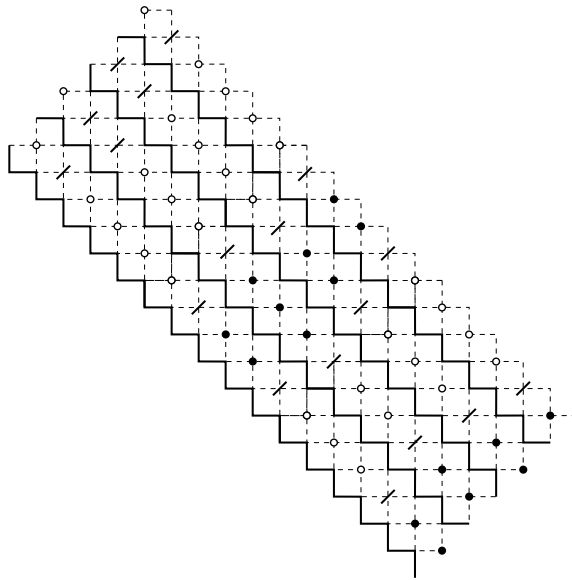}
\caption{Part of a configuration is shown.}
\label{fig:column}
\end{center}
\end{figure}

We now turn to the segments where $f_2>0$. Remember that in the previous section this step was easy
because all $S_1$ sites in the segment where $f_2>0$ were blocked by fermions on the $S_2$ chain. Here,
however, that is not the case. The first thing we note in this case is the following.

\begin{lemma}
The $S_1$ sites within a column marking a segment where $f_2>0$ have to be empty if they are away from the
boundaries with adjacent columns marking a segment where $f_2=0$.
\end{lemma}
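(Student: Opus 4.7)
The plan is to proceed by contradiction, invoking the local obstructions established in lemmas \ref{lm:notinHQ2-a} and \ref{lm:notinHQ2-b}. Suppose some configuration $\ket{\psi}$ representing a class in $H_{12}$ has an occupied $S_1$ site at $(k,l)$ lying in the interior of an $f_2>0$ column, meaning that $(k,l)$ is not among the outermost $S_1$ sites in its row that are adjacent to the neighboring $f_2=0$ columns.

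First I would show that the four diagonal $S_1$ neighbors of $(k,l)$, at $(k\pm 1,l\pm 2)$ and $(k\pm 2,l\pm 1)$, all sit inside the same $f_2>0$ column. This follows from lemma \ref{lm:segmentwidths}: the $f_2>0$ column has width at least $3p-1$ in every row, the interior condition keeps $(k,l)$ strictly away from both horizontal boundaries, and the row-to-row widths differ by at most one, so the four diagonals remain inside the column even where its boundary is slightly jagged.

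Next I would distinguish two cases. Either at least one of the pairs $\{(k+1,l+2),(k+2,l+1)\}$ and $\{(k-1,l-2),(k-2,l-1)\}$ consists of two empty $S_1$ sites, in which case lemma \ref{lm:notinHQ2-a} applies directly and yields $\ket{\psi}\notin H_{Q_2}$; or both pairs contain at least one occupied site, in which case I would identify an adjacent pair of occupied $S_1$ sites in consecutive rows that pinches off a segment of $S_2$ chain of length $3p+1$, triggering lemma \ref{lm:notinHQ2-b} and again forcing $\ket{\psi}\notin H_{Q_2}$. Either way we obtain a contradiction to the assumption that $\ket{\psi}$ represents an element of $H_{12}$, completing the proof.

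The main obstacle will be executing the pinching argument cleanly, since the precise length of the pinched $S_2$ segment depends on the specific column width (one of $3p-1$, $3p$, or $3p+1$) together with the row-to-row transitions allowed by lemma \ref{lm:segmentwidths}. A careful local case analysis is required, tracking how the outermost $S_1$ sites in each row shift horizontally by one at rows where the column width changes. In addition, degenerate small-$p$ cases should be checked separately, since the interior of a narrow column may be empty, in which case the lemma is vacuous and no argument is needed.
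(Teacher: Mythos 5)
Your first case is exactly the paper's argument: an occupied interior site for which one of the diagonal pairs $\{(k+1,l+2),(k+2,l+1)\}$ or $\{(k-1,l-2),(k-2,l-1)\}$ is entirely empty isolates an $S_2$ site and kills $H_{Q_2}$ by lemma \ref{lm:notinHQ2-a}. The gap is in your second case. There you want to pinch off an $S_2$ chain of forbidden length and invoke lemma \ref{lm:notinHQ2-b}, but two diagonally adjacent occupied $S_1$ sites in consecutive rows do not isolate any short $S_2$ chain --- their blocked regions on the intervening chain are contiguous, which is precisely why lemma \ref{lm:notinHQ2-a} only forbids the configuration when \emph{both} sites of a diagonal pair are empty. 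In fact no contradiction with membership in $H_{Q_2}$ is available in this case: iterating lemma \ref{lm:notinHQ2-a} upward and downward, the occupied site lies on an unbroken zigzag path of occupied sites (steps of $+(1,2)$ or $+(2,1)$) that must close around the torus, and such a path is a perfectly legal element of $H_{Q_2}$ --- it is a grey column of width one, the degenerate $f_2=0$ motif consisting of a single bounding site per row, explicitly allowed by lemma \ref{lm:segmentwidths} with $s=0$. All the mod-3 counts of lemma \ref{lm:notinHQ2-b} can be satisfied by such a configuration, so the ``careful local case analysis'' you defer to the end cannot succeed in general.

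The contradiction in your second case is therefore not with $\ket{\psi}\in H_{Q_2}$ but with the interiority hypothesis itself: the wrapping path of occupied sites through $(k,l)$ constitutes (the bounding-site structure of) an $f_2=0$ columnar segment, so $(k,l)$ lies in, or on the boundary of, an $f_2=0$ column rather than in the interior of an $f_2>0$ one. This is what the paper means when it says the lemma ``follows directly from lemma \ref{lm:notinHQ2-a}'': occupied $S_1$ sites can only occur along the zigzag paths delimiting grey columns, because any occupied site that fails to propagate in both vertical directions violates lemma \ref{lm:notinHQ2-a}. Your dichotomy should be repaired accordingly; lemma \ref{lm:notinHQ2-b} is not the right tool for the second branch.
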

\begin{proof}
This follows directly from lemma \ref{lm:notinHQ2-a}. 
\end{proof}

From this lemma it follows that we only have to consider the $S_1$ sites on the boundary between a segment
where $f_2>0$ and a segment where $f_2=0$. In fact, we will argue that we only have to
consider the boundary where the segment with $f_2>0$ is to the right of a segment with $f_2=0$ (and not the
boundary on the other side).

First, however, we introduce a new notation where a configuration
is fully characterized by the boundaries between the two types of
segments ($f_2=0$ and $f_2>0$). From lemma \ref{lm:notinHQ2-a} it
follows that these boundaries are an arbitrary sequence of steps
of $+(2,1)$ and $+(1,2)$. However, in the new notation we shall
tilt the lattice by $-45^{\circ}$, such that the rows of $S_1$
sites are horizontal. If we then draw the boundary as a collection
of vertical lines between two $S_1$ sites that are to the left and
to the right of the boundary, we find that the boundaries have a
zigzagged shape. The segments where $f_2>0$ will be white and the
segments where $f_2=0$ will be grey. For an example see figure
\ref{fig:boundary}.

\begin{figure}[h!]
\begin{center}
\includegraphics[width= .6\textwidth]{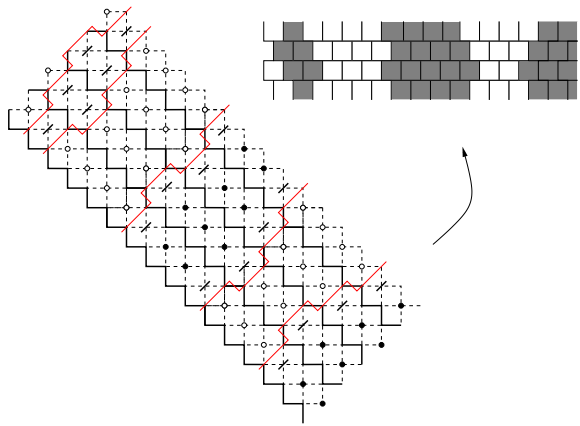}
\caption{Part of a configuration is shown with a mapping to the new notation.}
\label{fig:boundary}
\end{center}
\end{figure}

Suppose for a moment
that we would have a completely disconnected graph, that is, just a collection of disconnected vertices. Then
each site can be both empty and occupied. It is clear that each configuration with a least one
empty site does not belong to the kernel of $Q$, whereas each configuration with at least one occupied site
belongs to the image of $Q$. It follows that $H_Q$ vanishes at all grades. Here we do not have a disconnected
graph, however, it turns out that the division in grey and white regions is similar to disconnecting the
graph.

We define a special notation for a site that can be both empty and occupied. If this site is to the
right of a grey region we shall denote this site with a dot, whereas when it is to the left of a grey region
the site will be shaded. That is, suppose there are two configurations that both belong to $H_{Q_2}$ and
obey lemma \ref{lm:segmentwidths}, such that these two configurations differ by one site only. Then we can
summarize these two configurations in one picture by denoting this particular site by a
dot if it is to the right of a grey region or by shading the site if it is to the left of a grey region. For
an example see figure \ref{fig:dotexample}. Moreover, we can summarize $2^n$ configurations in one
picture if the picture contains $n$ sites with
dots or shaded sites. We make a distinction between sites to the left and to the right of the grey region,
because we will argue that the configurations with a site with a dot do not belong to $H_{12}$. Clearly this
is a choice, we could also have chosen to argue that the configurations with a shaded site do not belong to
$H_{12}$.

Let us consider a boundary that separates a grey segment on the left from a white segment on the
right. There are only a few such configurations that may have a site on the boundary with a dot.

\begin{figure}[h!]
\begin{center}
\includegraphics[width= .6\textwidth]{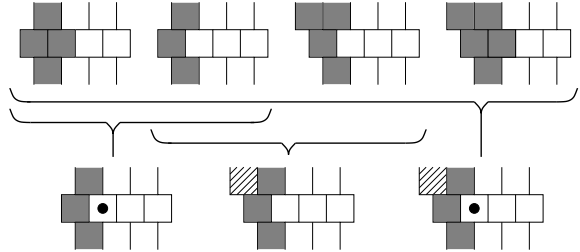}
\caption{The big brackets indicate how the configurations at the
top can be summarized using the notation introduced in the text.
The two left-most configurations at the top differ by one site in
the \emph{right}-most boundary of the grey region, they can
therefore be summarized by the left-most picture at the bottom by
denoting this site with a dot. The middle two configurations at
the top differ by one site in the \emph{left}-most boundary of the
grey region, they can therefore be summarized by the middle
picture at the bottom by shading this site. Finally, all four
configurations at the top can be summarized by the right-most
picture at the bottom.} \label{fig:dotexample}
\end{center}
\end{figure}

\begin{lemma}\label{possdot}
There are 8 possible configurations with a site with a dot in a
boundary that separates a grey segment on the left from a white
segment on the right. The configurations are depicted in figure
\ref{fig:dot3}.
\end{lemma}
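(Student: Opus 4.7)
The approach I would take is a purely local case analysis at the candidate dot site. By construction, the right boundary of a grey segment is a zigzag made of $+(1,2)$ and $+(2,1)$ steps, so in a neighborhood of a single $S_1$ site $s$ on that boundary the local geometry is completely determined once we specify which of the two step types is used for the boundary edge immediately above $s$ and immediately below $s$. This gives four local boundary shapes (up to orientation), and for each shape the admissible widths of the adjacent white columns and the height of the adjacent grey column are already restricted by lemma \ref{lm:segmentwidths}.

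For $s$ to legitimately carry a dot, both the configuration with $s$ empty and the configuration with $s$ occupied must lie in $H_{12}$. The plan is to apply the three structural constraints in turn. First, membership in $H_{Q_2}$ requires, by lemma \ref{lm:notinHQ2-b}, that the total number of empty $S_1$ sites directly below and above each resulting isolated open $S_2$ chain not be a multiple of three. Toggling $s$ changes those counts by one in precisely the $S_2$ chains adjacent to $s$, which pins down the widths of the white columns above and below the dot. Second, the row cohomology condition of lemma \ref{lm:noS1occ} forces the spacing between consecutive bounding sites in the row of $s$ to avoid the forbidden length $3p+1$; toggling $s$ shifts this spacing by one, which fixes the residue mod $3$ of the dot count in the grey row to the left of $s$. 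Third, lemma \ref{lm:notinHQ2-a} rules out local configurations in which toggling $s$ would isolate a single $S_2$ vertex, eliminating further cases.

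The main obstacle is the bookkeeping: the constraints imposed by the row above, the row of $s$, and the row below $s$ must be satisfied simultaneously, and the asymmetric convention that treats "dot on the right boundary of grey" differently from "shaded site on the left boundary" has to be respected throughout so as not to double count with the shaded-site enumeration that will be done in parallel. Once the four local boundary shapes are crossed with the two allowed residues mod $3$ of the grey column to the left of $s$, the enumeration collapses to exactly the eight surviving local patterns displayed in figure \ref{fig:dot3}. Verifying that each of these eight genuinely produces two valid elements of $H_{12}$ (one with $s$ empty, one with $s$ occupied) is then a direct check against lemmas \ref{lm:notinHQ2-a}, \ref{lm:notinHQ2-b}, and \ref{lm:noS1occ}.
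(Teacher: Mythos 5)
Your overall strategy --- demand that both the empty and the occupied version of the candidate site lie in $H_{12}$, and then constrain the neighbourhood using lemmas \ref{lm:notinHQ2-a}, \ref{lm:notinHQ2-b} and \ref{lm:noS1occ} --- is the right one and is the same toolkit the paper uses. But the way you produce the number $8$ is structurally wrong, not just loose bookkeeping. You claim a factor of $4$ from ``four local boundary shapes'' at the dot site, determined by the choice of $+(1,2)$ versus $+(2,1)$ step immediately above and below $s$. In fact lemma \ref{lm:notinHQ2-a}, applied to the occupied version of $s$, forces the sites to the upper left and lower left of $s$ to be occupied, while the sites to the upper right and lower right must be empty because $s$ borders a white segment; the local shape of the boundary at the dot is therefore \emph{unique}, not fourfold (this is exactly figure \ref{fig:dot1}). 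Likewise, your factor of $2$ from ``two allowed residues mod $3$ of the grey column to the left of $s$'' does not exist: lemma \ref{lm:noS1occ}, imposed simultaneously on the toggled and untoggled versions, pins the number of grey sites adjacent to the dot to a single residue class ($1$ modulo $3$); it is a constraint, not a binary choice.

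The actual multiplicity of $8$ lives in the parts of the configuration \emph{away} from the dot, which your proposal never examines. Four possibilities come from the left-most boundary of the grey segment: combining the width constraint of lemma \ref{lm:noS1occ} with lemma \ref{lm:notinHQ2-a} leaves two sites on that far boundary that can each independently be empty or occupied (the shaded sites of figure \ref{fig:dot2}), giving $2^2=4$. The remaining factor of $2$ comes from lemma \ref{lm:notinHQ2-b} applied to the white segment: there are exactly two admissible right-most boundaries for it, modulo insertions of columns of width $3$ (figure \ref{fig:dot3}). So even after correcting the local analysis at $s$, your enumeration would not reach the eight configurations of the lemma; you need to propagate the constraints to the opposite boundaries of both adjacent segments and count the residual freedom there.
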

\begin{proof}
The first restriction follows from lemma \ref{lm:notinHQ2-a}. That
is, both when the site is empty as well as when it is occupied,
the configuration should satisfy the lemma. This restriction is
depicted in figure \ref{fig:dot1}. Then the second restriction
follows from lemma \ref{lm:noS1occ}, that is, the width of a grey
column varies between 1 and 2 modulo 3. It follows that next to
the site with the dot there can only be one grey site (modulo 3).
Combining this again with lemma \ref{lm:notinHQ2-a}, we find four
possibilities for the left boundary of the grey segment. The four
possibilities can be summarized in one picture with the notation
defined above, see figure \ref{fig:dot2}, the shaded sites can be
both empty and occupied. Finally, it follows from lemma
\ref{lm:notinHQ2-b} that there are only two possible right most
boundaries for the white segment, each modulo columns of width 3,
see figure \ref{fig:dot3}. 
\end{proof}

\begin{figure}
     \centering
     \subfigure[]{
          \label{fig:dot1}
          \includegraphics[height=1.5cm]{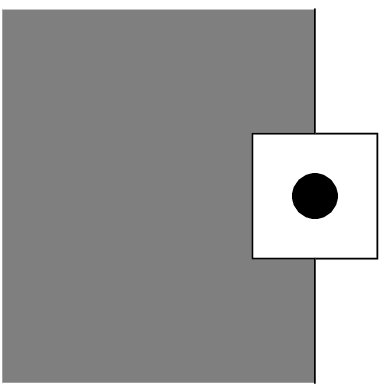}}
     \hspace{1cm}
     \subfigure[]{
          \label{fig:dot2}
          \includegraphics[height=1.5cm]{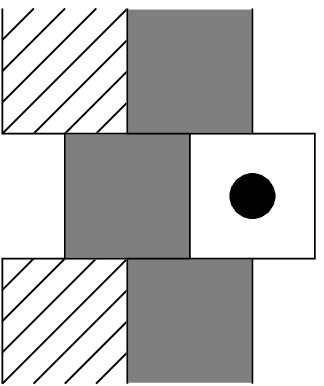}}
     \hspace{1cm}
     \subfigure[]{
           \label{fig:dot3}
           \includegraphics[height=1.5cm]{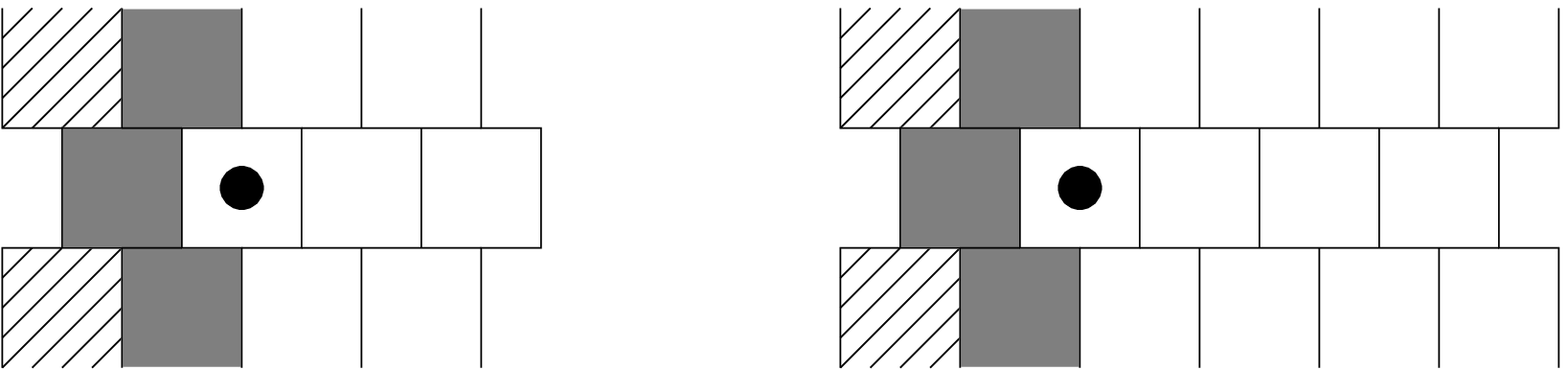}}
     \caption{In three steps we find that there are 8 possible configurations
     with a site with a dot in a boundary that separates a grey segment on the
     left from a white segment on the right: (a) It follows from lemma \ref{lm:notinHQ2-a}
     that a site with a dot must have occupied sites to the upper left and lower left and
     empty sites to the upper right and lower right. (b) There are four possibilities
     for the left-most boundary of the grey segment, following from the two
     shaded sites being empty or occupied. (c) There are two possibilities for the
     right-most boundary of the white segment.}
     \label{fig:dot}
\end{figure}

From lemma \ref{possdot} it follows that if there is more than one
site with a dot in the same boundary, they are sufficiently far
away to be independent. That is, each of these sites can be both
empty and occupied independent of the configuration of the other
dotted sites. Also note that if we select one of the 8
configurations with a dot from figure \ref{fig:dot3}, the rest of
the system can take on any configuration independent of the
configuration of the dotted site. Note that this resembles a
disconnected graph.

We are now ready to solve $H_{Q_1}(H_{Q_2})$.
\begin{lemma}\label{lm:notinHQ1}
All configurations that contain a boundary between a grey segment to the left and a white segment to the right,
such that this boundary contains one or more sites with a dot, do not belong to $H_{Q_1}(H_{Q_2})$.
\end{lemma}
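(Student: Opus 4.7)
My plan is to localise the argument to a single dotted site and exploit the independence of distinct dots established after Lemma~\ref{possdot}. Fix a dot on a boundary with a grey segment on the left and a white segment on the right, and hold the rest of the configuration fixed. Denote by $c_e$ the configuration in which the dotted site is empty and by $c_o$ the one in which it is occupied; by Lemma~\ref{possdot} both are valid $H_{Q_2}$ representatives.

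The key step is to show that $Q_1 c_e \equiv \pm c_o$ modulo $\im Q_2$. Expanding $Q_1 c_e = \sum_{i\in S_1} c_i^{\dag} P_{<i>}\, c_e$, I would argue that every term except the one creating a fermion at the dotted site vanishes in $H_{Q_2}$. Terms filling an $S_1$ site inside a white segment ($f_2>0$) vanish in $H_{Q_2}$ by the direct analogue of Lemma~\ref{Q1zerowithinHQ2}: the altered $S_2$-chain lengths and fermion counts become incompatible with the nonvanishing cohomology recorded in Theorem~\ref{chaincohom}. Terms filling an $S_1$ site inside a grey segment vanish identically, because each grey row carries the chain cohomology representative of Lemma~\ref{chainrepr} which already lies in $\ker Q_1$ along the row. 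The only surviving contribution is the one at the dotted site itself, which produces $\pm c_o$ thanks to the local geometry encoded in Figure~\ref{fig:dot3}.

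Once this is established the conclusion is immediate. Since $c_o = \pm Q_1 c_e$ in $H_{Q_2}$, the class of $c_o$ in $H_{Q_1}(H_{Q_2})$ is trivial. Since $Q_1 c_e \equiv \pm c_o$ and $c_o$ represents a nonzero class in $H_{Q_2}$, the configuration $c_e$ fails to be a $Q_1$-cycle. Any combination $\alpha c_e + \beta c_o$ that is $Q_1$-closed modulo $\im Q_2$ therefore forces $\alpha=0$, and the remaining $\beta c_o$ is $Q_1$-exact; so nothing with a genuine dot survives to $H_{12}$. For configurations carrying several dots along such boundaries, Lemma~\ref{possdot} guarantees that distinct dots lie in disjoint, noninteracting local neighborhoods, so the analysis factorises dot by dot and an easy induction eliminates them all.

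The main obstacle will be verifying rigorously that every non-dot term in the expansion of $Q_1 c_e$ really drops out in $H_{Q_2}$. One has to walk through the eight local neighborhoods of Figure~\ref{fig:dot3} and, for each, check that filling any other $S_1$ site either breaks the $H_{Q_2}$ conditions of Lemma~\ref{lm:notinHQ2-a} or Lemma~\ref{lm:notinHQ2-b}, or is annihilated by the chain cohomology argument of Section~\ref{chain}. This local bookkeeping, together with checking that the signs from fermionic reorderings do not produce unexpected cancellations between the contributions of different dots, is the essential combinatorial content of the lemma.
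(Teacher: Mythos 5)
Your central claim --- that $Q_1 c_e \equiv \pm c_o$ modulo $\im Q_2$, because every other term in the expansion of $Q_1 c_e$ dies in $H_{Q_2}$ --- is false in general, and this is exactly where the real content of the lemma lives. Your taxonomy of the non-dot terms covers sites in the interior of white segments (which indeed die by the analogue of Lemma~\ref{Q1zerowithinHQ2}) and sites in the interior of grey segments (which die because the row representative of Lemma~\ref{lm:HQ1} is a $Q_1$-cycle), but it omits the \emph{boundary} sites of the grey region and of the white region: the shaded sites on the left boundary of the same grey segment, and fillable sites on the right boundary of the same white segment. Occupying such a site produces a configuration that is still a nonzero element of $H_{Q_2}$ --- that is precisely what it means for the site to be shaded/two-valued --- so it contributes a surviving term to $Q_1 c_e$ alongside $\pm c_o$. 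Worse, when such a site lies in the same row as the dotted site, the two cannot be occupied simultaneously, and then $Q_1 c_{ee} = \pm c_{oe} \pm c_{eo}$ with \emph{both} terms nonzero in $H_{Q_2}$: only the sum is $Q_1$-exact, while the difference is a genuine nontrivial class of $H_{Q_1}(H_{Q_2})$. Your conclusion that ``nothing with a genuine dot survives'' (i.e.\ $\alpha=0$ and $\beta c_o$ exact) would therefore undercount $H_{12}$. The correct resolution, and the one the paper uses, is that this surviving class admits a representative in which the dotted site is empty and the other site is occupied --- so the lemma is really a statement about a \emph{choice of representatives} (which is why the paper distinguishes dots, on the right of a grey region, from shaded sites, on its left), not a statement that these classes vanish. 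Your localization ``hold the rest of the configuration fixed'' breaks down here because $Q_1$ acts on the rest and takes you out of the two-dimensional space spanned by $c_e$ and $c_o$.

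The multi-dot step has a related problem: Lemma~\ref{possdot} gives independence of distinct \emph{dots on the same boundary}, but it says nothing about the interaction of a dot with shaded sites on the opposite boundary of the grey column or with sites on the right boundary of the white column, which is where the non-factorizing cases arise. To repair the proof you would need to enumerate, for each of the eight local pictures of Figure~\ref{fig:dot3}, which other $S_1$ sites $Q_1$ can occupy without leaving $H_{Q_2}$, separate the cases where those occupations are compatible with occupying the dot from the cases where they are mutually exclusive, and in the latter cases show that the surviving kernel element can be represented with the dotted site empty --- which is essentially the case analysis (scenarios a--d) in the paper's proof.
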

\begin{proof}
A site with a dot can be either empty or occupied. Suppose the site is empty and we act with $Q_1$ on the
configuration. If $Q_1$ can act non-trivially only on the site under consideration we are done, since the
configuration in which the site with the dot is empty does not belong to the kernel of $Q_1$ and the
configuration in which it is occupied belongs to the image of $Q_1$. This proves the lemma for this case.

If, however, $Q_1$ can act non-trivially
also on other sites, there are four scenarios: a) The other site is in the same boundary. b) The other site is
in the left boundary of the grey region under consideration. c) The other site is in the right boundary of
the white region under consideration. d) The other site is further away from the region under consideration
than the first three cases.

For scenario a), we know that the other site is also a site with a dot. It follows that the configuration
with both dotted sites empty does not belong to the kernel of $Q_1$. The sum of the configurations with one
of the dotted sites empty belongs to the image of $Q_1$. The difference of the configurations with one
of the dotted sites empty does not belong to the kernel of $Q_1$, because it maps to the configuration with
both dotted sites occupied. Clearly, the latter configuration belongs to the image of $Q_1$. So for this
scenario the lemma is proven.

For scenario b) we distinguish two cases. First, the other site
and the dotted site can be occupied simultaneously. In this case
we can prove the lemma via the same argument as we did for
scenario a). Second, the other site and the dotted site {\it
cannot} be occupied simultaneously. This only happens when the
other site is in the same row as the dotted site. In this case the
sum of the configurations with one of them occupied is in the
image of $Q_1$, but the difference belongs to the kernel of $Q_1$
and does not belong to the image of $Q_1$. The latter is thus an
element of $H_{Q_1}(H_{Q_2})$. However, we have the freedom to
decide to keep only the configuration in which the other site is
occupied and the dotted site is empty as a representative of this
element. At this point it becomes clear why we only consider
configurations with a site with a dot, and not configurations with
a shaded site.

For scenario c) we can again distinguish two cases. In the first
case, the configuration of the other site and the dotted site are
independent and the lemma is proven as for scenario a). In the
second case, the other site and the dotted site {\it cannot} be
occupied simultaneously. There are again three configurations
under consideration. The configuration with both sites empty does
not belong to $\ker Q_1$, the sum of the configurations with one
of the two sites occupied belongs to $\im Q_1$ and, finally, the
difference again is an element of $H_{Q_1}(H_{Q_2})$. And as under
b), we choose to represent this element with the configuration
where the dotted site is empty and the other site occupied.

Finally, for scenario d) it is clear that the configuration of the other site and the dotted site are always
independent and the lemma is proven as for scenario a).

In the four scenarios we considered, there was just one other site on which $Q_1$ acts non-trivially. If
there are more sites on which $Q_1$ acts non-trivially, the lemma clearly holds when these sites can again be
empty or occupied independent of the dotted site. However, if they are not all independent, the proof is more
lengthy, but analogous to the proofs of the second case in scenarios b) and c).

\end{proof}

\begin{lemma}
All the configurations that belong to $H_{Q_1}(H_{Q_2})$ are a sequence of alternating grey and white
columns subject to the conditions in lemma \ref{lm:segmentwidths}, such that the left-most boundary of all
the white columns does not contain any sites with a dot.
\end{lemma}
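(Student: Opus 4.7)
The plan is to show this lemma by combining the partial characterizations already assembled in Step 2 into a single statement, and then verifying the converse direction (that every configuration of the claimed form is indeed a non-trivial element of $H_{12}$). My approach would be to prove both inclusions separately.

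For the forward direction, I would start with an arbitrary element $\ket{\psi} \in H_{12}$. Since $H_{12} \subseteq H_{Q_2}$, lemmas \ref{lm:notinHQ2-a} and \ref{lm:notinHQ2-b} force $\ket{\psi}$ to decompose into alternating grey ($f_2=0$) and white ($f_2>0$) columnar segments. Applying lemma \ref{lm:noS1occ} row by row inside each grey segment yields the dot-pattern restriction and, combined with the geometric constraints of lemma \ref{lm:notinHQ2-b}, produces precisely the column-width bounds of lemma \ref{lm:segmentwidths}. Finally, lemma \ref{lm:notinHQ1} directly removes any remaining configurations carrying a dot on a boundary where a grey column sits to the left and a white column to the right, i.e.\ on the left-most boundary of some white column.

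For the reverse direction, I would take a configuration $\ket{\psi_0}$ meeting all the stated conditions and check that it represents a non-trivial class in $H_{Q_1}(H_{Q_2})$. Closedness under $Q_2$ is automatic from the segment structure and the choice of chain-cohomology representatives inside each grey row (lemma \ref{chainrepr}). For $Q_1$-closedness modulo $\textrm{im}\ Q_2$, one checks the possible sites on which $Q_1$ could act non-trivially: in the interior of grey segments the dot representatives are designed to be killed by $Q_1$ (lemma \ref{lm:HQ1}); in the interior of white segments the $S_1$ sites are empty and creating a fermion there either generates a new isolated $S_2$ site (exiting $H_{Q_2}$) or is absorbed as $Q_2$ applied to something (as in Step 3 of the previous section and lemma \ref{Q1zerowithinHQ2}); and on right-most white boundaries, the shaded-site choices label independent configurations all lying in $H_{Q_2}$, so they survive. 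Non-exactness follows because the $2^n$ independent shaded-site choices, combined with the independent dot-choices inside grey segments, produce linearly independent classes that could only be identified by a $Q_1$-chain through some forbidden dotted configuration, which by the forward direction does not exist in $H_{12}$.

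The main obstacle will be the bookkeeping on the right-most (shaded) boundaries of white columns: one has to check that $Q_1$ applied to a configuration where a shaded site is occupied never lands back in $H_{Q_2}$ on a configuration matched by another allowed pattern, so that the shaded sites genuinely parametrize independent cohomology classes rather than being identified modulo $\textrm{im}\ Q_1$. This requires a careful case split mirroring the four scenarios a)--d) of the proof of lemma \ref{lm:notinHQ1}, but with the roles of the grey-left and grey-right boundaries interchanged; the fermionic signs arising when acting by $Q_1$ on pairs of boundary sites must be tracked so that the ``sum'' combinations identified as $\textrm{im}\ Q_2$ in lemma \ref{Q1zerowithinHQ2} cancel correctly, while the orthogonal ``difference'' combinations are absorbed into the shaded-site labelling. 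Once this is done, the lemma follows by assembling the two inclusions.
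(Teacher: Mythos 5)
Your forward direction is exactly the paper's argument: the paper proves this lemma in one line as a direct consequence of lemma \ref{lm:notinHQ1}, with the columnar structure already supplied by lemmas \ref{lm:notinHQ2-a}, \ref{lm:notinHQ2-b}, \ref{lm:noS1occ} and \ref{lm:segmentwidths}, which is precisely the combination you describe. The converse inclusion you sketch (that every such configuration really does represent a class in $H_{12}$) is not claimed or proved in this lemma by the paper --- it is only needed implicitly later --- so your additional work there is a reasonable elaboration rather than a divergence, and the bookkeeping issue you flag for the shaded sites is indeed the content the paper disposes of by its choice of representatives in scenarios b) and c) of the proof of lemma \ref{lm:notinHQ1}.
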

\begin{proof}
This is a direct consequence of lemma \ref{lm:notinHQ1}. 
\end{proof}

As an example we consider the case where $\vec{v}=(6,6)$ and
$\vec{u}=(m,-m)$, that is, we stack four rows of $m$ $S_1$ sites
separated by four $S_2$ chains. All configurations in $H_{12}$ can
be obtained by concatenating the configurations depicted in figure
\ref{fig:example6komma6}, with eventual insertions of grey and/or
white columns of width 3, such that the boundary conditions are
satisfied\footnote{These configurations are obtained as follows.
First consider all possible white segments satisfying the boundary
condition in the $\vec{v}$-direction, then construct all
possibilities for the grey segments to the left of these white
segments.}. That is, each row should in the end have width $m$ or,
equivalently, the right-most boundary should fit with the
left-most boundary. Finally, there is a configuration with all
zeroes (one entirely white segment) and a configuration with all
dots (one entirely grey segment).

\begin{figure}[h!]
\begin{center}
\includegraphics[width= .9\textwidth]{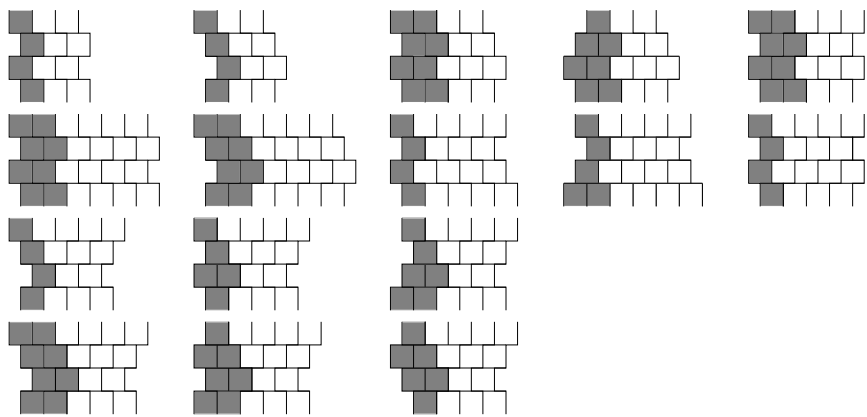}
\caption{Building blocks of the configurations spanning $H_{Q_1}(H_{Q_2})$ for $\vec{v}=(6,6)$ and
$\vec{u}=(m,-m)$, modulo insertions of grey and white columns of width 3.}
\label{fig:example6komma6}
\end{center}
\end{figure}

\vspace{0.5cm}
\noindent
{\bf Step 3}

\vspace{0.1cm}
\noindent
In the previous step we have determined $H_{12}$. According to the
'tic-tac-toe' lemma, the cohomology of $Q$ is equal to or
contained in $H_{12}$: $H_Q \subseteq H_{12}$. In the previous
section we found that for $\vec{v}=(1,2)$, we have $H_Q = H_{12}$.
For general $\vec{v}$, however, this is not the case. That is,
within $H_{12}$, there are configurations that are not in the
kernel of $Q$ and there are configurations that are in the image
of $Q$. To find out which configurations do not belong to $H_Q$,
we follow the 'tic-tac-toe' procedure \cite{botttu} as described
in step 3 of section \ref{sec:zz}.

In the previous section, we found via the 'tic-tac-toe' procedure
that we could find for each element $\ket{\psi_0}$, that belongs
to $H_{12}$, but not to $\ker Q$, an element $\ket{\psi_n}$ that
does belong to $\ker Q$. In this section, however, we will find
that for some elements $\ket{\psi_0}$, the 'tic-tac-toe' procedure
leads to a corresponding element $\ket{\tilde{\psi}}$, that also
belongs to $H_{12}$. We then say that $\ket{\psi_0}$ maps to
$\ket{\tilde{\psi}}$ at the end of the 'tic-tac-toe' procedure and
we conclude that neither $\ket{\psi_0}$ nor $\ket{\tilde{\psi}}$
belong to $H_Q$.

We now prove some rules for the 'tic-tac-toe' procedure specific to the configurations we obtained in the
previous step.

\begin{lemma}\label{ttt1}
Let $Q$ act on an empty $S_1$ site $(k,l)$, such that for the preceding $S_1$ sites on that row we have:
$(k-1,l+1)$ and $(k-3s-2,l+3s+2)$ are occupied and the intermediate sites are dotted. Then the new
configuration with $(k,l)$ occupied, is also the image of $Q_1$ acting on the configuration with $(k,l)$
occupied and one less fermion in the preceding sites $(k-1,l+1)$ to $(k-3s-1,l+3s+1)$.
\end{lemma}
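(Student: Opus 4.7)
The plan is to exhibit the pre-image $\ket{B}$ explicitly and verify $Q_1\ket{B}=\ket{A}$, where $\ket{A}\equiv Q_1^{(k,l)}\ket{\psi_0}$ denotes the configuration produced by firing $Q_1$ at the empty site $(k,l)$. All of the analysis takes place on the local block of $3s+3$ consecutive $S_1$ sites on the row in question, labelled $a_0=(k-3s-2,l+3s+2)$, $a_1,\ldots,a_{3s+1}=(k-1,l+1)$, $a_{3s+2}=(k,l)$; on this block $\ket{A}=\ket{1\cdot_{3s}1\,1}$, with $\cdot_{3s}$ the dotted $H_{Q_1}$-representative supplied by lemma \ref{lm:HQ1}.

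The first observation I would make is that $\ket{A}$ is exact in the local length-$(3s+3)$ open-chain complex. By theorem \ref{chaincohom} the cohomology of an open chain of length $3s+3$ is concentrated at fermion grade $s+1$, while $\ket{A}$ carries $2s+3$ fermions on the block. At the same time $\ket{A}$ is closed: $\cdot_{3s}$ is killed by $Q_1$ on its own sub-segment by the very definition of the dotted representative (see lemma \ref{lm:HQ1}), and $Q_1$ acting strictly to the left of $(k,l)$ commutes with the occupation of that site. The grade mismatch then forces $\ket{A}\in\im Q_1$, guaranteeing that a pre-image $\ket{B}$ exists.

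The natural explicit candidate is the state obtained from $\ket{A}$ by depleting exactly one fermion in the interior range $a_1,\ldots,a_{3s+1}$, so that $Q_1$ fills the hole back in and reproduces $\ket{A}$. Concretely I would take $\ket{B}$ to keep the dotted combination on $a_1,\ldots,a_{3s}$ intact while emptying $a_{3s+1}$; this gives a state with $(k,l)$ and $(k-3s-2,l+3s+2)$ occupied and exactly one fewer fermion on the specified range, precisely as the lemma asserts. All $Q_1$-contributions from sites outside this block vanish for the usual $H_{Q_2}$-reasons captured by lemma \ref{Q1zerowithinHQ2}.

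The final and technically hardest step is the direct verification $Q_1\ket{B}=\ket{A}$. Expanding $\ket{B}=\sum_\sigma c_\sigma\ket{\sigma}$ in the occupation basis and evaluating $\sum_i c_i^\dagger P_{<i>}\ket{B}$ site by site across $a_1,\ldots,a_{3s}$, the contributions pair up and cancel because $\cdot_{3s}\in\ker Q_1$ on its own segment, leaving only the term in which $Q_1$ refills $a_{3s+1}$; this term is precisely $\ket{A}$ up to an overall fermionic sign that is absorbed into the statement. The main obstacle is the Jordan--Wigner sign bookkeeping: one must show that the signs produced by $c_i^\dagger P_{<i>}$ combine correctly with the signs built into the $\cdot_{3s}$ representative. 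I expect this to follow from the same telescoping argument that is already written out (and commented) inside the proof of lemma \ref{chainrepr}, possibly together with an induction on $s$, after one invokes the duality ``empty $\leftrightarrow$ occupied'' from the proof of lemma \ref{lm:HQ1}.
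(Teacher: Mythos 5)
Your existence argument --- that ``$1\cdot_{3s}1\,1$'' is $Q_1$-closed and sits in a segment whose $H_{Q_1}$ vanishes, hence must be $Q_1$-exact --- is precisely the paper's proof, except that the paper gets the vanishing directly from lemma \ref{lm:HQ1} (the bounding pair now encloses $3s+1$ sites, so $H_{Q_1}$ vanishes at \emph{every} grade), whereas your grade-mismatch count compares the dual grade $s+1$ of a length-$(3s+3)$ chain with the actual fermion number $2s+3$, which are different gradings; the conclusion survives but the cleaner route is the one the paper takes. Your subsequent explicit candidate for the preimage and the deferred Jordan--Wigner sign check are not needed once exactness is established, and the paper does not attempt them.
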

\begin{proof}
For general $s$ we can denote the original configuration as "$1 \cdot_{3s} 10$", the new configuration is then "$1
\cdot_{3s} 11$". From lemma \ref{lm:HQ1} we know that, if the number of $S_1$ sites between a bounding pair
is $3s+1$, $H_{Q_1}$ vanishes. Consequently, each configuration that is in the kernel of $Q_1$ is also in the
image of $Q_1$. Now, since the configuration "$1\cdot_{3s} 11$" is in the kernel of $Q_1$ and the number of
$S_1$ sites between the bounding pair is $3s+1$, it must also be in the image of $Q_1$. Thus, there is a
configuration with one less fermion between the bounding pair that maps to this configuration under the
action of $Q_1$. 
\end{proof}
\begin{example}
For $s=0$ this is easily understood: the original configuration will have "$110$" on the $S_1$ sites
$(k-2,l+2)$ through $(k,l)$. Acting on this with $Q$ gives "$111$", however, this can also be obtained by
acting with $Q$ on -"$101$".
\end{example}

\begin{lemma}\label{tttnew1}
Acting with $Q$ on a white segment away from the boundary, gives zero.
\end{lemma}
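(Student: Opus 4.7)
The plan is to apply the tic-tac-toe reasoning of lemma \ref{lem:allinkerQ1}, but localized to the interior of a white segment. Fix an element $\ket{\psi_0}\in H_{12}$ and isolate the piece of $Q\ket{\psi_0}$ in which $Q_1$ creates a fermion at an empty interior $S_1$ site $(k,l)$ of some white segment. The corollary to lemma \ref{lm:notinHQ2-a} guarantees that such an interior site is empty in every element of $H_{12}$, so the action is well defined, and the $Q_2$-part of $Q\ket{\psi_0}$ already vanishes because $\ket{\psi_0}\in\ker Q_2$.

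First I would use the geometry at $(k,l)$: it is surrounded diagonally by empty $S_1$ sites, so creating a fermion there isolates the $S_2$ sites $(k\pm1,l\pm1)$ in the sense of lemma \ref{lm:notinHQ2-a}. Consequently the resulting configuration lies in a subspace where $H_{Q_2}$ is identically zero. Next, combining $\{Q_1,Q_2\}=0$ with $Q_2\ket{\psi_0}=0$ gives $Q_2(Q_1\ket{\psi_0})=0$, so $Q_1\ket{\psi_0}\in\ker Q_2$. Together with the vanishing of $H_{Q_2}$ on the enclosing subspace this forces $Q_1\ket{\psi_0}=Q_2\ket{\phi}$ for some $\ket{\phi}$, and setting $\ket{\psi_1}=\ket{\psi_0}-\ket{\phi}$ reduces $Q\ket{\psi_1}$ to $-Q_1\ket{\phi}$, exactly as in the tic-tac-toe step of section \ref{sec:zz}.

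Then I would iterate. The intermediate $\ket{\phi}$ still contains the fermion at the interior site $(k,l)$, and so it still produces isolated $S_2$ sites, allowing the same vanishing/exactness dichotomy to apply at the next $Q_1$-action. The main obstacle is termination of this cascade. I would control it by the same mechanism as in the proof of lemma \ref{lem:allinkerQ1}: at each stage the newly active $Q_1$ either lands on a fresh interior $S_1$ site of the white segment, to which the above argument applies independently, or it revisits a site already touched, in which case the fermionic antisymmetry (a pair of hops in opposite orders picking up opposite signs) causes the contribution to cancel in pairs. Since the number of interior $S_1$ sites in the white segment is finite, the cascade closes after finitely many subtractions, and the total contribution of $Q$ acting in the interior of the white segment is zero, as claimed.
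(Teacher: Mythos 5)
Your overall strategy is the paper's: run the tic-tac-toe cascade and argue that every $Q_1$-action in the interior of a white segment produces a $Q_2$-exact state, so the cascade can always be pushed one step further. Your justification of the exactness step is a legitimate shortcut and in fact cleaner than the paper's: an occupied interior $S_1$ site whose diagonal $S_1$ neighbours are empty isolates the $S_2$ sites $(k+1,l+1)$ and $(k-1,l-1)$; the cohomology of a single isolated vertex is trivial, and since the $S_2$-complex factorizes over its connected pieces, $H_{Q_2}$ vanishes identically on that sector, so any state there lying in $\ker Q_2$ lies in $\im Q_2$. The paper instead cuts the affected chain into a length-one piece plus two pieces of total length $L-3$ and verifies by the mod-3 count that the cohomology vanishes at the actual fermion number; your route reaches the same conclusion with less arithmetic.

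The gap is in the iteration and termination. First, ``revisits a site already touched'' is not the relevant alternative: $Q_1$ on an already occupied site vanishes identically, so nothing needs to cancel there. The cancellation the argument actually needs is between the two orderings in which two \emph{distinct} empty sites \emph{on the same row} are occupied; these cancel because no $S_2$ fermions sit between them, and this is what licenses the assumption that each row is acted on at most once. Second, and more seriously, your assertion that the intermediate states ``still produce isolated $S_2$ sites'' is not automatic: each further step occupies another $S_1$ site, and once the diagonal neighbours of $(k,l)$ are no longer all empty the site $(k+1,l+1)$ becomes merely blocked rather than isolated, so the vanishing-$H_{Q_2}$ argument through that site fails; moreover an $S_2$ chain that has already been used to write a previous residue as $Q_2(\cdot)$ cannot be used again. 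The paper closes exactly this hole with the bookkeeping ``each row acted on at most once, and there are as many $S_2$ chains as $S_1$ rows,'' which guarantees a fresh chain carrying an isolated site at every stage. Finally, ``finitely many subtractions, hence zero'' is a non sequitur on its own: a finite cascade can perfectly well terminate at a nonzero element of $H_{12}$ --- this is precisely what happens for boundary sites and is the source of $H_Q\subsetneq H_{12}$ in the general case. What forces termination \emph{at zero} is that every nonzero residue is exact and can be pushed one step further while $f_2$ drops by one at each step, so the process can only halt when the residue vanishes. These are the pieces you still need to supply.
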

\begin{proof}
The proof is analogous to the proof of lemma
\ref{Q1zerowithinHQ2}. The length of the $S_2$ chains in the white
region is $L=3k$ or $L=3k-1$ each containing $k$ fermions. If
$Q_1$ acts on a site above this chain and away from the boundary,
it will cut the $S_2$ chain into 3 pieces. One of length 1 and two
of lengths $L_1'$ and $L_2'$, such that $L_1'+L_2'=L-3$. We will
now argue that the new configuration with the smaller $S_2$
chains, always belongs to $\im Q_2$. This implies that we can
always continue to the next step in the 'tic-tac-toe' procedure.

If the chain of length 1 contains a fermion, the new configuration
clearly belongs to $\im Q_2$. If it is empty there are $k$
fermions on the other two chains. For $L=3k$ their combined length
is $L_1'+L_2'=3(k-1)$, so $L_1'=3k_1$ and $L_2'=3k_2$ or
$L_1'=3k_1+1$ and $L_2'=3k_2-1$, where in both cases
$k_1+k_2=k-1$. For the second case the cohomology vanishes for all
fermion numbers because of the length $L_1'$. For the first case
the cohomology is non-vanishing only if $f=k_1+k_2=k-1$, however,
there are $k$ fermions. So for both cases the new configuration
belongs to $\im Q_2$ (since it belongs to $\ker Q_2$ and not to
$H_{Q_2}$). For $L=3k-1$ we find $L_1'=3k_1$ and $L_2'=3k_2-1$ or
$L_1'=3k_1+1$ and $L_2'=3k_2-2$, where in both cases
$k_1+k_2=k-1$. The rest of the argument is the same as before.

From the above it follows that we can always continue with the next step in the 'tic-tac-toe' procedure. Now
suppose that in this next step we act with $Q_1$ on the same row as in the first step. Since there are no
fermions between these two $S_1$ sites, this configuration will cancel against the configuration where the
two $S_1$ sites are occupied in the reverse order. It follows that we only have to consider acting with $Q_1$
on each row just once.

It is now easily verified that, since there are as many $S_2$ chains as there are $S_1$ rows, we can always
continue the 'tic-tac-toe' procedure until we get zero. 
\end{proof}

In this lemma we restricted ourselves to $Q$ acting on $S_1$ sites away from the boundary. We will see in the
following that if we allow $Q$ to act on sites at the boundary, the 'tic-tac-toe' procedure can map
one configuration in $H_{12}$ to another configuration in $H_{12}$. The crucial point is that, when we act
with $Q_1$ on a site at the boundary, the length of at least one of the $S_2$ chains below and above this
site is reduced by 1. If the original length was $3k$, the new length is $3k-1$ and both have
non-vanishing cohomology for $f=k$. In that case we cannot use this $S_2$ chain to write the new
configuration as $Q_2$ of some other configuration. It follows that to continue the 'tic-tac-toe' procedure,
we have to use the other $S_2$ chain. However, if this chain was already used in a previous step, the
'tic-tac-toe' procedure could end. Before we continue with an example that illustrates this point, we will
argue that it is enough to consider $Q$ acting only on sites at the boundary. This follows from lemma
\ref{lm:segmentwidths}; if the 'tic-tac-toe' procedure ends because we have obtained a configuration that
does not belong to $\im Q_2$ (nor to $\im Q_1$), this configuration must belong to $H_{12}$. From lemma
\ref{lm:segmentwidths} we know that configurations in $H_{12}$ have spatially separated columnar grey and
white segments that do not branch. It follows that we can only map one configuration in $H_{12}$ to another
by either creating a new grey column in a white column, or by (locally) increasing the width of a grey
column. Since the first possibility is excluded by lemma \ref{tttnew1}, we conclude that we can restrict $Q$
to act only on sites at the boundary. As in step 2 we will restrict
ourselves to the left-most boundary to avoid overcounting.

Let us consider an example of a configuration that does belong to $H_{12}$, but does not belong to $H_Q$,
i.e. it maps to another configuration in $H_{12}$ at the end of the 'tic-tac-toe' procedure.

\begin{example}\label{ex:ttt}
Consider the configuration shown on the left in figure \ref{fig:tttexample}. We label the three $S_2$ chains (not shown
explicitly) between the four $S_1$ rows; chain 1, chain 2 and chain 3 ($c_1$, $c_2$ and $c_3$) from top to
bottom. Similarly, we label the $S_1$ rows; row 1 to row 4 ($r_1$ to $r_4$) from top to bottom. The $S_2$
chains have lengths $L_{c_1}=6$, $L_{c_2}=5$ and $L_{c_3}=3$ and thus contain 2, 2 and 1 particle respectively.
Now consider the left-most, empty $S_1$ sites in the middle two rows. Occupying the left-most, empty $S_1$
site on row 2 reduces the length of $c_1$ from 6 to 5. There will still be 2 particles on $c_1$ and since the
chain of length 5 has non-vanishing cohomology at grade 2, the configuration on this chain will in general
not belong to $\im Q_2$. Occupying the left-most, empty $S_1$ site on row 3 reduces the length of $c_3$ from
3 to 2. Again the configuration on this chain will not belong to $\im Q_2$, since the chain of length 2 has
non-vanishing cohomology at grade 1. It follows that if we occupy either of these $S_1$ sites in the
'tic-tac-toe' procedure, we have to use $c_2$ to write the new configuration as $Q_2$ of some other
configuration. By definition this is always possible in the first step of the procedure. However, also by
definition, we can do this only once since $Q^2=0$. It follows that, after two steps in the 'tic-tac-toe'
procedure, we obtain a new configuration (see fig. \ref{fig:tttexample} on the right) that has 2, 1 and 1 particles on the
$S_2$ chains from top to bottom and belongs to $H_{12}$. Consequently, both the original as well as the final 
configuration do not belong to the cohomology of $Q$, although they do belong to $H_{12}$.
\end{example}

\begin{figure}[h!]
\begin{center}
\includegraphics[width= .4\textwidth]{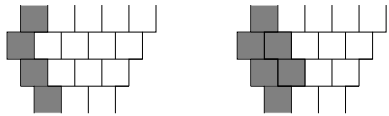}
\caption{On the left we depict the configuration that does belong to $H_{12}$, but not to $H_Q$, since it 
does not belong to the kernel of $Q$. Instead it maps to the configuration depicted on the right under the 
'tic-tac-toe' procedure. This configuration also belongs to $H_{12}$, but not to $H_Q$, since it belongs to $\im Q$. 
The configuration on the left has 4 particles on sublattice $S_1$ and 5 particles on sublattice $S_2$, divided as 
2, 2, 1 over the $S_2$ chains from top to bottom. The configuration on the right has one more particle in total; 
it has 6 particles on $S_1$ and it has 4 particles on $S_2$, divided as
2, 1, 1 over the $S_2$ chains from top to bottom.}
\label{fig:tttexample}
\end{center}
\end{figure}

As we anticipated, the crucial point in this example is that the length of an $S_2$ chain is reduced
from $3k$ to $3k-1$, since this limits the options to continue the 'tic-tac-toe' procedure. In fact, in the
'tic-tac-toe' procedure, we can only reach a configuration that is not in the image of $Q_2$ if the length
of an $S_2$ chain is reduced from $3k$ to $3k-1$. To find the configurations in $H_{12}$ that map to another
configuration in $H_{12}$ under the action of $Q$ in the most
efficient way\footnote{By 'the most efficient way' we mean the shortest sequence of occupying $S_1$ sites in 
the 'tic-tac-toe' procedure that maps one configuration in $H_{12}$ to another. This is the most efficient
way, because as soon as this happens, we know that both configurations do not belong to $H_Q$, independent 
of all the other terms created under the action of $Q$.}, we will start the 'tic-tac-toe' procedure by
occupying an $S_1$ site, such that this happens. It follows that
we can then only use the other $S_2$ chain to continue the
procedure. We will then, again for efficiency, continue the
procedure by again occupying an $S_1$ site such that there is just one $S_2$ chain that we
can use to continue the procedure. This means that we will act
with $Q_1$ on consecutive rows, either moving upwards or
downwards along the boundary.

In the previous step we constructed all possible configurations with a site {\em with a dot} in the left-most
boundary of a white segment. Here we will construct all possible configurations with a site in the left-most
boundary of a white segment, such that occupying this site reduces the length of an $S_2$ chain from $3k$ to
$3k-1$. We shall call such sites 'critical reducer sites'. We start with the white segment and obtain
the configurations depicted in figure \ref{fig:unblocked1}. For these configurations occupying the left-most
site of the middle row reduces the length of at least one of the adjacent $S_2$ chains from $3k$ to
$3k-1$. For the two configurations on the left, occupying this site reduces the
length of both $S_2$ chains to $3k-1$. It follows that the new configuration must belong to $\im Q_1$ (see lemma
\ref{ttt1}), otherwise it was a site with a dot in the previous step. So we
do not have to consider these two configurations. This same reasoning tells us that the grey region to the left of
the middle row should have width 1 modulo 3, otherwise the new configuration would belong to $\im Q_1$. This
leads to the 12 possibilities in figure \ref{fig:notinkerQ}.

\begin{figure}[h!]
\begin{center}
\includegraphics[width= .5\textwidth]{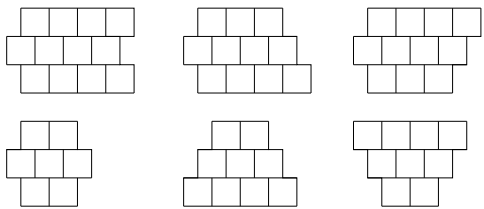}
\caption{The possible boundaries of a white segment are shown, such that the left-most boundary contains a critical 
reducer site. That is, occupying this site reduces the length of at least one of the adjacent $S_2$ chains from $3k$ to $3k-1$.}\label{fig:unblocked1}
\end{center}
\end{figure}

\begin{figure}[h!]
\begin{center}
\includegraphics[width= .5\textwidth]{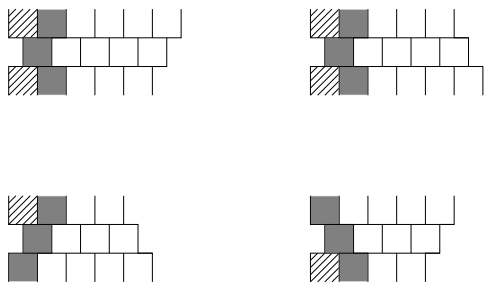}
\caption{The 12 possible configurations such that the left-most site of the middle row is a critical reducer and
occupying this site does not lead to a configuration that is in $\im Q_1$.}
\label{fig:notinkerQ}
\end{center}
\end{figure}

Note that, indeed, occupying the critical reducer site at the boundary, leads to reducing the length of one of the $S_2$
chains from $3k$ to $3k-1$, for all these configurations. For efficiency we continue the 'tic-tac-toe' procedure either 
upwards or downwards, such that at every step in the procedure there is just one $S_2$ chain that we can use to continue 
the procedure. The direction we should follow, is indicated by the arrow in figure \ref{fig:tttdirection}. Note that
we dropped the two configurations for which the grey segment had width 2, because of lemma \ref{ttt1}.

\begin{figure}[h!]
\begin{center}
\includegraphics[width= .5\textwidth]{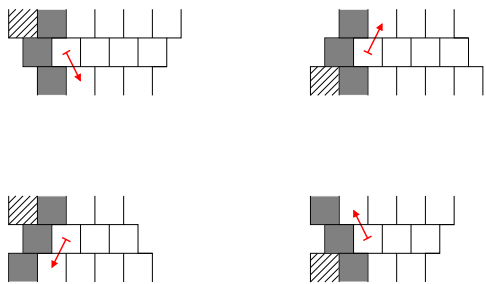}
\caption{If the critical reducer site (the left-most site of the middle row) is occupied in the first step of the
'tic-tac-toe' procedure, the procedure should be continued in one direction only, as explained in the text. This
direction is indicated by the arrow.}\label{fig:tttdirection}
\end{center}
\end{figure}

It is now clear that if we stack a configuration for which the
'tic-tac-toe' procedure goes downwards on top of a configuration
for which it goes upwards, the 'tic-tac-toe' procedure ends. In
particular, it maps the old configuration to a new configuration
that is also in $H_{12}$. We can increase the number of steps
necessary in the 'tic-tac-toe' procedure by stacking rows for
which the grey segment has width 1 modulo 3 and the width of the
white segment alternates between 3 and 4 modulo 3 (see fig.
\ref{fig:tttlonger}). Examples of the stacked configurations and
the configurations they map to are shown in figure
\ref{fig:tttstack}. Here the sites with connected dots can be
either all empty or all occupied. The configuration with all the
sites empty maps to the configuration with all the sites occupied
under the 'tic-tac-toe' procedure. However, if the configurations
on the left in figure \ref{fig:tttdirection} are not combined with
one of the configurations on the right in figure
\ref{fig:tttdirection}, the 'tic-tac-toe' procedure will end with
a state that is in the kernel of $Q$ (as long as we only let the
sites on the left-most boundary participate).

\begin{figure}[h!]
\begin{center}
\includegraphics[width= .4\textwidth]{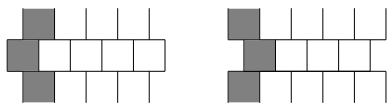}
\caption{Stacking these configurations with the configurations of figure \ref{fig:tttdirection}, increases the
number of steps in the 'tic-tac-toe' procedure.}\label{fig:tttlonger}
\end{center}
\end{figure}

\begin{figure}[h!]
\begin{center}
\includegraphics[width= .7\textwidth]{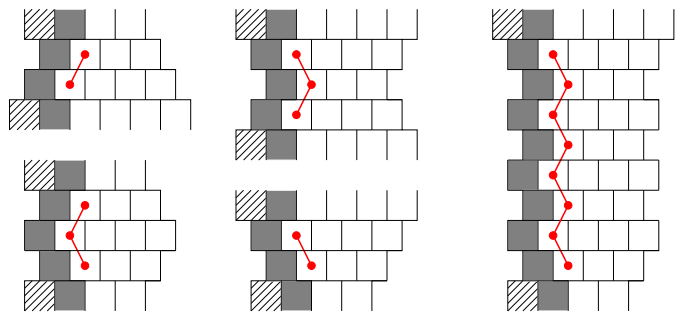}
\caption{Some examples are shown of configurations that do belong to $H_{12}$, but not belong to $H_{Q}$. Here
the sites with connected dots can be either all empty or all occupied.}\label{fig:tttstack}
\end{center}
\end{figure}

At this point we have identified a certain set of configurations that does belong to $H_{12}$, but
does not belong to $H_{Q}$. However, we have to make a final step before we can identify all configurations
in $H_{Q}$ with tiling configurations. This is due to the fact that certain parts of configurations seem to
belong to $H_{Q}$, but they do not respect the boundary conditions. Note that, at this point, we have reduced
all possible motifs to the following set:\\
"$100$"\\
"$1100$"\\
"$10000$"\\
"$110000$"\\
which can be separated by single insertions of the motifs:\\
"$1000$"\\
"$11000$"\\
all modulo insertions of three dots and three zeroes along an entire column. Each
of the four basis motifs, comes with two directions, determined by whether the boundaries between the grey
and white segments follows the direction $(-1,-2)$ or $(-2,-1)$.

\begin{definition}
We assign a letter to each of the four basis motifs:\\
$A_i \equiv$"$100$"\\
$B_i \equiv$"$1100$"\\
$C_i \equiv$"$10000$"\\
$D_i \equiv$"$110000$"\\
where the subscript $i$ is 1 or 2 when the direction of the motif is $(-1,-2)$ or $(-2,-1)$ respectively.
\end{definition}

Note that the direction of a motif is not defined if neither the motif directly above it nor the motif
directly below it is the same. We will start, however, by considering cases in which this does not happen. At
the end of step 4 we will encounter a case where this point needs some attention.

We now want to study whether a vertical sequence of a certain motif can be followed by a sequence of a
different motif, eventually, with a insertion of one of the motifs with 3 zeroes: "$1000$" or "$11000$".

\begin{example}
As an example let us start with a sequence of motif $A_1$. This
sequence could be followed by motifs $A_2$, $B_1$ and $C_2$.
However, it cannot be followed by motif $B_2$, because it would
not belong to $H_{12}$. Nor can it be followed by motifs
$C_1$ or $D_i$, because it would not belong to $H_Q$ (see fig.
\ref{fig:A1toall}).
\end{example}

\begin{figure}[h!]
\begin{center}
\includegraphics[width= .6\textwidth]{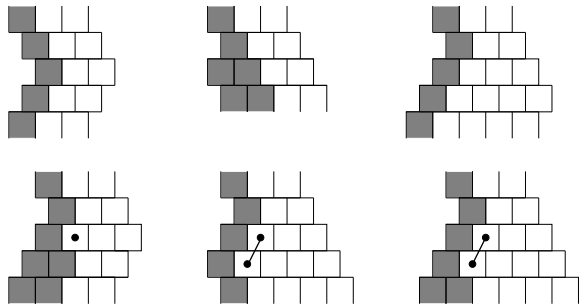}
\caption{At the top, we show, from left to right, motif $A_1$ followed by the motifs $A_2$, $B_1$ and $C_2$.
On the bottom-left, we see that a configuration in which $A_1$ is followed by $B_2$ contains a site with a
dot in the left-most boundary of the white segment. The other two configurations on the bottom, show that
configurations in which $A_1$ followed by $C_1$ or $D_i$ do not belong to $H_Q$. Here we used the notation of
figure \ref{fig:tttstack}.}\label{fig:A1toall}
\end{center}
\end{figure}

Similarly we find the following:
\begin{itemize}
\item{motif $B_1$ can only be followed by motif $C_2$.}
\item{motif $C_2$ can only be followed by motif $B_1$.}
\item{motif $A_i$ can be followed by motifs $A_j$, $B_1$ and
$C_2$.}
\item{motif $D_i$ can be followed by motifs $D_j$, $B_1$
and $C_2$.}
\item{motif $B_2$ can only follow after motif $C_1$.}
\item{motif $C_1$ can only follow after motif $B_2$.}
\end{itemize}

Finally, we know from lemma's \ref{lm:notinHQ2-b} and \ref{lm:noS1occ} that grey and white columns cannot
branch or have end points, since their width always oscillates between 1 and 2 modulo 3 or 2, 3 and 4 modulo 3
for the grey and white segments respectively. Consequently, grey and white columns may wind around the torus
several times, but they will always close to form a loop.

Let us combine this observation with the rules we found for stacking motifs. Consider, for example, motif
$A_1$, which can be followed by motifs $A_2$, $B_1$ and $C_2$. However, motifs $B_1$ and $C_2$ can only be
followed by $C_2$ and $B_1$ respectively. Consequently, if motif $A_1$ is followed by either of these two
motifs, we can never fulfill the boundary conditions, because the column cannot be closed to form a loop.
Thus configurations in which motif $A_1$ is followed by motifs $B_1$ and $C_2$ do not belong to $H_Q$. In
this same spirit we obtain the following lemma.

\begin{lemma}
For configurations that belong to $H_Q$ the following holds:
\begin{itemize}
\item{motif $B_1$ can only be followed by motif $C_2$ and vice versa.}
\item{motif $A_1$ can only be followed by motif $A_2$ and vice versa.}
\item{motif $D_1$ can only be followed by motif $D_2$ and vice versa.}
\item{motif $B_2$ can only be followed by motif $C_1$ and vice versa.}
\end{itemize}
\end{lemma}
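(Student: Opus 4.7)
The plan is to combine the local ``can-follow'' list stated immediately before the lemma with the torus closure constraint: since grey and white columns can neither branch nor terminate (by lemmas \ref{lm:notinHQ2-b} and \ref{lm:noS1occ}), the vertical sequence of motifs in each column must close into a loop on the torus. Equivalently, starting from any motif and repeatedly applying the ``is followed by'' relation must eventually return to the starting motif.

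Bullets 1 and 4 of the lemma are essentially restatements of the preliminary list: ``$B_1$ only followed by $C_2$'' together with ``$C_2$ only followed by $B_1$'' gives bullet 1, and ``$C_1$ only followed by $B_2$'' together with ``$B_2$ only followed by $C_1$'' (read off from the ``can only follow after'' clauses) gives bullet 4. So the real content is bullets 2 and 3.

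For bullets 2 and 3 the first step is to eliminate the cross transitions $A_i \to B_1$, $A_i \to C_2$, $D_i \to B_1$, $D_i \to C_2$ by loop closure. If $A_i$ were followed by $B_1$, then by bullet 1 the sequence would be forced into the infinite cycle $B_1, C_2, B_1, C_2, \ldots$, never returning to an $A$-motif; this contradicts closure of the column on the torus. The same argument with the roles of $\{B_1,C_2\}$ and $\{B_2,C_1\}$ swapped rules out $A_i \to C_2$ (the preliminary list already excluded $A_i \to B_2, C_1$), and identically for $D_i$. Consequently every column of motifs in $H_Q$ is either purely of type $A$ or purely of type $D$ (or is a $(B_1,C_2)$ or $(B_2,C_1)$ alternation).

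The remaining step, and the main obstacle, is to show that within a pure-$A$ (resp.\ pure-$D$) column the direction indices must alternate, i.e.\ $A_1 \to A_2 \to A_1 \to \cdots$ and never $A_1 \to A_1$. This is a purely geometric check based on the meaning of the subscript: $i=1$ records a grey/white boundary step of slope $(-1,-2)$ and $i=2$ records slope $(-2,-1)$. Since two vertically stacked motifs share the single boundary edge between their rows, the outgoing boundary step of the upper motif must equal the incoming boundary step of the lower motif; under the labeling convention of the definition this matching flips the subscript. The verification is a short case analysis on the local picture, analogous to the case analysis carried out in lemma \ref{possdot}, and once this alternation is in place the lemma follows. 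The $D$-case is identical after replacing the basis motif, since the local boundary geometry is the same.
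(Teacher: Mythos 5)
Your argument is correct and is essentially the paper's own proof: the paper likewise combines the ``can-follow'' list established just before the lemma with the observation (from lemmas \ref{lm:notinHQ2-b} and \ref{lm:noS1occ}) that grey and white columns must close into loops on the torus, so that any transition from an $A$- or $D$-motif into the $B_1\leftrightarrow C_2$ cycle can never return and hence violates closure, while bullets 1 and 4 are read off directly from the list. The one step you single out as ``the main obstacle'' --- that a pure-$A$ or pure-$D$ column must alternate subscripts --- is in the paper already absorbed into the preliminary list (the example and figure \ref{fig:A1toall} give $A_2$, $B_1$, $C_2$ as the only admissible successors of $A_1$), and your reading of the alternation as forced by the relational definition of the direction index is consistent with the paper's remark that the direction is only defined through the identical motif directly above or below.
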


For motifs $A_i$ and $D_i$ the width of the white segment does not
change, thus the motif with direction 1 can follow directly below
or above this same motif with direction 2. For the motifs $B_i$
and $C_i$, however, there is an intermediate motif of the type
"$1000$" or "$11000$". Which of the two can be determined via the
'tic-tac-toe' procedure. If we read the motifs of the rows from
top to bottom, we find that a sequence of $B_1$ motifs will be
followed by "$1000$", to be followed by a sequence of $C_2$
motifs. Then the $C_2$ motifs will be followed by "$11000$", which
is then to be followed by another sequence of $B_1$ motifs. On the
other hand, a sequence of $B_2$ motifs will be followed by
"$11000$", followed directly by a sequence of $C_1$ motifs.
Finally, the $C_1$ motifs will be followed by "$1000$", followed
directly by another sequence of $B_2$ motifs (see fig.
\ref{fig:cornersBC}). It is readily checked that any other choice
gives a configuration that does belong to $H_{12}$, but not to
$H_Q$.

\vspace{0.5cm}

\begin{figure}[h!]
\begin{center}
\psfrag{B1}{$B_1$}
\psfrag{C1}{$C_1$}
\psfrag{B2}{$B_2$}
\psfrag{C2}{$C_2$}
\includegraphics[width= .6\textwidth]{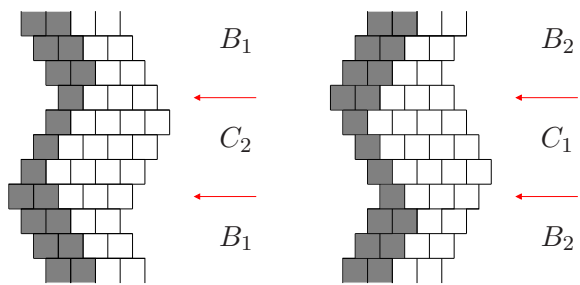}
\caption{Two configurations are shown in which motif $B_i$ is followed by motif $C_j$ (where $i\neq j$) and vice versa, with
the correct intermissions of the motifs "$1000$" and "$11000$" (indicated by the arrows).}\label{fig:cornersBC}
\end{center}
\end{figure}

\newpage
\vspace{0.5cm}
\noindent
{\bf Step 4}

\vspace{0.1cm}
\noindent
We are now ready to make the identification with the tiles. For the four basis motifs $A_1$ through $D_1$
the identification is shown in figure \ref{fig:basis4dir1} and $A_2$ through $D_2$ are identified with a
tiling in figure \ref{fig:basis4dir2}. Note that to distinguish motif $X_1$
from $X_2$, where $X=A,B,C$ or $D$, one has to consider also the motif on the row above or below this
motif. These motifs can be followed by an arbitrary threefold of zeroes. Let us define the motif $E
\equiv$"$000$". For this motif we can also distinguish a direction, because its boundary will follow the
left-most boundary of the white segment it is attached to. From figures \ref{fig:basis4dir1} and
\ref{fig:basis4dir2} it is clear that the motifs $X_i$ can be followed by motif $E_i$, where the $i$
should be the same. For the motifs $B_1$ and $C_2$ there is an exception: when the motif above these
motifs is "$11000$" and "$1000$" respectively, they are followed by $E_2$ and $E_1$ respectively.

\begin{figure}[h!]
\begin{center}
\includegraphics[width= .9\textwidth]{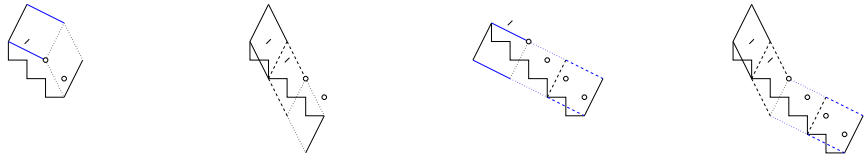}
\caption{The motifs $X_1$ and the corresponding tilings. Note that
this mapping was already found in section \ref{sec:zz} (see fig.
\ref{fig:tilingmapzigzag}).}\label{fig:basis4dir1}
\end{center}
\end{figure}

\begin{figure}[h!]
\begin{center}
\includegraphics[width= .9\textwidth]{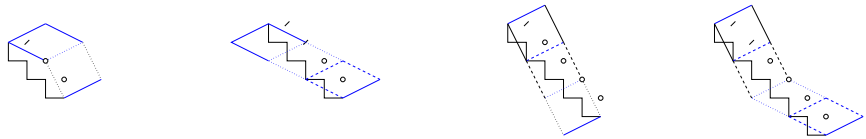}
\caption{The motifs $X_2$ and the corresponding tilings.}\label{fig:basis4dir2}
\end{center}
\end{figure}

\begin{figure}[h!]
\begin{center}
\includegraphics[width= .9\textwidth]{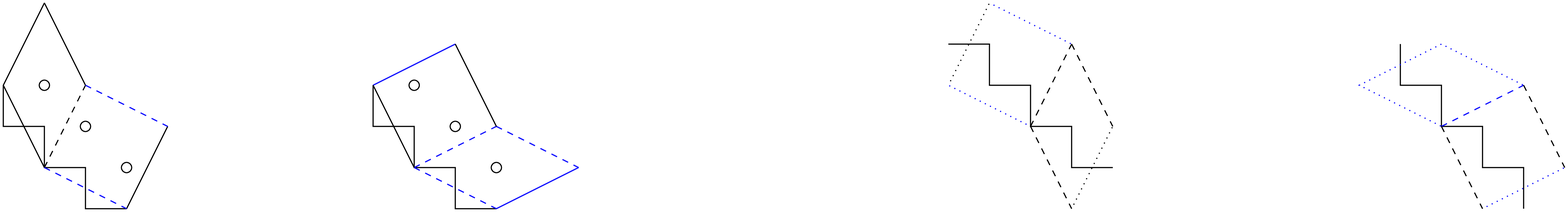}
\caption{On the left the motifs $E_1$ (on the left) and $E_2$ (on the right) and the corresponding tilings.
On the right the tilings corresponding to insertions of 3 dots into the motifs $X_1$ (on the left) and
$X_2$ (on the right).}\label{fig:zeroesdots}
\end{center}
\end{figure}

There can also be insertions of multiples of three dots in the four basis motifs. How this translates into
tilings is shown in figure \ref{fig:basis4dots}. More precisely, note that each basis motif $X_1$ contains
a dotted line, connecting two $S_1$ sites along the direction 1 (black) and equivalently, all basis motifs $X_2$ contain
a dotted line with direction 2 (blue). An insertion of three dots in a basis motif corresponds to an insertion of
two tiles, shown in figure \ref{fig:zeroesdots}, at this dotted line. Note that here we cannot easily write the
motif of dots directly in the tiles, however, the mapping is still unambiguous.

\begin{figure}[h!]
\begin{center}
\includegraphics[width= .9\textwidth]{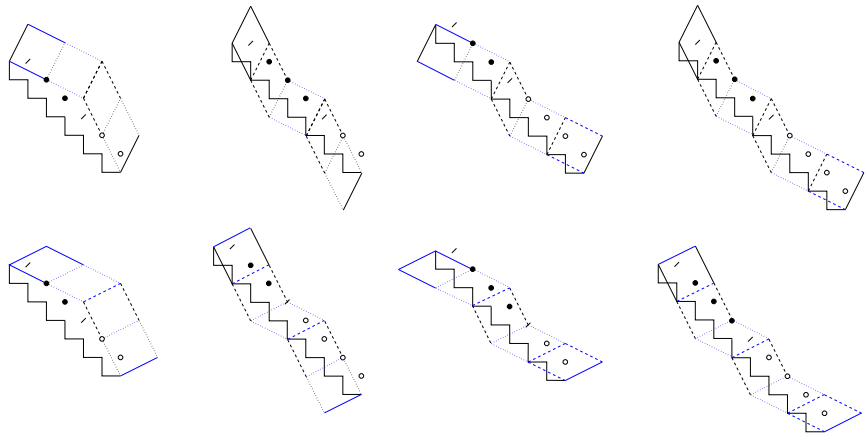}
\caption{On the top (bottom), insertions of 3 dots into the motifs $X_1$ ($X_2$) and the corresponding
tilings are shown.}\label{fig:basis4dots}
\end{center}
\end{figure}

Finally, we have the motifs "$1000$" and "$11000$". Which tilings these motifs correspond to depends on whether they
occur between the motifs $B_1$ and $C_2$ or the motifs $B_2$ and $C_1$. In fact, in the first case,
"$1000$" will correspond to the same sequence of tiles as $B_1$ and "$11000$" to the same tiling as $C_2$. Similarly, in
the latter case, "$1000$" and $B_2$, on the one hand, and "$11000$" and $C_1$, on the other hand, correspond
to the same tilings. For an example see figure \ref{fig:mapcornersBC}.

\begin{figure}[h!]
\begin{center}
\includegraphics[width= .7\textwidth]{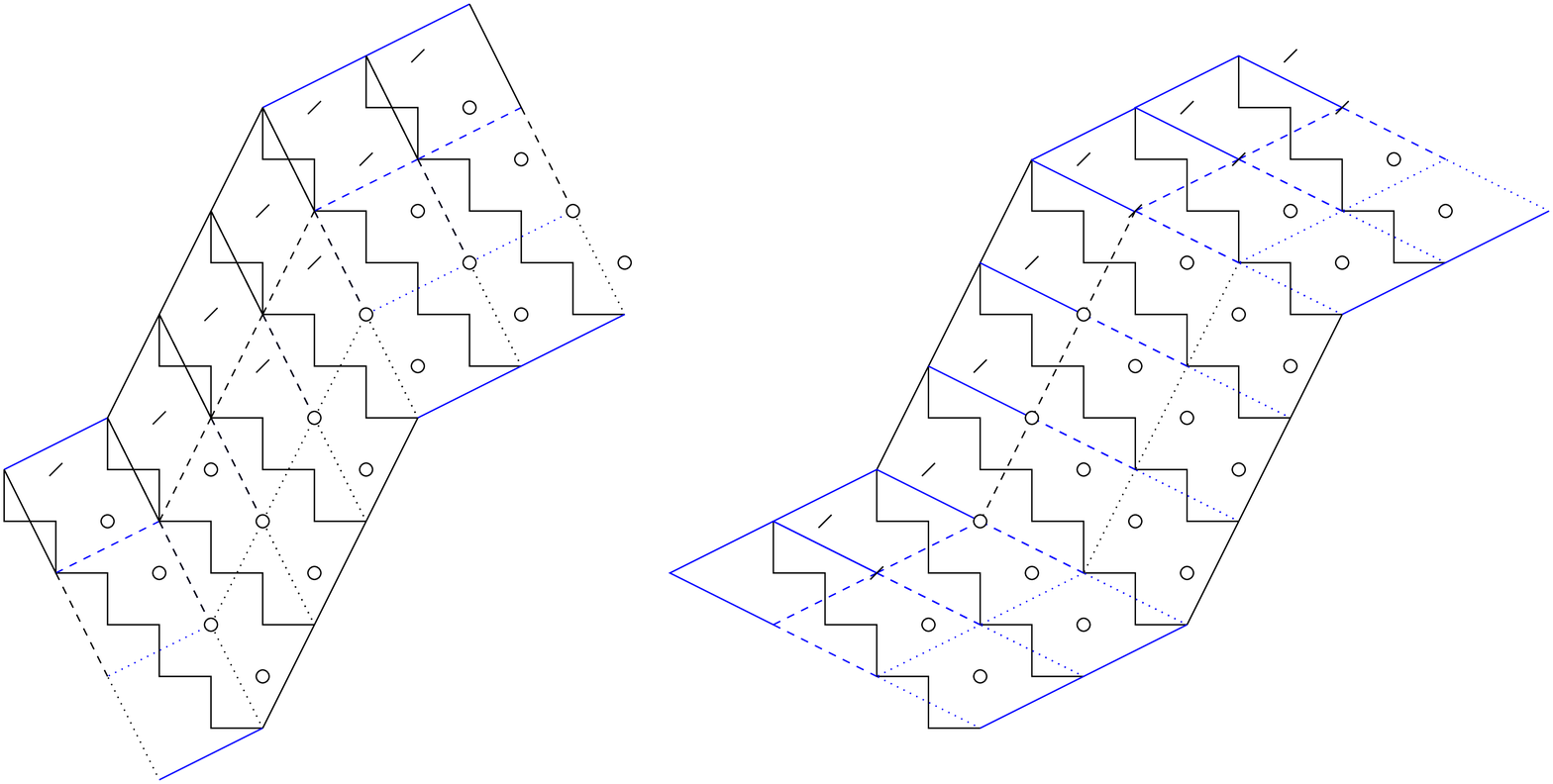}
\caption{On the left (right), we show a configuration in which motifs $C_2$ and $B_1$ ($C_1$ and $B_2$)
alternate with the corresponding tilings. Note that the motifs "$1000$" and "$11000$" correspond to different
tilings on the left than on the right.}\label{fig:mapcornersBC}
\end{center}
\end{figure}

With these identifications there is one ambiguity, but it is easily dealt with. If there is a column in which
the motifs alternate indefinitely between "$1100$" and "$11000$", we cannot determine whether the motif
"$1100$" is of type $B_1$ or $B_2$. The same happens when the motifs "$10000$" and "$1000$" alternate
indefinitely: the motif "$10000$" could be of type $C_1$ or $C_2$. Note that if we choose to identify the
first with $B_1$ and the second with $C_2$, the corresponding tilings would be indistinguishable. This also
happens when we choose $B_2$ and $C_1$. So we conclude that we should either choose $B_1$ and $C_1$ or $B_2$
and $C_2$. The ambiguity is thus lifted by simply deciding that we will always choose, say, $B_1$ and $C_1$.

Finally, we note that again the number of fermions in a certain configuration is the same as the number of
tiles in the corresponding tiling. For the four basis motifs and the motifs with 3 zeroes or 3 dots, this
follows from the arguments in section \ref{sec:zz}. For the motifs "$1000$" and "$11000$", we should look at
figure \ref{fig:mapcornersBC}. If the motif "$1000$" sits between motifs $B_1$ and $C_2$, the number of
fermions in these three rows is $3*3$ and the number of sites is 3 times the number of $S_1$ sites:
$3*(2*4+5)$. So 9 fermions on 39 sites. Compare this
with the corresponding tiling: it contains 2 times 3 tiles of area 4 and once 3 tiles of area 5. So 9 tiles
with total area 39. Similarly, if the motif "$11000$" sits between motifs $B_1$
and $C_2$, the number of fermions in these three rows is $3*3$ and the number of sites is $3*(2*5+4)$. The
corresponding tiling contains 2 times 3 tiles of area 5 and once 3 tiles of area 4. For the corners between motifs $B_2$
and $C_1$ the comparison is slightly more subtle. Following the same arguments as above, we find that in
this case the number of fermions in the motifs "$1000$"
and "$11000$" do not agree with the number tiles in the corresponding tiling. However, the discrepancy is
minus one in one case and plus on in the other, and since the boundary conditions dictate that the number of
"$1000$"-motifs equals the number of "$11000$"-motifs, the discrepancies exactly cancel.

\vspace{0.5cm}
\noindent
{\bf Step 5}

\vspace{0.1cm}
\noindent
The final step concerns the small correction $\Delta$ in equation
(\ref{eq:HQ}). With the four basis motifs, horizontal insertions
of multiples of three dots and three zeroes and vertical
insertions of the motifs "$1000$" and "$11000$", we can represent
all elements in $H_Q$. With the mappings given in the previous
step, we find a corresponding tiling for each of these elements.
On the other hand, each possible tiling can be constructed with
the small sequences of tiles given in the previous step. Thus we
find that for each possible tiling there is a corresponding
element in $H_Q$. Furthermore, we found that the number of
fermions and the number of tiles agree. So we find $N_i=t_i$, that
is, the number of elements in $H_Q$ with $i$ fermions equals the
number of tilings of the square lattice with $i$ tiles. However,
there is a small discrepancy in this one-to-one relation for the
configurations with all zeroes or all dots. For $\vec{u}=(m,-m)$
and $v_1+v_2=3p$, it is readily verified that these configurations
contain $i=[2m/3]p$ fermions. In the following we will first
compute the number of elements of $H_Q$ that these configurations
account for. We shall call this $N^{(a)}$, where $a$ stands for
anomalous. We will then compute $t^{(a)}$, the number of tilings
consisting only of the tiles that correspond to either all zeroes
or all dots (see fig. \ref{fig:zeroesdots}). Combining these
results we obtain $\Delta \equiv N^{(a)}-t^{(a)}$. Finally, since
we found a one-to-one correspondence between tilings and elements
of $H_Q$ for $i \neq [2m/3]p$, theorem \ref{tm:HQ} will then be
established with $\Delta_i$ as in equation (\ref{eq:Delta}).

As we discussed in section \ref{sec:zz} for $\vec{v}=(1,2)$, the configurations with all dots and all zeroes
actually correspond to multiple elements of the cohomology
if there is a multiple of 3 $S_1$ sites per row, that is if $\vec{u}=(3n,-3n)$. This is a direct consequence
of theorem \ref{chaincohom}, which says that a periodic chain with length $3j$ has two ground states. In
fact, these configurations account for $2^{p}$ elements each, where $p=(v_1+v_2)/3$ is the total number of
$S_1$ rows or, equivalently, of $S_2$ chains. On the other hand, for $\vec{u}=(m,-m)$ with $m \neq 3n$ they
each represent one element of the cohomology. So we find $N^{(a)}=2^{p+1}$ for $m=3n$ and $N^{(a)}=2$
otherwise.

Now, let us look at the corresponding tilings. For $\vec{u}=(m,-m)$ with $m \neq 3n$ there is no
corresponding tiling, thus there is a discrepancy of 2 between the number of tilings and the number of
elements in the cohomology. That is $\Delta \equiv N^{(a)}-t^{(a)}=2$ for $m \neq 3n$.
For $\vec{u}=(3n,-3n)$ there are tilings corresponding to the configurations with
all zeroes or all dots. Along the $\vec{u}$ direction these tilings have periodicity 3. The periodicity in
the other direction is more involved. Given the boundary condition $\vec{v}=(2r+s,r+2s)$ the tiling makes $r$
steps in the $(2,1)$ direction and $s$ steps in the $(1,2)$
direction, in arbitrary order. However, because of the periodicity of 3 in the $\vec{u}$ direction one can
also end at $(2r+s+3l,r+2s-3l)$, that is, $r+3l$ steps in the $(2,1)$ direction and $s-3l$ steps in the $(1,2)$
direction, again in arbitrary order. Thus we find
\beq
t^{(a)}=2* 3 \sum_{l\geq -r/3} \binom{r+s}{r+3l}.\nonumber
\eeq
If we define $r=3k+c$, where $c\in{0,1,2}$, we can write $t$ as:
\beq
t^{(a)}&=&6 \sum_{l=0} \binom{r+s}{c+3l}\nonumber\\
&=&6 \sum_{l=0} \Big[ \binom{r+s-2}{c+3l-2}+2\binom{r+s-2}{c+3l-1}+\binom{r+s-2}{c+3l} \Big]\nonumber\\
&=&6 \sum_{l=0} \Big[ \binom{r+s-2}{l}+\binom{r+s-2}{c+3l-1} \Big]\nonumber\\
&=&6* 2^{r+s-2}+6 \sum_{l=0} \Big[ \binom{r+s-2}{c+3l-1} \Big].\nonumber
\eeq

Repeating these steps $d$ times, such that $2d\leq r+s$, we find
\beq
t^{(a)}&=&6\sum_{l=1}^{d} 2^{r+s-2l}+6 \sum_{l=0} \binom{r+s-2d}{c+3l-d}\nonumber\\
&=&\sum_{l=0}^{2d-1} 2^{r+s-l}+6 \sum_{l=0} \binom{r+s-2d}{c+3l-d}\nonumber \\
&=&\left\{ \begin{array}{ll}
2^{r+s+1}-2 +6 \sum_{l=0} \binom{0}{c+3l-d} & \textrm{if } r+s=2d\nonumber\\
2^{r+s+1}-4 +6 \sum_{l=0} \binom{1}{c+3l-d} & \textrm{if }r+s=2d+1. \nonumber
\end{array} \right.
\eeq
For the last term we find
\beq
6 \sum_{l=0}  \binom{0}{c+3l-d} &=&\left\{ \begin{array}{ll}
6 & \textrm{if } d=3b+c\\
0 & \textrm{otherwise.}
\end{array} \right. \nonumber\\
6 \sum_{l=0} \binom{1}{c+3l-d} &=&\left\{ \begin{array}{ll}
0 & \textrm{if } d=3b+c+1\\
6 & \textrm{otherwise.}
\end{array} \right. \nonumber
\eeq

We now compare the expression for $t^{(a)}$ with the expression for the number of elements in the cohomology represented
by the configurations with all zeroes and all dots, $N^{(a)}$. For $\vec{v}=(2r+s,r+2s)$ this is
$N^{(a)}=2*2^{(v_1+v_2)/3}=2^{r+s+1}$. So we finally find
\beq
\Delta=N^{(a)}-t^{(a)}&=&\left\{ \begin{array}{llll}
-4 & \textrm{if } r+s=2d \textrm{ and } r-s=6b\\
2  & \textrm{if } r+s=2d \textrm{ and } r-s=6b\pm 2\\
4 & \textrm{if } r+s=2d+1 \textrm{ and } r-s=6b+3\\
-2  & \textrm{if } r+s=2d+1 \textrm{ and } r-s=6b\pm1. \nonumber
\end{array} \right.
\eeq
Combining this with the result $\Delta=2$ for $\vec{u}=(m,-m)$ with $m \neq 3n$, this can be cast in the
compact form of equation (\ref{eq:Delta}).

\vspace{0.5cm}
\noindent
{\bf \large{Acknowledgments}}

\vspace{0.2cm}
\noindent
We would like to thank P. Fendley for discussions and J. Jonsson for
suggested improvements of the manuscript.

\bibliographystyle{my-h-elsevier}

\end{document}